\journal{arXiv}
\newcolumntype{C}{>{$}c<{$}}  
\newcolumntype{R}{>{$}r<{$}}  
\newcolumntype{L}{>{$}l<{$}}  
\newtheorem{theorem}{Theorem}[section]
\newtheorem{lemma}[theorem]{Lemma}
\newtheorem{corollary}[theorem]{Corollary}
\newdefinition{definition}[theorem]{Definition}
\newproof{proof}{Proof}
\newcommand\Span{\ensuremath{\mathsf{Span}}}
\newcommand\interp[1]{\llparenthesis #1\rrparenthesis}
\newcommand\size[1]{||#1||}
\newcommand\Red[1]{\mathsf{Red}(#1)}
\newcommand\SN{\mathsf{SN}}
\newcommand\lpl[1]{|#1|}
\newcommand\titulo[3][\scriptsize]{\rotatebox[origin=c]{90}{\parbox[t]{#2}{\centering #1{#3}}}}
\newcommand\recap[3]{\noindent {\bf #1 \ref{#2}.} \emph{#3}}
\newcommand\xrecap[4]{\noindent {\bf #1 \ref{#3} (#2).} \emph{#4}}
\newcommand\braket[2]{\langle{#1}|{#2}\rangle}
\newcommand\lra[1][1]{\longrightarrow_{\left(#1\right)}}
\newcommand\lrap{\lra[p]}
\newcommand\B{\ensuremath{\mathbb B}}
\newcommand\Bs{\ensuremath{\mathfrak B}}
\newcommand\gB{\ensuremath{\Psi}}
\newcommand\ite[3]{{#1}?{#2}\mathord{\cdot}{#3}}
\newcommand\pair[2]{({#1}+{#2})}
\newcommand\npair[2]{({#1}-{#2})}
\newcommand\bqtypes{\ensuremath{\mathbf B}}
\newcommand\qtypes{\ensuremath{\mathbf Q}}
\newcommand\types{\ensuremath{\mathbf T}}
\newcommand\basis{\ensuremath{\mathcal B}}
\newcommand\values{\ensuremath{\mathcal V}}
\newcommand\s[1]{\ensuremath{\mathsf{#1}}}
\newcommand\head{\text{\sl head}}
\newcommand\tail{\text{\sl tail}}
\newcommand\red[2][1]{\overset{\scriptscriptstyle\smash{#2}\vphantom{x}}{\lra[#1]}\ }
\newcommand\z[1][A]{\vec 0_{S(#1)}}
\newcommand\nullvec[1]{\z[#1]}
\newcommand\may[1][\alpha]{[{#1}.]}
\newcommand\rbetab{(\s{\beta_b})}
\newcommand\rbetan{(\s{\beta_n})}
\newcommand\riftrue{(\s{if_{1}})}
\newcommand\riffalse{(\s{if_{0}})}
\newcommand\rlinr{(\s{lin^+_r})}
\newcommand\rlinscalr{(\s{lin^\alpha_r})}
\newcommand\rlinzr{(\s{lin^0_r})}
\newcommand\rlinl{(\s{lin^+_l})}
\newcommand\rlinscall{(\s{lin^\alpha_l})}
\newcommand\rlinzl{(\s{lin^0_l})}
\newcommand\rneut{(\s{neutral})}
\newcommand\runit{(\s{unit})}
\newcommand\rzeros{(\s{zero_\alpha})}
\newcommand\rzero{(\s{zero})}
\newcommand\rzeroS{(\s{zero_S})}
\newcommand\rprod{(\s{prod})}
\newcommand\rdists{(\s{\alpha dist})}
\newcommand\rdistcasum{(\s{dist^+_\Uparrow})}
\newcommand\rdistcascal{(\s{dist^\alpha_\Uparrow})}
\newcommand\rcaneutl{(\s{neut^\Uparrow_\ell})}
\newcommand\rcaneutr{(\s{neut^\Uparrow_r})}
\newcommand\rfact{(\s{fact})}
\newcommand\rfacto{(\s{fact^1})}
\newcommand\rfactt{(\s{fact^2})}
\newcommand\rproj{(\s{proj})}
\newcommand\rhead{(\s{head})}
\newcommand\rtail{(\s{tail})}
\newcommand\rdistzr{(\s{dist^0_r})}
\newcommand\rdistzl{(\s{dist^0_l})}
\newcommand\rdistscalr{(\s{dist^\alpha_r})}
\newcommand\rdistscall{(\s{dist^\alpha_l})}
\newcommand\rdistsumr{(\s{dist^+_r})}
\newcommand\rdistsuml{(\s{dist^+_l})}
\newcommand\rcomm{(\s{comm})}
\newcommand\rassoc{(\s{assoc})}
\newcommand\tax{\textsl{Ax}}
\newcommand\tif{\textsl{If}}
\newcommand\gen{\mathcal S} 
\newcommand\den[1]{\llbracket #1 \rrbracket}
\newcommand\vect[2]{\left(\begin{smallmatrix} #1 \\ #2\end{smallmatrix}\right)}
\newcommand{\app}[2]{#1 #2}
\newcommand{\reducesto}{\rightarrow}
\newcommand{\proj}[2]{\pi_{#1} #2}
\newcommand{\vrbl}[2]{#1^{#2}}
\newcommand{\abstr}[2]{\lambda#1\ldotp#2}
\begin{document}

\begin{frontmatter}

  \title{Two linearities for quantum computing\\ in the lambda
    calculus\tnoteref{previous,projects}} \tnotetext[previous]{This paper
    extends a paper presented at TPNC'17 by the first two authors
    \cite{DiazcaroDowekTPNC17} and the third author \emph{Licenciatura}'s thesis
    \cite{RinaldiUNR18}.} \tnotetext[projects]{Partially supported by the ECOS
    Project A17C03 QuCa, PICT-PRH 2015-1208, and the Laboratoire International
    Associ\'e SINFIN.}

  \author[CONICET,UNQ]{Alejandro D\'iaz-Caro}
  \ead{adiazcaro@icc.fcen.uba.ar}

  \author[ENS]{Gilles Dowek} \ead{gilles.dowek@ens-paris-saclay.fr}

  \author[UNR]{Juan Pablo Rinaldi} \ead{juampi.rinaldi@gmail.com}

  \address[CONICET]{CONICET-UBA, ICC,
    Pabell\'on 1, Ciudad Universitaria, Buenos Aires, Argentina.}
  \address[UNQ]{Universidad Nacional de Quilmes, R.~S\'aenz Pe\~na
    352, Bernal, BA, Argentina}
  \address[ENS]{Inria, LSV, ENS Paris-Saclay, 61, avenue du Pr\'esident Wilson,
    Cachan, France}
  \address[UNR]{Universidad Nacional de Rosario, Pellegrini 250, Rosario, SF, Argentina}

\begin{abstract}
  We propose a way to unify two approaches of non-cloning in quantum
  lambda-calculi: logical and algebraic linearities. The first approach is to forbid duplicating variables, while
  the second is to consider all lambda-terms as algebraic-linear functions. We
  illustrate this idea by defining a quantum extension of first-order
  simply-typed lambda-calculus, where the type is linear on superposition, while
  allows cloning base vectors. In addition, we provide an interpretation of the
  calculus where superposed types are interpreted as vector spaces and
  non-superposed types as their basis.
\end{abstract}

\begin{keyword}
  quantum computing, lambda-calculus, algebraic linearity, linear logic,
  measurement
\end{keyword}

\end{frontmatter}
\section{Introduction}

Extending $\lambda$-calculus into a programming language for quantum computing
requires to add, besides the usual abstraction and application symbols, a sum
and a product to build linear combinations of terms, and a tensor-product like
symbol to include datatypes composed of several qubits. Yet, mixing all the
constructs in a naive way leads to a too powerful calculus where non linear,
that is non physical functions, can be defined. For instance, in $\lambda$-calculus, applying the term $\lambda x.(x \otimes x)$, that
expresses a non-linear function for some convenient definition of $\otimes$, to
a term $u$ yields the term $(\lambda x.(x \otimes x))u$, that reduces to $u
\otimes u$. But ``cloning'' this vector $u$ is forbidden in quantum computing.
Various quantum $\lambda$-calculi address this problem in different ways.

One way is to forbid the construction of the term $\lambda x.(x \otimes x)$
using a typing system inspired from linear logic
\cite{GirardTCS87,AbramskyTCS93}, leading to logic-linear calculi
\cite{AltenkirchGrattageLICS05,SelingerValironSTQC09,GreenLeFanulumsdaineRossSelingerValironPLDI13,PaganiSelingerValironPOPL14,ZorziMSCS16}.
Another is to define the operational semantics in such a way that every $\lambda$-term represents a linear function. The term
$\lambda x.(x \otimes x)$, for instance, expresses the linear function that maps
$\ket{0}$ to $\ket{0} \otimes \ket{0}$ and $\ket{1}$ to $\ket{1} \otimes
\ket{1}$\footnote{Where $\ket x$ is the Dirac notation for vectors, with $\ket
  0=\bigl( \begin{smallmatrix}1\\0\end{smallmatrix}\bigr)\in\mathbb C^2$ and
  $\ket 1=\bigl( \begin{smallmatrix}0\\1\end{smallmatrix}\bigr)\in\mathbb C^2$,
  so $\{\ket 0,\ket 1\}$ is an orthonormal basis of $\mathbb C^2$, called here
  the ``computational basis''.}. This leads to restrict beta-reduction to the
case where $u$ is a base vector (in the computational basis) and to add the
linearity rule $f\pair uv \longrightarrow\pair{fu}{fv}$, leading to
algebraic-linear calculi \cite{ArrighiDowekRTA08, ArrighiDiazcaroLMCS12,
  DiazcaroPetitWoLLIC12,
  ArrighiDiazcaroValironIC17,AssafDiazcaroPerdrixTassonValironLMCS14}.

Each solution has its advantages and drawbacks. For example, let $\ite tuv$ be
the conditional statement on $\ket 0$ and $\ket 1$. Interpreting $\lambda$-terms
as algebraic-linear functions permits to reduce the term $(\lambda x.\ite x
{\ket 0} {\ket 1}) \pair{\alpha.\ket 0} {\beta.\ket 1}$ to
$\pair{\alpha.(\lambda x.\ite x {\ket 0} {\ket 1}) {\ket 0}} {\beta.(\lambda
  x~\ite x {\ket 0} {\ket 1}) {\ket 1}}$ then to $\pair {\alpha.\ket 1}
{\beta.\ket 0}$, instead of reducing it to the term $\ite {\pair{\alpha.\ket 0}
  {\beta.\ket 1}} {\ket 0} {\ket 1}$ that would be blocked. This explains that
this linearity rule, that is systematic in the algebraic-linear languages cited
above, is also present for the condition in \cite{AltenkirchGrattageLICS05} (the
so-called $\mathbf{if}^\circ$ operator).

However, interpreting all $\lambda$-terms as linear functions forbids to extend
the calculus with non-linear operators, such as measurement. For instance, the
term $(\lambda x.\pi x)\pair{\ket{0}}{\ket{1}}$, where $\pi$ represents a
measurement in the computational basis, would reduce to $\pair{(\lambda x.\pi x)
  \ket{0}}{(\lambda x.\pi x) \ket{1}}$, while it should reduce to $\ket{0}$ with
probability $\frac{1}{2}$ and to $\ket{1}$ with probability $\frac{1}{2}$.

In this paper, we propose a way to unify the two approaches, distinguishing
duplicable and non-duplicable data by their type, like in the logic-linear
calculi; and interpreting $\lambda$-terms as linear functions, like in the
algebraic-linear calculi, when they expect duplicable data. We illustrate this
idea with an example of such a calculus.

In this calculus, a qubit has type $\B$ when it is in the computational basis,
hence duplicable (a non-linear term in the sense of linear logic), and
$S(\B)$\footnote{$S$ for \emph{superposition} and also for the \emph{Span} of
  $\B$.} when it is a superposition, hence non-duplicable (a linear term in the
sense of linear logic). Hence, we can distinguish a basis term, from a term in
the span of such a basis. We could also state that the term $\ket{0}
\otimes\pair{\ket{0}}{\ket{1}}$ has type $\B\otimes S(\B)$. However, giving this
type to this term and the type $S(\B\otimes\B)$ to the term $\pair{\ket{0}
  \otimes \ket{0}}{\ket{0} \otimes \ket{1}}$ would jeopardize the subject
reduction property as, using the bilinearity of the tensor product, the former
should develop to the latter. This dilemma is not specific to quantum computing
as computing is often a non-reversible process where some information is lost.
For instance, if we express, in its type, that the term $(X - 1)(X - 2)$ is a
product of two polynomials, developing it to $X^2 - 3X + 2$ does not preserve
this type. Therefore, instead of a bilinear tensor product, we will use $n$-ary
Cartesian products, so the term $\ket 0\times\pair{\ket 0}{\ket 1}$ has type
$\B\times S(\B)$, and to move from this type to $S(\B\times\B)$ we use an
explicit cast. Notice that, if $\B$ is a set of vectors and $S(A)$ is the span
of the set $A$, then $S(\B\times\B)$ is isomorphic to $S(\B)\otimes S(\B)$.
Hence, the term $\ket{0}\times\pair{\ket{0}}{\ket{1}}$ has type $\B\times S(\B)$
and it cannot be reduced. But the term $\Uparrow
\ket{0}\times\pair{\ket{0}}{\ket{1}}$, where $\Uparrow$ is used as a mark to allow casting, has type $S(\B\times\B)$ and can be developed
to $\pair{\ket{0}\times\ket{0}}{\ket{0}\times\ket{1}}$.

This language permits expressing quantum algorithms with a very precise
information about the nature of the data processed by these algorithms.

\paragraph*{Outline of the paper}

Section~\ref{sec:QC} introduces some basic notations and concepts of quantum computing.
In Section~\ref{sec:calculus} we introduce the calculus, without product. In
Section~\ref{sec:tensor} we extend the language with a $n$-ary Cartesian product
for multiple-qubits systems. In Section~\ref{sec:SR} we state and prove the
Subject Reduction property. In Section~\ref{sec:SN} we state and prove the
Strong Normalization property. Then, in Section~\ref{sec:denSem} we provide a
straightforward interpretation of the calculus considering base types as sets of
vectors, and types $S(\cdot)$ as vector spaces. In Section~\ref{sec:examples} we
express two non-trivial example in our calculus: the Deutsch algorithm and the
Teleportation algorithm, demonstrating the expressivity of the proposed
language. Finally, in Section~\ref{sec:conclusion}, we conclude. There are also
two appendices, \ref{ap:Deutsch} and \ref{ap:telep}, which have more details of the
examples given in Section~\ref{sec:examples}.

\section{Basics notions of quantum computing}\label{sec:QC}
This section does not intend to introduce a full description of quantum
computing, the interested reader can find actual introductions to this area in
many textbooks, e.g.~\cite{NielsenChuang00,Jaeger07}. This section only intends
to introduce some basic notations and concepts.

In quantum computation, data is expressed by normalised vectors in Hilbert
spaces. For our purpose, this means that the vector spaces are defined over
complex numbers and come with a norm and a notion of orthogonality. The smallest
space usually considered is the space of {\em qubits}. This space is the
two-dimensional vector space $\mathbb{C}^2$, and it comes with a chosen
orthonormal basis denoted by $\{\ket0, \ket1\}$. A qubit (or quantum bit) is a
normalised vector $\alpha\ket0 + \beta\ket1$, where $|\alpha|^2 +|\beta|^2=1$.
To denote an unknown qubit $\psi$ it is common to write $\ket{\psi}$. A
two-qubits vector is a normalised vector in $\mathbb{C}^2\otimes\mathbb{C}^2$,
that is, a normalised vector generated by the orthonormal basis
$\{\ket{00},\ket{01},\ket{10},\ket{11}\}$, where $\ket{xy}$ stands for
$\ket{x}\otimes\ket{y}$. In the same way, a $n$-qubits vector is a normalised
vector in $(\mathbb{C}^2)^n$ (or $\mathbb{C}^N$ with $N=2^n$). Also common is
the notation $\bra{\psi}$ for the transposed, conjugate of $\ket{\psi}$, e.g.~if
$\ket{\psi}=[\alpha_1,\alpha_2,\dots,\alpha_n]^T$, then
$\bra{\psi}=[\alpha_1^*,\alpha^*_2,\dots,\alpha_n^*]$ where for any
$\alpha\in\mathbb{C}$, $\alpha^*$ denotes its conjugate.

The operators on qubits that are considered in this paper are the {\em quantum
gates}, that is, unitary operators. An {\em unitary operator} is an invertible linear
function preserving the norm and the orthogonality of vectors. The {\em adjoint}
of a given operator $U$ is denoted by $U^\dagger$, and the unitary condition
imposes that $U^\dagger U=Id$. These functions are linear, and so it is enough
to describe their action on the base vectors. Another way to describe these
functions would be by matrices, and then the adjoint is just its conjugate
transpose. A set of universal quantum gates is the set $cnot$,
$R_{\frac{\pi}{4}}$ and $had$, which can be defined as follows:
\begin{description}
\item[The $cnot$ gate.] The {\em controlled-not} is a two-qubits gate which only changes the second qubit if the first one is $\ket{1}$:
  \begin{align*}
    cnot\ket{0x} &=\ket{0x}\\
    cnot\ket{1x} &=\ket 1\otimes not\ket x
  \end{align*}
  where $not\ket{0}=\ket{1}$ and $not\ket{1}=\ket{0}$.
\item[The $R_{\frac{\pi}{4}}$ gate.] The $R_{\frac{\pi}{4}}$ gate is a
  single-qubit gate that modifies the {\em phase} of the qubit:
  \begin{align*}
    R_{\frac\pi 4}\ket 0 & = \ket 0\\
    R_{\frac\pi 4}\ket 1 & = e^{i\frac\pi 4}\ket 1
  \end{align*}
  where $\frac{\pi}{4}$ is the phase shift.
\item[The $H$ gate.] The {\em Hadamard} gate is a single-qubit gate which
  produces a 45 degree rotation of the basis:
  \begin{align*}
    H\ket 0 &=\frac 1{\sqrt 2}\ket 0+\frac 1{\sqrt 2}\ket 1\\
    H\ket 1 &=\frac 1{\sqrt 2}\ket 0-\frac 1{\sqrt 2}\ket 1
  \end{align*}
\end{description}

To make these gates act in higher-dimension qubits, they can be put together
with the bilinear symbol $\otimes$. For example, to make the Hadamard gate act
only in the first qubit of a two-qubits register, it can be taken to $H\otimes
Id$, and to apply a Hadamard gate to both qubits, just $H\otimes H$.

An important restriction, which has to be taken into account if a calculus
pretends to encode quantum computing, is the so called {\em no-cloning
  theorem}~\cite{WoottersZurekNATURE82}:

\begin{theorem}[No cloning]\label{thm:nocloningINT}
  There is no linear operator such that, given any qubit
  $\ket{\phi}\in\mathbb{C}^N$, can clone it. That is, it does not exist any
  unitary operator $U$ and fixed $\ket{\psi}\in\mathbb{C}^N$ such that
  $U\ket{\psi\phi}=\ket{\phi\phi}$.
\end{theorem}
\begin{proof}
  Assume there exists such an operator $U$, so given any $\ket{\varphi}$ and
  $\ket{\phi}$ one has $U\ket{\psi\varphi}=\ket{\varphi\varphi}$ and also
  $U\ket{\psi\phi}=\ket{\phi\phi}$. Then
  \begin{equation}\label{eq:proofNCINT}
    \braket{U\varphi\psi}{U\psi\phi}=\braket{\varphi\varphi}{\phi\phi}
  \end{equation}
  where $\bra{U\varphi\psi}$ is the conjugate transpose of $U\ket{\psi\varphi}$.
  However, notice that the left side of equation \eqref{eq:proofNCINT} can be
  rewritten as
  \[
    \bra{\varphi\psi}{U}^\dagger
    U\ket{\psi\phi}=\braket{\varphi\psi}{\psi\phi}=\braket{\varphi}{\phi}
  \]
  While the right side of equation \eqref{eq:proofNCINT} can be rewritten as
  \[
    \braket{\varphi}{\phi}\braket{\varphi}{\phi}={\braket{\varphi}{\phi}}^2
  \]

  So $\braket{\varphi}{\phi}={\braket{\varphi}{\phi}}^2$, which implies either
  $\braket{\varphi}{\phi}=0$ or $\braket{\varphi}{\phi}=1$, none of which can be
  assumed in the general case, since $\ket{\varphi}$ and $\ket{\phi}$ were
  picked as random qubits. \qed
\end{proof}

The implication of this theorem in the design choices of a calculus is that it
must be forbidden to allow functions duplicating arbitrary arguments. However,
notice that this does not forbid cloning some specific qubit states. Indeed, for
example the qubits $\ket{0}$ and $\ket{1}$ can be cloned without much effort by
using the $cnot$ gate: $cnot\ket{00}=\ket{00}$ and $cnot\ket{10}=\ket{11}$. In
this sense, the imposed restriction is not a resources-aware restriction {\em
  \`a la} linear logic~\cite{GirardTCS87}. It is a restriction that forbids us
from creating a `universal cloning machine', but still allows us to clone any
given known term.

Another operation considered on qubits is the measurement. A projector is an
operator of the form $\ket\phi\bra\phi$. For example, in the canonical base
$\{\ket 0,\ket 1\}$ of $\mathbb C^2$, $P_0=\ket 0\bra 0$ is a projector and
$P_1=\ket 1\bra 1$ is another projector, with respect to such a base. Indeed,
\begin{align*}
  P_0(\alpha\ket 0+\beta\ket 1)
  &=\alpha P_0\ket 0+\beta P_0\ket 1\\
  &=\alpha\ket 0\braket 00 +\beta\ket 0\braket 01\\
  &=\alpha\ket 0
  \\
  P_1(\alpha\ket 0+\beta\ket 1)
  &=\alpha P_1\ket 0+\beta P_1\ket 1\\
  &=\alpha\ket 1\braket 10 +\beta\ket 1\braket 11\\
  &=\beta\ket 1
\end{align*}
With these projectors we can define the measurement operators $M_0$ and $M_1$ as
\[
  M_i\ket\psi = \frac{P_i\ket\psi}{\sqrt{\bra\psi P_i\ket\psi}}
\]
For example,
\begin{align*}
  M_0(\alpha\ket 0+\beta\ket 1)
  &=\frac{P_0(\alpha\ket 0+\beta\ket 1)}{\sqrt{(\alpha^*\bra 0+\beta^*\bra 1)P_0(\alpha\ket 0+\beta\ket 1)}}\\
  &=\frac{\alpha\ket 0}{\sqrt{(\alpha^*\bra 0+\beta^*\bra 1)(\alpha\ket 0)}}\\
  &=\frac{\alpha\ket 0}{\sqrt{|\alpha|^2\braket 00+\beta^*\alpha\braket 10}}\\
  &=\frac{\alpha\ket 0}{\sqrt{|\alpha|^2}}
    =\frac{\alpha}{|\alpha|}\ket 0~\equiv\!{\text{\footnotemark}}~\ket 0
\end{align*}
\footnotetext{The scalar $\frac\alpha{|\alpha|}$ is known as a {\em phase} and
  can be ignored, so only $\ket 0$ remains.}

The quantum measurement is defined in terms of sets of measurements operators.
For example, in the canonical base $\{\ket 0,\ket 1\}$, the set $\{M_0,M_1\}$ is
a quantum measurement. When it acts on a qubit $\ket\phi$, it will apply the
operator $M_i$, with probability $\bra\psi P_i\ket\psi$.
\section{No cloning, superpositions and measurement}\label{sec:calculus}
\label{sec:fstGram}\label{sec:fstTS}\label{sec:fstTRS}

The grammar of types is defined in Table~\ref{tab:typesNoX} and the grammar of
terms in Table~\ref{tab:termsNoX}, where $\alpha\in\mathbb C$.

\begin{table}
  \centering
  \begin{align*}
    \gB &:= \B~|~S(\gB) & \textrm{Qubit types ($\qtypes$)}\\
    A & := \gB~|~\gB\Rightarrow A~|~S(A) & \textrm{Types (\types)}
  \end{align*}
  \caption{First grammar of types, without product.}
  \label{tab:typesNoX}
\end{table}

\begin{table}
  \centering
  \begin{align*}
    b  & := x~|~\lambda x^{\gB}.t~|~\ket 0~|~\ket 1 & \textrm{Base terms (\basis)}\\
    v  & := b~|~\pair vv~|~\z~|~\alpha.v & \textrm{Values (\values)}\\
    t  & := v~|~tt~|~\pair tt~|~\pi t~|~\ite{}tt~|~\alpha.t & \textrm{Terms ($\Lambda$)}
  \end{align*}
  \caption{First grammar of terms, without product.}
  \label{tab:termsNoX}
\end{table}

Terms are variables, abstractions, applications, two constants for base qubits
($\ket 0$ and $\ket 1$), linear combinations of terms (built with addition and
product by a scalar, addition being commutative and associative), a family of
constants for the null vectors, one for each type of the form $S(A)$, ($\z$),
and an if-then-else construction ($\ite{}tt$) deciding on base vectors. We use
the notation $\ite trs$ for $(\ite{}rs)t$, so making $\ite{}rs$ a function, which applied to a term $t$ produces ``if $t$ then $r$ else $s$'', when $t$ is a basis term. We also include a symbol $\pi$ for
measurement in the computational basis.

The grammar is split into base terms (non-superposed values), general values,
and general terms. Types are also split into qubit types and general types.

The set of free variables of a term $t$ is defined as usual in the
$\lambda$-calculus and denoted by $FV(t)$. We use $\may t$ as a notation to
refer indistinctly to $\alpha.t$ and to $t$. We use $-t$ as a shorthand notation
for $-1.t$, and $(t-r)$ as a shorthand notation for $\pair t{(-r)}$. The term
$(t-t)$ will have type $S(A)$, and reduce to $\z$, which is not a base term.

An important property of this calculus is that types $S(\cdot)$ are linear
types. Indeed, those correspond to superpositions, and so no duplication is
allowed on them. Instead, at this tensor-free stage, a type without an
$S(\cdot)$ on head position is a non-linear type, such as $\B$, which correspond
to base terms, i.e.~terms that can be cloned. A non-linear function is allowed
to be applied to a linear argument, for example, $\lambda x^{\B}.(fxx)$ can be
applied to $\pair{\frac 1{\sqrt 2}.\ket 0}{\frac 1{\sqrt 2}.\ket 1}$, however,
it distributes in the following way: $(\lambda x^{\B}.(f xx))~\pair{\frac
  1{\sqrt 2}.\ket 0}{\frac 1{\sqrt 2}.\ket 1} \longrightarrow \pair{\frac
  1{\sqrt 2}.(\lambda x^{\B}.(f xx))\ket 0}{\frac 1{\sqrt 2}.(\lambda x^{\B}.(f
  xx))\ket 1} \longrightarrow \pair{\frac 1{\sqrt 2}.(f\ket 0\ket 0)}{\frac
  1{\sqrt 2}.(f\ket 1\ket 1)} $.

Hence, the beta reduction occurs only when the type of the argument is the same
as the type expected by the abstraction. Thus, the rewrite system depends on
types. For this reason, we describe first the type system, and only then the
rewrite system.

A type $A$ will be interpreted as a set of vectors and $S(A)$ as the vector
space generated by the span of such a set (cf.~Section~\ref{sec:denSem}). Hence,
we naturally have $A\subseteq S(A)$ and $S(S(A))=S(A)$. Therefore, we also
define a subtyping relation on types (cf.~Table~\ref{tab:SubtypingNoX}). The
type system is given in Table~\ref{tab:TS}, where contexts $\Gamma$ and $\Delta$
have a disjoint support.

\begin{table}
  \centering
  \[
    \infer{A \preceq A}{}
    \qquad
    \infer{A \preceq C}{A \preceq B & B \preceq C}
  \]
  \[
    \infer{A\preceq S(A)}{} \qquad \infer{S(S(A))\preceq S(A)}{} \qquad
    \infer{\gB\Rightarrow A\preceq\gB\Rightarrow B}{A\preceq B} \qquad
    \infer{S(A)\preceq S(B)}{A\preceq B}
  \]
  \caption{First subtyping relation, without product.}
  \label{tab:SubtypingNoX}
\end{table}

\begin{table}
  \[\def\arraystretch{1.5}
    \begin{array}{c}
      \infer[\tax]
      {x:\gB\vdash x:\gB}
      {}
      \quad
      \infer[\tax_{\vec 0}]
      {\vdash \z:S(A)}
      {}
      \quad
      \infer[\tax_{\ket 0}]
      {\vdash\ket 0:\B}
      {}
      \quad
      \infer[\tax_{\ket 1}]
      {\vdash\ket 1:\B}
      {}
      \\
      \infer[S_I^\alpha]
      {\Gamma\vdash \alpha.t:S(A)}
      {\Gamma\vdash t:A}
      \qquad
      \infer[S_I^+]
      {\Gamma,\Delta\vdash\pair tu:S(A)}
      {
      \Gamma\vdash t:A
      &
        \Delta\vdash u:A
        }
        \qquad
        \infer[S_E]
        {\Gamma\vdash\pi t:\B}
        {\Gamma\vdash t:S(\B)}
      \\
      \infer[\preceq]
      {\Gamma\vdash t:B}
      {
      \Gamma\vdash t:A
      \
      {\scriptstyle (A\preceq B)}
      }
      \qquad
      \infer[\tif]
      {\Gamma\vdash\ite{}tr:\B\Rightarrow A}{\Gamma\vdash t:A & \Gamma\vdash r:A}
                                                                \qquad
                                                                \infer[\Rightarrow_I]
                                                                {\Gamma\vdash\lambda x^{\gB}.t:\gB\Rightarrow A}
                                                                {\Gamma,x:\gB\vdash t:A}
      \\
      \infer[\Rightarrow_E]
      {\Gamma,\Delta\vdash tu:A}
      {
      \Gamma\vdash t:\gB\Rightarrow A
      &
        \Delta\vdash u:\gB
        }
        \qquad
        \infer[\Rightarrow_{ES}]
        {\Gamma,\Delta\vdash tu:S(A)}
        {
        \Gamma\vdash t:S(\gB\Rightarrow A)
                                                              &
                                                                \Delta\vdash u:S(\gB)
                                                                }
      \\
      \ \qquad
      \infer[W]
      {\Gamma,x:\B\vdash t:A}
      {\Gamma\vdash t:A}
      \qquad
      \infer[C]
      {\Gamma,x:\B\vdash (x/y)t:A}
      {\Gamma,x:\B,y:\B\vdash t:A}
      \qquad
    \end{array}
  \]
  \caption{First type system, without product.}
  \label{tab:TS}
\end{table}

Remarks: Rule $\tax$ allows typing variables only with qubit types. Hence, the
system is first-order and only qubits can be passed as arguments (more when the
rewrite system is presented). Rule $\tax_{\vec 0}$ types the null vector as a
non-base term, because the null vector cannot belong to the base of any vector
space. Rules $\tax_{\ket 0}$ and $\tax_{\ket 1}$ type the base qubits with the
base type $\B$.

Thanks to rule $\preceq$ the term $\ket 0$ has type $\B$ and also the more
general type $S(\B)$. Note that $\npair{\pair{\ket 0}{\ket 0}}{\ket 0}$ has type
$S(\B)$ and reduces to $\ket 0$ which has the same type $S(\B)$. Reducing this
term to $\ket 0$ of type $\B$ would not preserve its type. Moreover, this type
would contain information impossible to compute, because the value $\ket 0$ is
not the result of a measurement, but of an interference.

Rule $S_I^\alpha$ states that a term multiplied by a scalar is not a base term.
Even if the scalar is just a phase, we must type the term with an $S(\cdot)$
type, because our measurement operator will remove any scalars, so having the
scalar means that it has not been measured yet. Rule $S_I^+$ is the analog for
sums to the previous rule. Rule $S_E$ is the elimination of the superposition,
which is achieved by measuring (using the $\pi$ operator).

Notice that $\ite{}tr$ is typed as a non-linear function by rule $\tif$, and so,
the if-then-else linearly distributes over superpositions, e.g.
\begin{align*}
  \ite{(\alpha.\ket 0+\beta.\ket 1)}tr &= (\ite{}tr)(\alpha.\ket 0+\beta.\ket 1)\\
                                       &\longrightarrow^* \alpha.(\ite{}tr)\ket 0+\beta.(\ite{}tr)\ket 1\\
                                       &=\alpha.\ite{\ket 0}tr+\beta.\ite{\ket 1}tr\\
                                       &\longrightarrow^*\alpha.r+\beta.t
\end{align*}

Rule $\Rightarrow_{ES}$ is the elimination for superpositions, corresponding to
the linear distribution. Notice that the type of the argument is a superposition
of the argument expected by the abstraction ($S(\gB)$ vs.~$\gB$). Also, the
abstraction is allowed to be a superposition. If, for example, we want to apply
the sum of functions $\pair fg$ to the base argument $\ket 0$, we would obtain
the superposition $\pair{f\ket 0}{g\ket 0}$. The typing is as follows:

\[
  \infer[\Rightarrow_{ES}] {\vdash\pair fg\ket 0:S(A)} { \infer[S_I^+]
    {\vdash\pair fg:S(\B\Rightarrow A)} {\vdash f:\B\Rightarrow A & \vdash
      g:\B\Rightarrow A} & \infer[\preceq] {\vdash\ket 0:S(\B)}
    {\infer[\tax_{\ket 0}] {\vdash\ket 0:\B}{}} }
\]
which reduces to
\[
  \infer[S_I^+] {\vdash\pair{f\ket 0}{g\ket 0}:S(A)} { \infer[\Rightarrow_E]
    {\vdash f\ket 0:A} { \vdash f:\B\Rightarrow A & \infer[\tax_{\ket 0}]
      {\vdash\ket 0:\B}{} } & \infer[\Rightarrow_E] {\vdash g\ket 0:A} {\vdash
      g:\B\Rightarrow A & \infer[\tax_{\ket 0}] {\vdash\ket 0:\B}{}} }
\]

Similarly, a linear function ($\vdash f:\B\Rightarrow A$) applied to a
superposition $\pair{\ket 0}{\ket 1}$ reduces to a superposition $\pair{f\ket
  0}{f\ket 1}$. The typing is as follows:
\[
  \infer[\Rightarrow_{ES}] {\vdash f\pair{\ket 0}{\ket 1}:S(A)} {
    \infer[\preceq] {\vdash f:S(\B\Rightarrow A)} {\vdash f:\B\Rightarrow A} &
    \infer[S_I^+] {\vdash\pair{\ket 0}{\ket 1}:S(\B)} { \infer[\tax_{\ket
        0}]{\vdash\ket 0:\B}{} & \infer[\tax_{\ket 1}]{\vdash\ket 1:\B}{} } }
\]
which reduces to
\[
  \infer[S_I^+] {\vdash\pair{f\ket 0}{f\ket 1}:S(A)} { \infer[\Rightarrow_E]
    {\vdash f\ket 0:A} { \vdash f:\B\Rightarrow A & \infer[\tax_{\ket 0}]
      {\vdash\ket 0:\B}{} } & \infer[\Rightarrow_E] {\vdash f\ket 1:A} { \vdash
      f:\B\Rightarrow A & \infer[\tax_{\ket 1}] {\vdash\ket 1:\B}{} } }
\]

Finally, Rules $W$ and $C$ correspond to weakening and contraction on variables
with base types. The rationale is that base terms can be cloned.

The null vectors $\z$ need to be interpreted as the null vector of the vector
space $S(A)$. Therefore, since the vector space $S(S(A))$ is the same as $S(A)$,
their null vectors should coincide. Then, we define a function $\mathsf{min}(A)$
which gives us the smallest type in terms of the amounts of $S$ it includes, that generates the vector space, so the null vector can be
taken from such a space.
\begin{align*}
  \mathsf{min}(\B) &= \B\\
  \mathsf{min}(\Psi\Rightarrow A) &=\Psi\Rightarrow\mathsf{min}(A)\\
  \mathsf{min}(S(A)) &=\mathsf{min}(A)
\end{align*}
Therefore, we will identify, through reduction, the term $\z$ with $\z[\mathsf{min}(A)]$.

The rewrite system is given in Table~\ref{tab:RSNoX}.
\begin{table}
  \vspace{-3mm}
  \[
    \begin{array}{c|r@{\ }lr}
      \multirow{2}{*}{\titulo{8.4mm}{Beta rules}}
      & \textrm{If $b$ has type $\B$ and $b\in\basis$, then } (\lambda x^\B.t)b &\lra (b/x)t & \rbetab\\
      & \textrm{If $u$ has type $S(\gB)$, then } (\lambda x^{S(\gB)}.t)u &\lra (u/x)t & \rbetan\\\hline
      \multirow{2}{*}{\titulo{8.4mm}{If rules}}
      &\ite{\ket 1}tr &\lra t &\riftrue\\
      &\ite{\ket 0}tr &\lra r &\riffalse\\\hline
      \multirow{6}{*}{\titulo{25.2mm}{Linear distribution rules}}
      &\textrm{If $t$ has type $\B\Rightarrow A$, then } t\pair uv &\lra \pair{tu}{tv} & \rlinr\\
      &\textrm{If $t$ has type $\B\Rightarrow A$, then } t(\alpha.u) &\lra\alpha.tu & \rlinscalr\\
      &\textrm{If $t$ has type $\B\Rightarrow A$, then } t\z[\B] &\lra\z[\mathsf{min}(A)] &\rlinzr\\
      &\pair tuv &\lra\pair{tv}{uv} & \rlinl\\
      &(\alpha.t)u &\lra\alpha.tu &\rlinscall\\
      &\z[\B\Rightarrow A]t &\lra\z[\mathsf{min}(A)] &\rlinzl\\\hline
      \multirow{10}{*}{\titulo{37.8mm}{Vector space axioms rules}}
      &\pair\z t &\lra t &\rneut\\
      &1.t &\lra t &\runit\\
      &\textrm{If $t$ has type $A$, then }0.t &\lra\z[\mathsf{min}(A)] &\rzeros\\
      &\alpha.\z &\lra\z[\mathsf{min}(A)] &\rzero\\
      &\alpha.(\beta.t) &\lra (\alpha\beta).t &\rprod\\
      &\alpha.\pair tu &\lra\pair{\alpha.t}{\alpha.u} &\rdists\\
      &\pair{\alpha.t}{\beta.t} &\lra(\alpha+\beta).t &\rfact\\
      &\pair{\alpha.t}t &\lra (\alpha+1).t &\rfacto\\
      &\pair tt &\lra 2.t &\rfactt\\
      &\textrm{If $A\neq\mathsf{min}(A)$, then }\z &\lra\z[\mathsf{min}(A)] & \rzeroS \\\hline
      \multirow{2}{*}{\titulo{8.4mm}{=}}
      &\pair tr &=_{AC}\pair rt & \rcomm\\
      &\pair{\pair tr}s &=_{AC} \pair t{\pair rs} & \rassoc\\\hline
      \multirow{2}{*}{\titulo{15mm}{Projection rule}}
      &\pi(\sum\limits_{i=1}^n
        \may[\alpha_i]b_i)&\lrap
                            b_k & \rproj\\
      &\multicolumn{3}{c}{\parbox{11cm}{where
        $p=\frac{|\alpha_k|^2}{\sum_{i=1}^n|\alpha_i|^2}$;
        $\forall i, b_i=\ket 0$ or $b_i=\ket 1$;
        $\sum_{i=1}^n\alpha_i.b_i$ is a normal term (hence $1\leq n\leq 2$);
        and if an $\alpha_k$ is absent, $|\alpha_k|^2=1$, and $1\leq k\leq n$.
        }}\\\hline
      \multirow{2}{*}{\titulo{12.4mm}{Contextual rules}}
      &\multicolumn{3}{c}{
        \infer{tv\lrap uv}{t\lrap u}
        \qquad
        \infer{(\lambda x^\B.v)t\lrap(\lambda x^\B.v)u}{t\lrap u}
        \qquad
        \infer{\ite trs\lrap\ite urs}{t\lrap u}
        }\\
      &\multicolumn{3}{c}{
        \infer{\pair tv\lrap\pair uv}{t\lrap u}
        \qquad
        \infer{\alpha.t\lrap\alpha.u}{t\lrap u}
        \qquad
        \infer{\pi t\lrap\pi u}{t\lrap u}
        }
    \end{array}
  \]
  \centering
  All the terms are considered to be closed (i.e. reduction is weak).
  \caption{First rewrite system, without product.}
  \label{tab:RSNoX}
\end{table}
The relation $\lrap$ is a probabilistic relation where $p$ is the probability of
occurrence. Every rewrite rule has a probability $1$ of occurrence, except for
the projection rule $\rproj$. The rewrite system depends on the typing, in
particular an abstraction can either expect a base term as argument (that is, a
non-linear term) or a superposition, which has to be treated linearly. However,
an abstraction expecting a non-linear argument can be given a superposition
(which is linear), and it is typable, only that the reduction distributes before
beta-reduction.

There are two beta rules. Rule $\rbetab$ acts only when the argument is a base
term, and the type expected by the abstraction is a base type. Hence, rule
$\rbetab$ is ``call-by-base'' (base terms coincides with values of
$\lambda$-calculus, while values on this calculus also includes superpositions
of base terms and the null vector). Instead, $\rbetan$ is the usual call-by-name
beta rule. They are distinguished by the type of the argument. Rule \rbetab\
acts on non-linear functions while \rbetan\ is for linear functions. The test on
the type of the argument is due to the type system that allows an argument with
a type not matching with the type expected by the abstraction (in such a case,
one of the linear distribution rules applies).

Since there are two beta reductions, the contextual rule admitting reducing the
argument on an application is valid only when the abstraction expects an
argument of type $\B$. If the argument is typed with a base type, then it
reduces to a term that can be cloned, and we must reduce it first to ensure that
we are cloning a term that can be cloned indeed. For example, a measure over a
superposition (e.g. $\pi\pair{\ket 0}{\ket 1})$ has a base type $\B$, but it
cannot be cloned until it is reduced. Indeed, $(\lambda
x^{\B}.(fxx))(\pi\pair{\ket 0}{\ket 1})$ can reduce either to $f\ket 0\ket 0$ or
$f\ket 1\ket 1$, but never to $f\ket 0\ket 1$ or $f\ket 1\ket 0$, which would be
possible only if the measure happens after the cloning machine. A more physical
way to state it is that cloning after measurement is not a problem, since we
already know the state to be cloned: It would be enough to prepare a second
system in the same state.

The group If-then-else contains the tests over the base qubits $\ket 0$ and
$\ket 1$.

The first three of the linear distribution rules (those marked with subindex
$r$), are the rules that are used when a non-linear abstraction is applied to a
linear argument (that is, when an abstraction expecting a base term is given a
superposition). In these cases the beta reductions cannot be used since the side
conditions on types are not met. Hence, these distributivity rules apply
instead.

For example, let us give more details in the reduction sequence on the example
given at the beginning of this Section.
\begin{align*}
  &(\lambda x^{\B}.(fxx))\pair{\frac 1{\sqrt 2}.\ket
    0}{\frac 1{\sqrt 2}.\ket 1}\\
  &\red\rlinr
    \pair{(\lambda x^{\B}.(fxx))\frac 1{\sqrt 2}.\ket 0}{(\lambda
    x:\B~(fxx))\frac 1{\sqrt 2}.\ket 1}\\
  &\red{\rlinscalr^2}
    \pair{\frac 1{\sqrt 2}.(\lambda x^{\B}.(fxx))\ket 0}{\frac 1{\sqrt 2}.(\lambda x^{\B}.(fxx))\ket 1}\\
  &\red{\rbetab^2}
    \pair{\frac 1{\sqrt 2}.f\ket 0\ket 0}{\frac 1{\sqrt 2}.f\ket 1\ket 1}
\end{align*}

Notice that in Rule \rlinzr, the term needs to be reduced to
$\z[\mathsf{min}(A)]$. Indeed, if we just reduce $t\z[\B]$ to $\z$, there is a
problem of subject reduction: $t$ having type $\B\Rightarrow A$ do not implies
it has no other type, for example, $\B\Rightarrow B$, and so, reducing to $\z$
would break subject reduction since $\z$ does not have necesarilly type $S(B)$.
On the contrary, we can prove (cf.~Lemmas~\ref{lem:minlqA} and \ref{lem:min}) that if $t$ has types $\B\Rightarrow A$ and
$\B\Rightarrow B$, then $\mathsf{min}(A)\preceq B$, and so subject reduction is preserved.

The remaining rules in this group deal with a superposition of functions. For
example, rule \rlinl\ is the sum of functions: A superposition is a sum,
therefore, if an argument is given to a sum of functions, it needs to be given
to each function in the sum. We use a weak reduction strategy (i.e.~reduction
occurs only on closed terms), hence the argument $v$ on this rule is closed,
otherwise, it could not be typed. For example \(
x:S(\B),t:\B\Rightarrow\B,u:\B\Rightarrow\B\vdash\pair tux:S(\B) \) is
derivable, but \(
x:S(\B),t:\B\Rightarrow\B,u:\B\Rightarrow\B\vdash\pair{tx}{ux}:S(\B) \) is not.

The vector space axioms rules are the directed axioms of vector spaces
\cite{ArrighiDowekRTA08,AssafDiazcaroPerdrixTassonValironLMCS14}. The rule
$\rzeroS$ ensures that each vector space have only one null vector. The Modulo
AC rules are not proper rewrite rules, but express that we consider the symbol
$+$ to be associative and commutative, and hence our rewrite system is {\em
  rewrite modulo AC}~\cite{PetersonStickelJACM81}. As a consequence, the
parenthesis are not needed and we may use the notation $\sum_{i=1}^nt_i$.
(for example, in rule \rproj).

Rule \rproj\ is the projection over weighted associative pairs, that
is, the projection over a generalization of multisets where the multiplicities
are given by complex numbers. This reduction rule is the only one with a
probability different from $1$, and it is given by the square of the modulus of
the weights\footnote{We speak about weights and not amplitudes, since the vector
  may not have norm $1$. The projection rule normalizes the vector while
  reducing.}, implementing this way the quantum measurement over the
computational basis.

Remark, to conclude, that this calculus can represent only pure states, and not
mixed states. For example, let $\s Z$ be an encoding for the quantum $Z$ gate
(cf.~Section~\ref{sec:examples}), $\ket +=\frac 1{\sqrt 2}.\pair{\ket 0}{\ket
  1}$, and $\ket -=\frac 1{\sqrt 2}.(\ket 0-\ket 1)$. The terms \( (\lambda
x:S(\B)~(\lambda y^{\B}.\ite y{(\s Z x)}x)(\pi\ket +)) \ket + \) and \( (\lambda
x:S(\B)~\pi(x)) \ket +\) may be considered equivalent if taking into account the
density matrix representation of mixed states. Indeed, the first reduces either
to $\ket +$ or $\ket -$, with probability $\frac 12$ each, while the second
reduces to $\ket 0$ or to $\ket 1$, with probability $\frac 12$ each. The sets
of pure states $\{(\frac 12,\ket +),(\frac 12,\ket -)\}$ and $\{(\frac 12,\ket
0),(\frac 12,\ket 1)\}$ have both density matrix $\frac I2$, and hence are
indistinguishable. However, once the result of the measure is known, the pure
states can be distinguished. A different approach, using density matrices, can
be seen in~\cite{DiazcaroAPLAS17}, however such a calculus has a linear type
system, and no algebraic reduction occurs.

\section{Multi-qubit systems: Tensor products}\label{sec:tensor}
\label{sec:TRS}

One postulate of quantum mechanics determines how to compose several quantum
systems. This way, the Hilbert space of a multi-qubit system is the tensor
product between single-qubit Hilbert spaces. If $\ket\psi\in\mathcal H_1$ and
$\ket\phi\in\mathcal H_2$ represent the states of two quantum systems, the state
of the full system composed by those two is $\ket\psi\otimes\ket\phi\in\mathcal
H_1\otimes\mathcal H_2$. In particular, if we chose bases $\mathcal B_1$ and
$\mathcal B_2$ of $\mathcal H_1$ and $\mathcal H_2$ respectively, we can write
$\ket\psi=\sum_{i\in\mathcal B_1}\alpha_i\ket i$ and
$\ket\phi=\sum_{j\in\mathcal B_2}\beta_j\ket j$, and so
$\ket\psi\otimes\ket\phi=\sum_{i\in\mathcal B_1}\sum_{j\in\mathcal
  B_2}\alpha_i\beta_j\ket i\otimes\ket j=\sum_{i\in\mathcal
  B_1}\sum_{j\in\mathcal B_2}\alpha_i\beta_j\ket{ij}$. The last equality can be
seen as a matter of notation, but also it is clear that $\ket i\otimes\ket
j\simeq\ket i\times\ket j$, and so $\mathcal H_1\otimes\mathcal H_2\simeq
\Span(\mathcal B_1\times\mathcal B_2)$. Therefore, since we already introduced a
symbol for the span of a type, and basis types, we only need to introduce an associative
Cartesian product to our calculus in order to recover the tensor product. For
example, the term $\ket 0\times\pair{\frac 1{\sqrt 2}\ket 0}{\frac 1{\sqrt
    2}\ket 1})$ have type $\B\times S(\B)$, while $\pair{\frac 1{\sqrt 2}\ket
  0\times\ket 0}{\frac 1{\sqrt 2}\ket 0\times\ket 1}$ have type $S(\B\times\B)$.
Therefore, the distributivity of linear combinations over tensor products is not
trivially tracked in the type system, and so an explicit cast between types is
also added: The term $\ket 0\times\pair{\frac 1{\sqrt 2}\ket 0}{\frac 1{\sqrt
    2}\ket 1}$ does not rewrite to $\pair{\frac 1{\sqrt 2}\ket 0\times\ket
  0}{\frac 1{\sqrt 2}\ket 0\times\ket 1}$, but the term $\Uparrow_\ell\ket 0\times\pair{\frac 1{\sqrt 2}\ket 0}{\frac 1{\sqrt
    2}\ket 1}$ does, where $\Uparrow_\ell$ casts the type $S(\B\times S(\B))$ into the type $S(\B\times\B)$.

The grammar of types is given in Table~\ref{tab:types}, where the Cartesian
product is added to each level. The new level ``base qubit types'' (\bqtypes) is needed
since the abstractions with variables in \bqtypes\ are the non-linear ones.
We will use the following notation: $\B^n=\B\times\dots\times\B$ ($n$-times).
Hence, $\bqtypes = \{\B^n\mid n>0\}$. Also, we may use the notation $A\times S(B^0)=A$.
\begin{table}
  \centering
  \begin{align*}
    \Bs &:= \B\mid\Bs\times\Bs & \textrm{Base qubit types (\bqtypes)}\\
    \gB &:=\Bs\mid S(\gB)\mid\gB\times\gB & \textrm{Qubit types (\qtypes)}\\
    A & := \gB\mid\gB\Rightarrow A\mid S(A)\mid A\times A & \textrm{Types (\types)}
  \end{align*}
  \caption{Grammar of types.}
  \label{tab:types}
\end{table}

The grammar of terms is given in Table~\ref{tab:terms}.
\begin{table}
  \centering
  \begin{align*}
    b  & := x\mid \lambda x^{\gB}.t\mid \ket 0\mid \ket 1\mid b\times b & \textrm{Base terms (\basis)}
    \\
    v  & := b\mid \pair vv\mid \z\mid \alpha.v\mid v\times v & \textrm{Values (\values)}
    \\
    t  & := v\mid tt\mid \pair tt\mid \pi_j t\mid \ite{}tt\mid \alpha.t & \textrm{Terms ($\Lambda$)}\\
    &\hspace{8.5mm}\mid t\times t\mid \head~t\mid \tail~t\mid \Uparrow_r t\mid\Uparrow_\ell t
  \end{align*}
  \caption{Grammar of terms.}
  \label{tab:terms}
\end{table}

Each level in the term grammar (base terms, values and general terms) is
extended with the tensor of the terms in such a level. The primitives $\head$
and $\tail$ are added to the general terms. The projector $\pi$ is generalized
to $\pi_j$, where the subindex $j$ stands for the number of qubits to be
measured, which are those in the first $j$ positions.
Notice that it is always possible to do a swap between qubits and so place the
qubits to be measured at the beginning. For instance, $\lambda x^{\B \times
  \B}.\tail~x\times\head~x$.
Finally, an explicit type cast of a term $t$ ($\Uparrow_rt$ and $\Uparrow_\ell t$) is included in the
general terms. We may use just $\Uparrow$ to refer to any of $\Uparrow_r$ or
$\Uparrow_\ell$. As the product is associative, we also may use the notation $\prod_{i=1}^n t_i$ and $\prod_{i=1}^n
A_i$ for associative Cartesian products.

The subtyping relation is also updated to include Cartesian products, and it is
given in Table~\ref{tab:Subtyping}.
\begin{table}[!h]
  \centering
  \[
    \infer{A \preceq A}{}
    \qquad
    \infer{A \preceq C}{A \preceq B & B \preceq C}
  \]
  \[
    \infer{A\preceq S(A)}{} \qquad \infer{S(S(A))\preceq S(A)}{} \qquad
    \infer{\gB\Rightarrow A\preceq\gB\Rightarrow B}{A\preceq B}
  \]
  \[
    \infer{S(A)\preceq S(B)}{A\preceq B}
    \qquad
    \infer{A\times C\preceq B\times C}{A\preceq B}
    \qquad
    \infer{C\times A\preceq C\times B}{A\preceq B}
  \]
  \caption{Subtyping relation.}
  \label{tab:Subtyping}
\end{table}

The updated type system, given in Table~\ref{tab:UTS}, includes all the typing
rules given in the previous section, plus the rules for tensor, for cast, and an
updated rule $S_E$.

\begin{table}
  \centering
  \[
    \def\arraystretch{1.5}
    \begin{array}{c}
      \infer[\tax]
      {x:\gB\vdash x:\gB}
      {}
      \quad
      \infer[\tax_{\vec 0}]
      {\vdash \z:S(A)}
      {}
      \quad
      \infer[\tax_{\ket 0}]
      {\vdash\ket 0:\B}
      {}
      \quad
      \infer[\tax_{\ket 1}]
      {\vdash\ket 1:\B}
      {}
      \\
      \infer[S_I^\alpha]
      {\Gamma\vdash \alpha.t:S(A)}
      {\Gamma\vdash t:A}
      \quad
      \infer[S_I^+]
      {\Gamma,\Delta\vdash\pair tu:S(A)}
      {
      \Gamma\vdash t:A
      &
        \Delta\vdash u:A
        }
        \quad
        \infer[S_E]{\Gamma\vdash\pi_j t:\B^j\times S(\B^{n-j})}
        {\Gamma\vdash t:S(\B^n)}
      \\
      \infer[\preceq]
      {\Gamma\vdash t:B}
      {
      \Gamma\vdash t:A
      \
      {\scriptstyle (A\preceq B)}
      }
      \quad
      \infer[\tif]
      {\Gamma\vdash\ite{}tr:\B\Rightarrow A}
      {\Gamma\vdash t:A & \Gamma\vdash r:A}
      \quad
      \infer[\Rightarrow_I]
      {\Gamma\vdash\lambda x^{\gB}.t:\gB\Rightarrow A}
      {\Gamma,x:\gB\vdash t:A}
      \\
      \infer[\Rightarrow_E]
      {\Gamma,\Delta\vdash tu:A}
      {
      \Gamma\vdash t:\gB\Rightarrow A
      &
        \Delta\vdash u:\gB
        }
        \quad
        \infer[\Rightarrow_{ES}]
        {\Gamma,\Delta\vdash tu:S(A)}
        {
        \Gamma\vdash t:S(\gB\Rightarrow A)
      &
        \Delta\vdash u:S(\gB)
        }
      \\
      \ \quad
      \infer[W]
      {\Gamma,x:\B^n\vdash t:A}
      {\Gamma\vdash t:A}
      \quad
      \infer[C]
      {\Gamma,x:\B^n\vdash (x/y)t:A}
      {\Gamma,x:\B^n,y:\B^n\vdash t:A}
      \quad\
      \\
      \ \quad
      \infer[\times_I]
      {\Gamma,\Delta\vdash t\times r:A\times B}
      {\Gamma\vdash t:A & \Delta\vdash r:B}
         \\~\quad
        \infer[\times_{Er}\ {\scriptstyle (n>1)}]
        {\Gamma\vdash \head~t:\B}
        {\Gamma\vdash t:\B^n}
        \qquad
        \infer[\times_{El}\ {\scriptstyle (n>1)}]
        {\Gamma\vdash \tail~t:\B^{n-1}}
        {\Gamma\vdash t:\B^n}
        \quad\
      \\
      \ \quad
      \infer[\Uparrow_r]
      {\Gamma\vdash \Uparrow_r t:S(A\times B)}
      {\Gamma\vdash t:S(S(A)\times B)}
      \quad
      \infer[\Uparrow_\ell]
      {\Gamma\vdash \Uparrow_\ell t:S(A\times B)}
      {\Gamma\vdash t:S(A\times S(B))}
      \quad\
    \end{array}
  \]
  \caption{Type system.}
  \label{tab:UTS}
\end{table}

Rules $\tax$, $\tax_{\vec 0}$, $\tax_{\ket 0}$, $\tax_{\ket 1}$, $\preceq$,
$S_I^\alpha$, $S_I^+$, $\tif$, $\Rightarrow_I$, $\Rightarrow_E$ and
$\Rightarrow_{ES}$ remain unchanged. Rule $S_E$ types the generalized
projection: we force the term to be measured to be typed with a type of the form
$S(\B^n)$, and then, after measuring the first $j$ qubits, the new type becomes
$\B^j\times S(\B^{n-j})$, that is, we remove the superposition mark $S(\cdot)$
from the first $j$ types in the tensor product. Rules $W$ and $C$ are updated
only to act on types $\B^n$ instead of just $\B$.

Rules $\times_I$, $\times_{E_r}$ and $\times_{E_l}$ are the standard
introduction and eliminations for lists, however, the elimination is only
allowed on terms with type $\B^n$ (basis qubits). Rules $\Uparrow_r$ and
$\Uparrow_\ell$ type the castings. We only need to allow to cast a superposed type
into a superposed tensor product, thanks to the subtyping relation. Indeed, for
example, to cast $t\times\pair rs$ from type $A\times S(B)$ to type $S(A\times
B)$, we can use the subtyping first to assign the type $S(A\times S(B))$ to
$t\times\pair rs$.

To update the rewrite system, we need to update the function $\mathsf{min}$ to
include products, as follows.
\begin{align*}
  \mathsf{min}(\B) &= \B\\
  \mathsf{min}(\Psi\Rightarrow A) &=\Psi\Rightarrow\mathsf{min}(A)\\
  \mathsf{min}(A\times B) &=\mathsf{min}(A)\times\mathsf{min}(B)\\
  \mathsf{min}(S(A)) &=\mathsf{min}(A)
\end{align*}

The updated rewrite system is given in Table~\ref{tab:URS}. It includes all the
rules from Table~\ref{tab:RSNoX} plus the rules for lists: $\rhead$ and $\rtail$
and the typing casts rules, which normalize superpositions to sums of base
terms, while update the types.

\begin{table}\centering
  \scalebox{.94}{
    \hspace{-3mm}\parbox{\textwidth}{\vspace{-4mm}
      \[
        \begin{array}{c|r@{\ }lr}
          \multirow{2}{*}{\titulo{8.4mm}{Beta}}
          & \textrm{If $b$ has type $\B^n$ and $b\in\basis$, } (\lambda x^{\B^n}.t)b &\lra (b/x)t & \rbetab\\
          & \textrm{If $u$ has type $S(\gB)$, } (\lambda x^{S(\gB)}.t)u &\lra (u/x)t & \rbetan\\\hline
          \multirow{2}{*}{\titulo{8.4mm}{If}}
          &\ite{\ket 1}tr &\lra t &\riftrue\\
          &\ite{\ket 0}tr &\lra r &\riffalse\\\hline
          \multirow{6}{*}{\titulo{25.2mm}{Linear distribution}}
          &\textrm{If $t$ has type $\B^n\Rightarrow A$, } t\pair uv &\lra \pair{tu}{tv} & \rlinr\\
          &\textrm{If $t$ has type $\B^n\Rightarrow A$, } t(\alpha.u) &\lra\alpha.tu & \rlinscalr\\
          &\textrm{If $t$ has type $\B^n\Rightarrow A$, } t\z[\B^n] &\lra\z[\mathsf{min}(A)] &\rlinzr\\
          &\pair tuv &\lra\pair{tv}{uv} & \rlinl\\
          &(\alpha.t)u &\lra\alpha.tu &\rlinscall\\
          &\z[\B^n\Rightarrow A]t &\lra\z[\mathsf{min}(A)] &\rlinzl\\\hline
          \multirow{10}{*}{\titulo{37.8mm}{Vector space axioms}}
          &\pair\z t &\lra t &\rneut\\
          &1.t &\lra t &\runit\\
          &\textrm{If $t$ has type $A$, }0.t &\lra\z[\mathsf{min}(A)] &\rzeros\\
          &\alpha.\z &\lra\z[\mathsf{min}(A)] &\rzero\\
          &\alpha.(\beta.t) &\lra (\alpha\beta).t &\rprod\\
          &\alpha.\pair tu &\lra\pair{\alpha.t}{\alpha.u} &\rdists\\
          &\pair{\alpha.t}{\beta.t} &\lra(\alpha+\beta).t &\rfact\\
          &\pair{\alpha.t}t &\lra (\alpha+1).t &\rfacto\\
          &\pair tt &\lra 2.t &\rfactt\\
          &\textrm{If $A\neq\mathsf{min}(A)$, then }\z &\lra\z[\mathsf{min}(A)] & \rzeroS \\\hline
          \multirow{2}{*}{\titulo{8.4mm}{=}}
          &\pair tr &=_{AC}\pair rt & \rcomm\\
          &\pair{\pair tr}s &=_{AC} \pair t{\pair rs} & \rassoc\\\hline
          \multirow{2}{*}{\titulo{8.4mm}{Lists}}
          &\textrm{If $h\neq u\times v$ and $h\in\basis$, }\head\ h\times t &\lra h
                                                                                                  & \rhead\\
          &\textrm{If $h\neq u\times v$ and $h\in\basis$, }\tail\ h\times t
                                                                                     &\lra t & \rtail\\\hline
          \multirow{10}{*}{\titulo{52mm}{\mbox{\hspace{1.3cm}Typing casts}}}
          & \Uparrow_r \pair rs\times u &\lra\pair{\Uparrow_r r\times u}{\Uparrow_r s\times u} &\rdistsumr\\
          &\Uparrow_\ell u\times\pair rs &\lra\pair{\Uparrow_\ell u\times r}{\Uparrow_\ell u\times s} &\rdistsuml\\
          &\Uparrow_r (\alpha.r)\times u &\lra \alpha.\Uparrow_r r\times u &\rdistscalr\\
          &\Uparrow_\ell u\times(\alpha.r) &\lra \alpha.\Uparrow_r u\times r &\rdistscall\\
          &\textrm{If $u$ has type $B$, }\Uparrow_r \z\times u &\lra\z[\mathsf{min}(A\times B)] &\rdistzr\\
          &\textrm{If $u$ has type $A$, }\Uparrow_\ell u\times\z[B] &\lra\z[\mathsf{min}(A\times B)] &\rdistzl\\
          &\Uparrow\pair tu&\lra\pair{\Uparrow t}{\Uparrow u} &\rdistcasum\\
          &\Uparrow(\alpha.t)&\lra\alpha.\Uparrow t &\rdistcascal\\
          &\textrm{If $u\in\basis$, }\Uparrow_r u\times v&\lra u\times v &\rcaneutr\\
          &\textrm{If $v\in\basis$, }\Uparrow_\ell u\times v&\lra u\times v &\rcaneutl \\\hline
          \multirow{2}{*}{\titulo{12.5mm}{Projection}}
          &\multicolumn{2}{c}{
            \pi_j(\sum\limits_{i=1}^n
            \may[\alpha_i]\prod\limits_{h=1}^mb_{hi})\lrap
            (\prod\limits_{h=1}^jb_{hk})\times\sum\limits_{i\in P}\left(\frac{\alpha_i}{\sqrt{\sum\limits_{r\in P}|\alpha_r|^2}}\right)\prod\limits_{h=j+1}^mb_{hi}} & \rproj\\
          &\multicolumn{3}{c}{\parbox{11cm}{\footnotesize where
            $k\leq n$;
            $P\subseteq\mathbb N^{\leq n}$ s.t.~$\forall i\in P$, $\forall h\leq j$, $b_{hi}= b_{hk}$;
            $p=\sum\limits_{i\in P}\frac{|\alpha_i|^2}{\sum_{r=1}^n|\alpha_r|^2}$;
            $\forall i, b_i=\ket 0$ or $b_i=\ket 1$;
            $\sum_{i=1}^n\may[\alpha_i]\prod_{h=1}^mb_{hi}$ is a normal term; and
            if an $\alpha_k$ is absent, $|\alpha_k|^2=1$.
            }}\\\hline
          \multirow{2}{*}{\titulo{24mm}{Contextual rules}}
          &\multicolumn{3}{l}{\textrm{ If $t\lrap u$, then}}\\
          &\multicolumn{3}{c}{
            \begin{array}{c@{}cc}
              tv \lrap uv &
                            (\lambda x^{\B^n}.v)t\lrap(\lambda x^{\B^n}.v)u &
                                                                              \pair tv\lrap\pair uv \\
              \alpha.t\lrap\alpha.u&
                                     \pi_j t\lrap\pi_j u&
                                                          t\times v\lrap u\times v \\
              v\times t\lrap v\times u&
                                        \Uparrow_r t\lrap\Uparrow_r u&
                                                                       \Uparrow_\ell t\lrap\Uparrow_\ell u \\
              \head\ t\lrap\head\ u&
                                     \tail\ t\lrap\tail\ u&
                                                            \ite trs\lrap\ite urs
            \end{array}
          }
        \end{array}
      \]
  \centering
  All the terms are considered to be closed (i.e. reduction is weak).
    }
  }
  \caption{Rewrite system.}
  \label{tab:URS}
\end{table}

The rule $\rproj$ has been updated to account for multiple qubits systems. It
normalizes (as in norm $1$) the scalars on the obtained term. The call-by-base
beta rule \rbetab, and the contextual rule admitting reducing the argument on an
application for the call-by-base abstraction are updated to allow for
abstractions expecting arguments of type $\B^n$ instead of just $\B$ (that is, any
base qubit type).

The first six rules in the group typing casts---\rdistsumr, \rdistscalr, and
\rdistzr, and their analogous \rdistsuml, \rdistscall, and \rdistzl---deal with
the distributivity of sums, scalar product and null vector respectively. If we
ignore the type cast $\Uparrow$ on each rule, these rules are just
distributivity rules. For example, rule \rdistsumr\ acts on the term $\pair
rs\times u$, distributing the sum with respect to the tensor product, producing
$\pair{r\times u}{s\times u}$ (distribution to the right). However, the term
$\pair rs\times u$ may have type $S(A)\times B$, $S(A)\times S(B)$ or
$S(A\times B)$, while, among those, the term $\pair{r\times u}{s\times u}$
can only have type $S(A\times B)$. Hence, we cannot reduce the first term to
the second without losing subject reduction. Instead, we need to cast the term
explicitly to the valid type in order to reduce.

The next two rules, \rdistcasum\ and \rdistcascal, distribute the cast over sums
and scalars. For example $\Uparrow_r \pair{(\alpha.\ket 1)\times\ket 0} {(\beta.\ket 0)\times\ket 1}$ reduces by rule $\rdistcasum$
to $\pair { \Uparrow_r (\alpha.\ket 1)\times\ket 0 }{\Uparrow_r (\beta.\ket 0)\times\ket 1 }$, and hence,
the distributivity rule can act.
The last two rules in the group, \rcaneutr\ and \rcaneutl, remove the cast when
it is not needed anymore. For example
\begin{align*}
  \Uparrow_r (\alpha.\beta.\ket 0)\times\ket 1 & \red{\rdistscalr} \alpha.\Uparrow_r (\beta. \ket 0)\times\ket 1\\
                                               &\red{\rdistscalr} \alpha. \beta. \Uparrow_r \ket 0\times\ket 1\\
                                               &\red{\rcaneutr} \alpha.\beta.\ket 0\times\ket 1
\end{align*}

The measurement rule \rproj\ is updated to measure the first $j$ qubits. Hence,
a $n$-qubits in normal form (that is, a sum of products of qubits with or without
a scalar in front), for example, the term
\[
  2.(\ket 0\times\ket 1\times\ket 1)+ \ket 0\times\ket 1\times\ket 0 + 3.(\ket 1\times\ket 1\times\ket 1)
\]
can be measured and will produce a $n$-qubits where the
first $j$ qubits are the same and the remaining are untouched, with its scalars
changed to have norm $1$. In this 3-qubits example, measuring the first two can
produce either
\[\ket 0\times\ket 1\times\pair{\frac 2{\sqrt{5}}.\ket 1}{\frac 1{\sqrt{5}}.\ket
    0}\]
or
\[\ket 1\times\ket 1\times(1.\ket 1)\]
The
probability of producing the first is
$\tfrac{|2|^2}{(|2|^2+|1|^2+|3|^2)}+\tfrac{|1|^2}{(|2|^2+|1|^2+|3|^2)} =
\tfrac{5}{14}$ and the probability of producing the second is
$\tfrac{|3|^2}{(|2|^2+|1|^2+|3|^2)}=\tfrac{9}{14}$.

Remark, to conclude, that since the calculus presented in this paper is
call-by-base for the functions expecting a non-linear argument, it avoids a
well-known problem in others $\lambda$-calculi with a linear logic type system
including modalities. To illustrate this problem, consider the following typing
judgment:
\[
  y:S(\B) \vdash (\lambda x^{\B}.x\times x) (\pi y):S(\B)\times S(\B) \
\]
If we allow to $\beta$-reduce this term, we would obtain $(\pi
y)\times(\pi y)$ which is not typable in the context $y:S(\B)$. A standard
solution to this problem is illustrated in~\cite{Barber96}, where the terms that
can be cloned are distinguished by a mark, and used in a $\textsl{let}$
construction, while non-clonable terms are used in $\lambda$ abstractions. Since
this term will not beta reduce in our calculus, but project first, the problem
is not present neither in our case.

\section{Subject reduction}\label{sec:SR}
Thanks to the explicit casts, the resulting system has the Subject Reduction
property (Theorem~\ref{thm:SR}), that is, the typing is preserved by
weak-reduction (i.e.~reduction on closed terms). The proof of this theorem is
not trivial, specially due to the complexity of the system itself.

The two main lemmas in the proof, the generation lemma (Lemma~\ref{lem:generation}) and the substitution
lemma (Lemma~\ref{lem:substitution}), are stated below, together with a few paradigmatic cases of the proof.

We denote by $|\Gamma|$ to the multiset of types in $\Gamma$. For example,
\[
  |x:\B,y:\B,z:S(\B)|=\{\B,\B,S(\B)\}
\]

\begin{lemma}
  \label{lem:lqBn}
  If $A\preceq\B^n$, then $A=\B^n$
\end{lemma}
\begin{proof}
  By rule inspection.
\end{proof}

\begin{lemma}\label{lem:noBasisPrec}
  If $S(A)\preceq B$, then there exists $C$ such that $B=S(C)$
\end{lemma}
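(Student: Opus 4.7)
The plan is to proceed by induction on the derivation of $S(A) \preceq B$, doing a case analysis on the last subtyping rule applied. The key observation is that the left-hand side $S(A)$ matches the shape constraints of only a small number of rules, so most cases are immediately impossible.

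Concretely, I would examine each of the six subtyping rules in turn. The rule $X \preceq S(X)$ applied with $X = S(A)$ forces $B = S(S(A))$, so we can take $C = S(A)$. The rule $S(S(X)) \preceq S(X)$ gives $B = S(X)$ directly, so $C = X$ works. The rule $S(X) \preceq S(Y)$, with premise $X \preceq Y$, yields $B = S(Y)$, so $C = Y$. The rule $\gB \Rightarrow X \preceq \gB \Rightarrow Y$ is impossible because its conclusion has a left-hand side of shape $\gB \Rightarrow X$, which cannot equal $S(A)$. The two tensor rules $X \otimes Z \preceq Y \otimes Z$ and $Z \otimes X \preceq Z \otimes Y$ are ruled out for the same syntactic reason: their left-hand sides are tensor products, not superposition types.

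If the subtyping relation is also closed under reflexivity and transitivity (as is implicitly needed elsewhere in the paper), I would add two extra cases: reflexivity gives $B = S(A)$, hence $C = A$; for transitivity with intermediate type $B'$, the inductive hypothesis applied to $S(A) \preceq B'$ produces $B' = S(C')$ for some $C'$, and then applied to $S(C') \preceq B$ produces $B = S(C)$ for some $C$.

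I do not expect any real obstacle here: the statement is essentially a syntactic shape preservation property for $\preceq$, and the proof is a mechanical inspection of the rules. The only thing to be careful about is not to overlook the tensor extensions added in Section~\ref{sec:tensor}, but both of those rules preserve the tensor shape on both sides and so never introduce a superposition type on the right from a non-superposition on the left, nor vice versa.
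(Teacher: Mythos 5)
Your proposal is correct and matches the paper's proof, which simply states that the lemma follows by a straightforward analysis of the definition of $\preceq$; your rule-by-rule case analysis (including the reflexivity and transitivity cases, which the paper leaves implicit but uses elsewhere) is exactly that analysis spelled out.
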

\begin{proof}
  Straightforward induction on the definition of $\preceq$.
\end{proof}

\begin{lemma}\label{lem:SAxBmE}
  If $S^n(A\times B)\preceq C$, then there exist $m,D,E$ such that
  $C=S^m(D\times E)$ with $A\preceq D$ and $B\preceq E$.
\end{lemma}
\begin{proof}
  By induction on the derivation of $S^n(A\times B)\preceq C$.
\end{proof}
\begin{lemma}\label{lem:minlqA}
          For any type $A$, we have $\mathsf{min}(A)\preceq A$.
        \end{lemma}
        \begin{proof}
          By cases over $A$.
        \end{proof}
        \begin{lemma}\label{lem:minS}
          If $A\preceq B$, then $\mathsf{min}(A)=\mathsf{min}(B)$.
        \end{lemma}
        \begin{proof}
          By induction on the derivation of $A\preceq B$.
        \end{proof}

        \begin{lemma}\label{lem:min}
          If $\Gamma\vdash t:A$ and $\Gamma\vdash t:B$, then $\mathsf{min}(A)=\mathsf{min}(B)$.
        \end{lemma}
        \begin{proof}
          Let $\pi$ be the derivation tree of $\Gamma\vdash t:A$ and $\pi'$ the
          derivation tree of $\Gamma\vdash t:B$, we proceed by induction on
          $|\pi|+|\pi'|$, where $|\cdot|$ is the size of the derivation tree.

          We only give three case.
          
          \begin{itemize}
          \item If $t=\pair{r}{s}$ and both derivations end with rule
            $S^+_I$, then $A=S(A')$, $B=S(B')$, 
            $\Gamma_1\vdash r:A'$,
            $\Gamma_2\vdash s:A'$,
            $\Gamma_1\vdash r:B'$,
            and $\Gamma_2\vdash s:B'$ where $\Gamma_1$ is defined only on $FV(r)$ and $\Gamma_2$
            on $FV(s)$.
            By the induction hypothesis, $\mathsf{min}(A')=\mathsf{min}(B')$,
            hence, $\mathsf{min}(A)=\mathsf{min}(B)$.
          \item If $t=rs$, $\pi$ ends with $\Rightarrow_E$ and $\pi'$ with
            $\Rightarrow_{ES}$, then 
            $\Gamma_1\vdash r:\Psi\Rightarrow A$, $\Gamma_2\vdash s:\Psi$,
            $\Gamma_1\vdash r:S(\Psi'\Rightarrow A')$, and $\Gamma_2\vdash
            s:S(\Psi')$, with $B=S(A')$.
            By the induction hypothesis,
            $\Psi\Rightarrow\mathsf{min}(A)=\Psi'\Rightarrow\mathsf{min}(A')$,
            hence $\mathsf{min}(A)=\mathsf{min}(A')=\mathsf{min}(B)$.
          \item If $\pi$ ends with rule $\preceq$, then $\Gamma\vdash t:A'$ with
            $A'\preceq A$. By the induction hypothesis, we have
            $\mathsf{min}(A')=\mathsf{min}(B)$, and by Lemma~\ref{lem:minS},
            $\mathsf{min}(A')=\mathsf{min}(A)$.
            Hence, $\mathsf{min}(A)=\mathsf{min}(B)$.
            \qed
          \end{itemize}
        \end{proof}
\begin{lemma}
  [Generation lemmas]~
  \label{lem:generation}
  \begin{itemize}
  \item If $\Gamma\vdash x:A$, then $x:\gB\in\Gamma$,
    $|\Gamma|\setminus\{\gB\}\subseteq\bqtypes$, and $\gB\preceq A$.
  \item If $\Gamma\vdash \z[B]:A$, then $S(B)\preceq A$ and
    $|\Gamma|\subseteq\bqtypes$.
  \item If $\Gamma\vdash\ket 0:A$, then $\B\preceq A$ and
    $|\Gamma|\subseteq\bqtypes$.
  \item If $\Gamma\vdash\ket 1:A$, then $\B\preceq A$ and
    $|\Gamma|\subseteq\bqtypes$.
  \item If $\Gamma\vdash \alpha.t:A$, then $\Gamma'\vdash t:B$, with
    $\Gamma'\subseteq\Gamma$, $|\Gamma\setminus\Gamma'|\subseteq\bqtypes$ and
    $S(B)\preceq A$.
  \item If $\Gamma\vdash\pair tu:A$, then $\Gamma_1\vdash t:B$ and
    $\Gamma_2\vdash u:B$, with $S(B)\preceq A$ and
    $\Gamma_1\cup\Gamma_2\subseteq\Gamma$,
    $|\Gamma\setminus(\Gamma_1\cup\Gamma_2)|\subseteq\bqtypes$.
  \item If $\Gamma\vdash\pi_j t:A$, then $\Gamma'\vdash t:S(\B^n)$, with
    $\Gamma'\subseteq\Gamma$, $|\Gamma\setminus\Gamma'|\subseteq\bqtypes$ and
    $\B^j\times S(\B^{n-j})\preceq A$.
  \item If $\Gamma\vdash\ite{}{t}{r}:A$, then $\Gamma\vdash t:B$, $\Gamma\vdash
    r:B$, with $\B\Rightarrow B\preceq A$ and $|\Gamma|\subseteq\bqtypes$.
    Moreover, the derivation trees of $\Gamma\vdash t:B$ and $\Gamma\vdash r:B$
    are strictly smaller than the derivation tree of $\Gamma\vdash\ite{}tr:A$.
  \item If $\Gamma\vdash\lambda x^{\gB}.t:A$, then $\Gamma'',x:\gB\vdash t:B$,
    with $\Gamma'\subseteq\Gamma$, $\gB\Rightarrow B\preceq A$ and
    $|\Gamma\setminus\Gamma'|\subseteq\bqtypes$. Moreover, the derivation tree of
    $\Gamma'',x:\gB\vdash t:B$ is strictly smaller than the derivation tree of
    $\Gamma\vdash\lambda x^{\gB}.t:A$.
  \item If $\Gamma\vdash tu:A$, then one of the following possibilities happens:
    \begin{itemize}
    \item $\Gamma_1\vdash t:\gB\Rightarrow B$ and $\Gamma_2\vdash u:\gB$, with
      $B\preceq A$, or
    \item $\Gamma_1\vdash t:S(\gB\Rightarrow B)$ and $\Gamma_2\vdash u:S(\gB)$,
      with $S(B)\preceq A$.
    \end{itemize}
    In both cases, $\Gamma_1\cup\Gamma_2\subseteq\Gamma$ and
    $|\Gamma|\setminus|\Gamma_1\cup\Gamma_2|\subseteq\bqtypes$.
  \item If $\Gamma\vdash t\times u:A$, then $\Gamma_1\vdash t:B$ and
    $\Gamma_2\vdash u:C$, with $\Gamma_1\cup\Gamma_2\subseteq\Gamma$,
    $|\Gamma\setminus(\Gamma_1\cup\Gamma_2)|\subseteq\bqtypes$ and $B\times
    C\preceq A$.
  \item If $\Gamma\vdash \head~t:A$, then $\Gamma'\vdash t:\B^n$, with
    $\Gamma'\subseteq\Gamma$, $|\Gamma\setminus\Gamma'|\subseteq\bqtypes$ and
    $\B\preceq A$.
  \item If $\Gamma\vdash \tail~t:A$, then $\Gamma'\vdash t:\B^n$, with
    $\Gamma'\subseteq\Gamma$, $|\Gamma\setminus\Gamma'|\subseteq\bqtypes$ and
    $\B^{n-1}\preceq A$.
  \item If $\Gamma\vdash\Uparrow_r t:A$, then $\Gamma'\vdash t:S(S(B)\times C)$, with
    $\Gamma'\subseteq\Gamma$, $|\Gamma\setminus\Gamma'|\subseteq\bqtypes$ and
    $S(B\times C)\preceq A$.
  \item If $\Gamma\vdash\Uparrow_\ell t:A$, then $\Gamma'\vdash t:S(B\times S(C))$, with
    $\Gamma'\subseteq\Gamma$, $|\Gamma\setminus\Gamma'|\subseteq\bqtypes$ and
    $S(B\times C)\preceq A$.
  \end{itemize}
\end{lemma}
\begin{proof}
  First notice that if $\Gamma\vdash t:A$ is derivable, then $\Delta\vdash t:B$
  is derivable, with $\Gamma\subseteq\Delta$ and
  $|\Delta\setminus\Gamma|\subseteq\bqtypes$ (because of rule $W$) and $A\preceq
  B$, (because of rule $\preceq$). Notice also that those are the only typing
  rules changing the sequent without changing the term on the sequent. Rules
  $\Rightarrow_E$ and $\Rightarrow_{ES}$ and are straightforward to check. All
  the other rules are syntax directed: one
  rule for each term. Therefore, the lemma is proven by a straightforward rule
  by rule analysis.

  With an analogous reasoning, the condition on the derivation trees stated in cases
  $\Gamma\vdash\ite{}tr:A$ and $\Gamma\vdash\lambda x^\Psi.t:A$ are also straightforward.
  \qed
\end{proof}

\begin{corollary}[Simplification]\label{cor:simplification}~
  \begin{enumerate}
  \item If $\vdash\pair tu:A$, then $\vdash t:A$ and $\vdash u:A$.
  \item If $\vdash\pair tu:A$, then $A=S(B)$.
  \item If $\vdash \alpha.t:A$, then $\vdash t:A$.
  \item If $\vdash\alpha.t:A$, then $\vdash\beta.t:A$.
  \item If $\vdash \alpha.t:A$, then $A=S(B)$.
  \end{enumerate}
\end{corollary}
\begin{proof}
  Cf.~\ref{ap:SR}.\qed
\end{proof}

\begin{corollary}
  \label{cor:basisTermNoSup}
  If $b\in\basis$ and $\vdash b:S(A)$, then $\vdash b:A$.
\end{corollary}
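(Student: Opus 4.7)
The plan is to proceed by a case analysis on the structure of the basis term $b$. Since the judgement $\vdash b:S(A)$ is closed, the case $b=x$ cannot arise, so $b$ is one of $\ket 0$, $\ket 1$, an abstraction $\lambda x:\gB~t$, or a tensor $b_1\otimes b_2$ of basis terms. In each case the generation lemma (Lemma~\ref{lem:generation}) produces a derivation of $\vdash b:C$ together with $C\preceq S(A)$, where $C$ is the ``intrinsic'' type of $b$: respectively $\B$, $\B$, $\gB\Rightarrow B$ (with a premise $x:\gB\vdash t:B$), and $B_1\otimes B_2$ (with premises $\vdash b_i:B_i$). Crucially, in every case $C$ is not of the form $S(D)$.

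The crux of the argument is a subtyping inversion: if $C\preceq S(A)$ and $C$ is not of the form $S(D)$, then $C\preceq A$ is derivable (when $C=A$ this is read as the trivial ``empty'' subtyping derivation). This is shown by induction on the derivation of $C\preceq S(A)$, inspecting the rules of Table~\ref{tab:UTS} whose right-hand side has the form $S(\cdot)$. The rules $S(S(A))\preceq S(A)$ and $S(A')\preceq S(B')$ from $A'\preceq B'$ both force the left-hand side to itself be of the form $S(D)$, which is excluded by hypothesis. The tensor subtyping rules cannot produce a right-hand side of the form $S(\cdot)$ at all. The only remaining possibility is the rule $A\preceq S(A)$, which yields $C=A$ and hence $C\preceq A$ trivially.

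Once the inversion is established, each case concludes quickly: using the data returned by generation I rebuild a derivation of $\vdash b:C$ by reapplying the appropriate introduction rule ($\tax_{\ket 0}$, $\tax_{\ket 1}$, $\Rightarrow_I$, or $\otimes_I$), and then apply rule $\preceq$ with $C\preceq A$ to conclude $\vdash b:A$. No structural induction on $b$ is actually needed, since the tensor case is handled by directly reconstructing $\vdash b_1\otimes b_2:B_1\otimes B_2$ from the premises $\vdash b_i:B_i$ and then subtyping to $A$.

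The delicate part is the subtyping inversion itself, since in the surface syntax the typing rule $\preceq$ may be chained arbitrarily many times, each chaining potentially wrapping a fresh $S(\cdot)$ around the type. The induction has to unwind any such stack of wrappers down to the rule $A'\preceq S(A')$ which created the outermost $S$, so that this outer wrapper can be ``peeled off''; this bookkeeping is the main obstacle, but it is straightforward thanks to the very limited number of subtyping rules producing a superposition type on the right.
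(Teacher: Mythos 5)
Your proposal is correct and follows essentially the same route as the paper: a case analysis on the shape of $b$ (the paper phrases it as an induction on $b$ but, as you observe, never uses the induction hypothesis), invoking Lemma~\ref{lem:generation} to obtain $\vdash b:C$ with $C\preceq S(A)$ for a non-$S(\cdot)$ type $C$, and then concluding $C\preceq A$ and applying rule $\preceq$. The only difference is that you make explicit the subtyping inversion ``$C\preceq S(A)$ and $C$ not of the form $S(D)$ implies $C\preceq A$,'' which the paper asserts without proof; your sketch of that inversion is sound, provided the induction also covers the implicit transitive closure of $\preceq$ that the paper uses elsewhere.
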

\begin{proof}
  By cases analysis on $b$. Cf.~\ref{ap:SR}. \qed
\end{proof}

\begin{lemma}
  \label{lem:linearContext}
  If $\Gamma\vdash t:A$ and $FV(t)=\emptyset$, then $|\Gamma|\subseteq\bqtypes$.
\end{lemma}
\begin{proof}
  If $FV(t)=\emptyset$ then $\vdash t:A$. If $\Gamma\neq\emptyset$, the only way
  to derive $\Gamma\vdash t:A$ is by using rule $W$ to form $\Gamma$, hence
  $|\Gamma|\subseteq\bqtypes$.
  \qed
\end{proof}

\begin{lemma}[Substitution lemma]\label{lem:substitution}
  Let $FV(u)=\emptyset$, then if $\Gamma,x:\gB\vdash t:A$, $\Delta\vdash u:\gB$,
  where if $\gB=\B^n$ then $u\in\basis$, we have $\Gamma,\Delta\vdash (u/x)t:A$.
\end{lemma}
\begin{proof}
  Notice that due to Lemma~\ref{lem:linearContext}, $|\Delta|\subseteq\bqtypes$,
  hence, it suffices to consider $\Delta=\emptyset$. By structural
  induction on $t$. Cf.~\ref{ap:SR}.
  \qed
\end{proof}

Since the strategy is weak, subject reduction is proven for closed terms.

\begin{theorem}
  [Subject reduction on closed terms]
  \label{thm:SR}
  For any closed terms $t$ and $u$ and type $A$, if $t\lra[p] u$ and $\vdash
  t:A$, then $\vdash u:A$.
\end{theorem}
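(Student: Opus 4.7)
The plan is to proceed by induction on the derivation of $t\lra[p] u$, which reduces to a case analysis on the rewrite rule applied at the root, together with an inductive step for each of the contextual-closure rules at the bottom of Table~\ref{tab:URS}. For a contextual case, the induction hypothesis applied to the reducing subterm, combined with Lemma~\ref{lem:generation} used to expose the type assigned to that subterm in the derivation of $\vdash t:A$, lets us substitute the reduct in place and reassemble the same derivation with the same final type $A$.

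For each non-contextual rule, the strategy is uniform: apply the generation lemma to $\vdash t:A$ to extract typings of the immediate subterms; from these, build a derivation $\vdash u:A_0$ for some $A_0\preceq A$; and conclude with rule $\preceq$. The two beta rules $\rbetab$ and $\rbetan$ follow directly from the Substitution Lemma~\ref{lem:substitution}, whose hypothesis ``if $\gB=B$ then $u\in\basis$'' matches precisely the side condition on the beta rule triggered. The if-then-else rules, vector-space axioms, head/tail rules, and linear-distribution rules are handled by generation together with Corollary~\ref{cor:simplification}: for instance, for $t\pair uv\lra\pair{tu}{tv}$, generation gives $\vdash t:\B\Rightarrow A_0$ and $\vdash\pair uv:S(\B)$ with $S(A_0)\preceq A$, the corollary yields $\vdash u:\B$ and $\vdash v:\B$, and two instances of $\Rightarrow_E$ followed by $S_I^+$ produce $\vdash\pair{tu}{tv}:S(A_0)$.

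The subtlest cases are the eight typing-cast rules and the generalized projection $\rproj$. For the cast rules, the last item of Lemma~\ref{lem:generation} is tailored to the task: it guarantees that whenever $\vdash\cas BC t:A$ and $t$ is a sum or a scalar product, the cast distributes over the subterms while preserving the existence of a suitable $\Uparrow$-rule on each piece, so that the reduct retypes to $S(C)$. The neutrality rules $\rcaneutr$ and $\rcaneutl$ require only that a base-term tensor can be assigned the already-``unlifted'' tensor type, which follows from generation on $\otimes_I$. For $\rproj$, generation yields $\vdash t:Q_n^S$ with $Q_n^{S\setminus\{1,\ldots,j\}}\preceq A$; the reduct is the tensor of the $j$ measured base qubits, each of type $\B$, with a normalized linear combination over the remaining factors, whose type is precisely $Q_n^{S\setminus\{1,\ldots,j\}}$ by Definition~\ref{def:Snotation}.

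The main obstacle throughout is the bookkeeping between subtyping and casts: generation only reconstructs derivations up to $\preceq$, and cast terms admit only a narrow family of valid source/target type pairs, so one must check in each distributive-cast case that the $\Uparrow$-rule instance used on the reduct matches the one used on the redex. This is exactly what the ``moreover'' clause of the last item of Lemma~\ref{lem:generation} is designed to enable, which is why that clause is stated separately rather than subsumed in the rest of the generation lemma.
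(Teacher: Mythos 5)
Your overall architecture is exactly the paper's: induction on the rewrite relation, the generation lemma to decompose the typing of each redex, the substitution lemma for \rbetab\ and \rbetan, the ``moreover'' clause of the last generation item for the cast rules, and reassembly of the reduct's derivation up to $\preceq$. However, the one case you work out in detail, \rlinr, contains a step that fails --- and it fails at precisely the point this rule exists to handle. From $\vdash\pair uv:S(\B)$, Corollary~\ref{cor:simplification} gives $\vdash u:S(\B)$ and $\vdash v:S(\B)$, \emph{not} $\vdash u:\B$ and $\vdash v:\B$: the summands may themselves be superpositions (e.g.\ $u=\pair{\ket 0}{\ket 1}$), typable at $S(\B)$ but not at $\B$. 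If $u$ and $v$ could always be retyped at $\B$ there would be no need for linear distribution at all, since \rbetab\ could fire after reducing the argument. Consequently your two instances of $\Rightarrow_E$ do not typecheck, because that rule requires the argument to carry exactly the qubit type expected by the arrow. Your quoted output of generation is also not one of the two shapes the lemma actually produces for applications.

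The repair is what the paper does: first observe that the $\Rightarrow_E$ branch of generation is impossible here, since the side condition forces the domain to be a base qubit type while a sum can only be typed with an $S(\cdot)$ type (Corollary~\ref{cor:simplification} together with Lemma~\ref{lem:noBasisPrec}); so generation must land in the $\Rightarrow_{ES}$ branch, giving $\vdash t:S(\gB\Rightarrow C)$ and $\vdash\pair uv:S(\gB)$ with $S(C)\preceq A$. Then $\vdash u:S(\gB)$ and $\vdash v:S(\gB)$ by the corollary, two instances of $\Rightarrow_{ES}$ yield $\vdash tu:S(C)$ and $\vdash tv:S(C)$, and $S_I^+$ followed by $\preceq$ (via $S(S(C))\preceq S(C)\preceq A$) closes the case. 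The same correction is needed in \rlinscalr, \rlinl\ and \rlinscall, all of which must go through $\Rightarrow_{ES}$ rather than $\Rightarrow_E$. One further small point: for \rcaneutr\ you need more than ``generation on $\otimes_I$'' --- you need that any base term typable at $S(A)$ is typable at $A$ (the paper's Corollary~\ref{cor:basisTermNoSup}, proved by its own induction on base terms). The remainder of your plan matches the paper's proof.
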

\begin{proof}
  By induction on the rewrite relation. Cf.~\ref{ap:SR}.
  \qed
\end{proof}

\section{Strong normalization}\label{sec:SN}
In this section we adapt Tait's proof of strong normalization of the simply
typed lambda calculus to show the same property in our calculus.

Let $|t|$ be the size of the longest reduction sequence started in $t$ and
$\SN=\{t\mid |t|<\infty\}$. Also, let $t$ of type $A$, then $\Red t=\{r:A\mid
t\lrap r\}$. Notice that Theorem~\ref{thm:SR} proves the Subject Reduction only
for closed terms, that is why the definition of $\Red t$ requires a condition on types.

\begin{definition}
  We define the following measure $\size t$ on terms:
  \[
    \begin{array}{r@{~=~}l@{\qquad}r@{~=~}l}
      \size{x} & 0 &
      \size{tu} &  (3 \size{t} + 2)(3 \size{u} + 2) \\
      \size{\z} &  0 &
                        \size{t \times u} &  \size{t} + \size{u} \\
      \size{\ket{0}} &  0 &
                             \size{\head\ {t}} &  \size{t} + 1 \\
      \size{\ket{1}} &  0 &
                             \size{\tail\ {t}} &  \size{t} + 1 \\
      \size{\lambda x^\Psi.t} &  \size{t} &
                                             \size{\pi_{j}{t}} &  \size{t} \\
      \size{\pair tr} &  \size{t} + \size{r} + 2 &
                                                    \size{\ite{}{t}{r}} &  \size{t} + \size{r} \\
      \size{\alpha . t} &  2 \size{t} + 1 &
                                                       \size{\Uparrow t} &  \size{t}
    \end{array}
  \]
\end{definition}

\begin{lemma}\label{lem:size}
  If \( t \lra r \) by any of the rules in the groups linear distribution,
  vector space axioms or lists, or their contextual closure, then $\size r \geq \size t$. Moreover, $\size r
  = \size t$ if and only if the rule is $\rzeroS$.
\end{lemma}
\begin{proof}
  By induction and rule by rule analysis. Cf.~\ref{ap:SN}.
  \qed
\end{proof}

\begin{lemma}\label{lem:ri_in_snset_implies_sum_ri_in_snset}
  If for every \( i \in \{1, \ldots, n\} \) we have \( r_i \in \SN \),
  then \( \sum_{i = 1}^{n} \may r_i \in \SN \).
\end{lemma}
\begin{proof}
  By induction on the lexicographic order of \( (\sum_{i = 1}^{n} |r_i|,
  \size{\sum_{i = 1}^{n} \may r_i}) \). Cf.~\ref{ap:SN}. \qed
\end{proof}

\begin{lemma}\label{lem:t_implies_proj_t}
  If \( t \in \SN \), then \( \pi_{j}{t} \in \SN \).
\end{lemma}

\begin{proof}
  By induction on $|t|$. Cf.~\ref{ap:SN}. \qed
\end{proof}

From now on, $\sum_{i=1}^0 t_i=\z$ where $A$ can be determined by the context.

As usual, we associate to each type $A$ a set of strongly normalising terms
$\interp A$. However, since reduction depends on types, these sets must be
sets of typed terms, otherwise we would need to consider ill-typed reductions,
which would make the proof more complex.
\begin{definition}
  For each type $A$ we define a set of strongly normalising terms as follows:
  \begin{align*}
    \interp{\B} & = \{ t : S(\B) \mid t \in \SN \} \\
    \interp{A \times B} & = \{  t : S(S(A) \times S(B)) \mid t \in \SN \}\\
    \interp{\Psi \Rightarrow A} & = \{  t : S(\Psi \Rightarrow A) \mid \forall r \in \interp{\Psi}, t r \in \interp{A} \} \\
    \interp{S(A)} & = \{  t : S(A) \mid t \in \SN \}
  \end{align*}
\end{definition}

We define a set of neutral terms (Definition~\ref{def:neut}), in order to prove
that for every type, its interpretation have the so-called CR3 property
(Lemma~\ref{lem:cr}), that is, the closure by anti-reduction of neutral terms.
Such a property will be useful to prove the adequacy lemma
(Lemma~\ref{lem:adequacy}).
\begin{definition}\label{def:neut}
  The set of neutral terms (\( \mathcal N \)) is defined by the following
  grammar:
  \[
    n:= tt \mid\head\ t\mid\tail\ t
  \]
  where $t$ is any term produced by the grammar from Table~\ref{tab:terms}.
\end{definition}

\begin{lemma}\label{lem:cr}
  For all \( A \), the following properties hold:
  \begin{description}
    \item[(CR1)] If \( t \in \interp{A} \), then \( t \in \SN \).
    \item[(CR2)] If \( t \in \interp{A} \), then \( \Red{t} \subseteq \interp{A} \).
    \item[(CR3)] If \( t \in \mathcal{N} \), $t$ has the same type as all the
      terms in $\interp A$, and \( \Red{t} \subseteq \interp{A} \) then \( t \in \interp{A} \).
    \item[(HAB)] For all \( x^A \), \( x \in \interp{A} \).
    \item[(LIN1)] If \( t \in \interp{A} \) and \( r \in \interp{A} \), then \( t + r \in \interp{A} \).
    \item[(LIN2)] If \( t \in \interp{A} \) then \( \alpha . t \in \interp{A} \).
    \item[(NULL)] \( \nullvec{A} \in \interp{A} \)
  \end{description}
\end{lemma}
\begin{proof}
  By induction over \( A \). Cf.~\ref{ap:SN}. \qed
\end{proof}

\begin{lemma}\label{lem:a_subset_b}
  If \( A \preceq B \) then \( \interp{A} \subseteq \interp{B} \).
\end{lemma}
\begin{proof}
  By induction on the relation $\preceq$. Cf.~\ref{ap:SN}. \qed
\end{proof}

Let $\theta$ be a substitution of variables by terms. We write
$\theta\vDash\Gamma$ if for every $x:A\in\Gamma$, $\theta(x)\in\interp A$.

\begin{lemma}[Adequacy]\label{lem:adequacy}
  If \( \Gamma \vdash t:A \) and \( \theta \vDash \Gamma \) then \( \theta(t)
  \in \interp{A} \).
\end{lemma}
\begin{proof}
  By induction in the derivation of $\Gamma\vdash t:A$. Cf.~\ref{ap:SN}. \qed
\end{proof}

\begin{theorem}
  [Strong normalization]\label{thm:SN}
  If $\Gamma\vdash t:A$ then $t\in\SN$.
\end{theorem}
\begin{proof}
  By Lemma~\ref{lem:adequacy}, if $\theta\vDash\Gamma$, then
  $\theta(t)\in\interp A$. By Lemma~\ref{lem:cr} (CR1), $\interp A\subseteq\SN$.
  Finally, by Lemma~\ref{lem:cr} (HAB), $\mathsf{Id}\vDash\Gamma$, hence $t\in\SN$.
  \qed
\end{proof}

\section{Interpretation}\label{sec:denSem}

We consider vector spaces equipped with a canonical base, and subsets of such
spaces.

Let $E$ and $F$ be two vector spaces with canonical bases $B = \{\vec{b}_i~|~i
\in I\}$ and $C = \{\vec{c}_j~|~j \in J\}$. The tensor product $E \otimes F$ of
$E$ and $F$ is the vector space of canonical base $\{\vec{b}_i \times
\vec{c}_j~|~i \in I \mbox{ and } j \in J\}$, where $\vec{b}_i \times \vec{c}_j$
is the ordered pair formed with the vector $\vec{b}_i$ and the vector
$\vec{c}_j$.
The operation $\otimes$ is extended to the vectors of $E$ and $F$ by making
pairs bilinear:
$(\sum_i \alpha_i . \vec{b}_i) \otimes (\sum_j \beta_j . \vec{c}_j) = \sum_{ij}
\alpha_i . \beta_j . (\vec{b}_i \times \vec{c}_j)$.

Let $E$ and $F$ be two vector spaces equipped with bases $B$ and $C$, and $S$
and $T$ be two subsets of $E$ and $F$ respectively, we define the set $S \times
T$, subset of the vector space $E \otimes F$, as follows: \( S \times T =
\{\vec{u} \times \vec{v}~| \vec{u} \in S, \vec{v} \in T\} \).
Remark that $E \times F$ differs from $E \otimes F$. For instance, if $E$ and
$F$ are ${\mathbb C}^2$ equipped with the base $\{\vec{i}, \vec{j}\}$, then $E
\times F$ contains $\vec{i} \times \vec{i}$ and $\vec j \times \vec j$ but not
$\vec i \times \vec i + \vec j \times \vec j$, that is not a Cartesian product of
two vectors of ${\mathbb C}^2$.
Let $E$ be a vector space equipped with a base $B$, and $S$ a subset of $E$. We
write $\gen(S)$ for the vector space over $\mathbb C$ generated by the span of
$S$, that is, containing all the linear combinations of elements of $S$.
Hence, if $E$ and $F$ are two vector spaces of bases $B$ and $C$ then $E \otimes
F = \gen(B) \otimes \gen(C) = \gen( B\times C)$.

Let $S$ and $T$ be two sets. We write $S \to T$ for the vector space of formal
linear combination of functions from $S$ to $T$. The set
$S \Rightarrow T$ the set of
the functions from $S$ to $T$ is a subset---and even a basis---of this vector
space.
Note that if $S$ and $T$ are two sets , then $S \to T = \gen (S \Rightarrow T)$.

To each type we associate the subset of some vector space
\begin{align*}
  \den{\B} &= \{\vect 10, \vect 01\}, \mbox{a subset of ${\mathbb C}^2$}\\
  \den{\gB\Rightarrow A}   &= \den{\gB}\Rightarrow\den A\\
  \den{A\times B} &= \den A\times \den B\\
  \den{S(A)} &= \gen\den A
\end{align*}
Remark that
\(\den{S(\B\times\B)}=\gen(\den\B\times\den\B)\simeq\den\B\otimes\den\B=\mathbb C^2\otimes\mathbb
C^2=\mathbb
C^4\).

If $\Gamma = x_1:\gB_1, ..., x_n:\gB_n$ is a context, then a $\Gamma$-valuation
is a function mapping each $x_i$ to $\den{\gB_i}$.

We now would associate to each term $t$ of type $A$ an element $\den t$ of $\den
A$. But as our calculus is probabilistic, due to the presence of a measurement
operator, we must associate to each term a set of elements of $\den A$.

Let $t$ be a term of type $A$ in $\Gamma$ and $\phi$ a $\Gamma$-valuation. We
define the interpretation of $t$, $\den t_\phi$ as follows.

\begin{align*}
  \den{x}_\phi &= \phi x\\
  \den{\lambda x^{\gB}.t}_{\phi} & = \{f~|~\forall a \in \den \gB, f a \in \den t_{\phi, x \mapsto\den{\gB}}\}\\
  \den{\ket{0}}_\phi &= \{\vect 10\}\\
  \den{\ket{1}}_\phi &= \{\vect 01\}\\
  \den{t \times u}_\phi &= \den{t}_\phi \times \den{u}_\phi\\
  \den{\pair tu}_\phi &= \{a + b~|~a \in \den{t}_\phi\mbox{ and }b \in \den{u}_{\phi}\}\\
  \den{\alpha.t}_\phi &= \{\alpha.a~|~a \in \den{t}_\phi\}\\
  \den{\z[B]}_\phi &= \{\vec 0\}, \mbox{the null vector of the vector space $\den{S(B)}$}\\
  \den{tu}_\phi &= \left\{\begin{array}{l} \left\{\sum\limits_{i\in
                            I}\alpha_i . g_i(a)~|~\sum\limits_{i\in I} \alpha_i .
                            g_i\in\den t_\phi, a\in\den u_\phi\right\}
                            {\textrm{If }}\Gamma\vdash t:\gB\Rightarrow A\\
                            \left\{\sum\limits_{i\in I}\sum\limits_{j\in J}\alpha_i . \beta_j . g_i(c_j)~|~\sum\limits_{i\in I} \alpha_i . g_i\in\den t_\phi, \sum\limits_{j\in J}\beta_j . c_j\in\den u_\phi\right\} \\
                            \multicolumn{1}{r}{\textrm{If }
                            \Gamma\vdash t:S(\gB\Rightarrow A)}
                          \end{array}
  \right.
  \\
  \den{\pi_j t}_\phi &=
                       \{
                       \prod_{h=1}^j b_{hk}\!\times\!\sum\limits_{i\in P}\!\left(\!\tfrac{\alpha_i}{\sqrt{\sum\limits_{r\in P}|\alpha_r|^2}}\!\right)\prod_{h=j+1}^mb_{hi}~|~\forall i\in P,\forall h, b_{hi}\!=\!b_{hk}
                       \}\\
               &\qquad\qquad\textrm{where }\den{t}_\phi = \{\sum\limits_{i=1}^n\prod_{h=1}^m b_{hi}\}
                 \textrm{ with }b_{hi}=\vect 01\textrm{ or }\vect 10
  \\
  \den{\ite{}{t}{r}}_\phi &=\{f~|~\forall a\in\den\B, fa=\left\{\begin{array}{ll}
                                                                  \den t_\phi & \textrm{If }a=\vect 01\\
                                                                  \den r_\phi & \textrm{If }a=\vect 10
                                                                \end{array}\right.\}\\
  \den{\head\ t}_\phi &=\{a_1~|~\prod_{i=1}^n a_i\in\den t_\phi, a_1\in\den{\B}\}\\
  \den{\tail\ t}_\phi &=\{\prod_{i=2}^na_i~|~\prod_{i=1}^na_i\in\den t_\phi,a_1\in\den{\B}\}\\
  \den{\Uparrow t}_\phi &= \den t_\phi
\end{align*}

\begin{lemma}
  \label{lem:inc}
  If $A\preceq B$, then $\den A\subseteq\den B$.
\end{lemma}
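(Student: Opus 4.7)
The plan is to proceed by straightforward induction on the derivation of $A \preceq B$. Six subtyping rules generate the relation---two axioms ($A \preceq S(A)$ and $S(S(A)) \preceq S(A)$) and four one-premise rules (covariance for $\gB \Rightarrow \cdot$, for $S(\cdot)$, and for $\cdot \otimes C$ and $C \otimes \cdot$)---so I face exactly six cases, each resolved by unfolding the definition of $\den{\cdot}$.

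For the axiom $A \preceq S(A)$, I use $\den{S(A)} = \gen \den{A}$ together with the trivial fact that any $v \in \den{A}$ equals $1 \cdot v$, hence lies in the span $\gen \den{A}$. For $S(S(A)) \preceq S(A)$, I observe that $\gen$ is idempotent on arbitrary subsets of a vector space, so $\den{S(S(A))} = \gen(\gen \den{A}) = \gen \den{A} = \den{S(A)}$, which gives not just inclusion but equality.

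For the four inductive cases I invoke the induction hypothesis on the premise $A \preceq B$ to obtain $\den{A} \subseteq \den{B}$, then propagate this through the outer constructor. The case $S(A) \preceq S(B)$ follows from monotonicity of $\gen$ under set inclusion. The two tensor cases follow directly from the pointwise definition $\den{A \otimes C} = \den{A} \times \den{C}$, which is visibly monotone in each argument. And the function case $\gB \Rightarrow A \preceq \gB \Rightarrow B$ holds because any $f \in \den{\gB} \Rightarrow \den{A}$ is, a fortiori, a function $\den{\gB} \to \den{B}$ as soon as $\den{A} \subseteq \den{B}$, so it belongs to $\den{\gB} \Rightarrow \den{B}$.

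No step here is really an obstacle: the interpretation has been designed so that each subtyping rule reflects a standard monotonicity property of the corresponding vector-space construction. The only bookkeeping to watch is that the subtyping relation as extended in Section~\ref{sec:tensor} is generated by exactly these six rules, so the induction has exactly six cases and no implicit reflexivity or transitivity needs to be handled separately.
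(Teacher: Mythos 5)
Your proof is correct and follows essentially the same route as the paper's: induction on the derivation of $A\preceq B$, with the two axioms handled by unfolding $\den{S(A)}=\gen\den A$ (inclusion in the span, respectively idempotence of $\gen$) and the four congruence rules handled by the induction hypothesis plus monotonicity of $\Rightarrow$, $\times$, and $\gen$. Nothing further is needed.
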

\begin{proof}
  By induction on the relation $\preceq$. Cf.~\ref{ap:denSem}. \qed
\end{proof}

\begin{lemma}
  \label{lem:subsDen}
  If $\Gamma\vdash t:A$ and $\phi,x\mapsto S,y\mapsto S$ is a
  $\Gamma$-valuation, then $\den t_{\phi,x\mapsto S,y\mapsto
    S}=\den{(x/y)t}_{\phi,x\mapsto S}$.
\end{lemma}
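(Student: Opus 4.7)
The plan is to proceed by structural induction on $t$, using that the interpretation $\den{\cdot}$ is compositional in its valuation argument, and alpha-renaming any bound variable that could clash with $x$ or $y$. The typing assumption $\Gamma\vdash t:A$ is used only to know which of the two subclauses of the application case applies.

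First I would dispose of the base cases. For a variable $t=z$ with $z\notin\{x,y\}$, the substitution leaves $t$ unchanged and both sides evaluate to $\phi\,z$. For $t=x$, the substitution has no effect and both sides return $S$. For $t=y$, we have $(x/y)y=x$, so the left-hand side is $\den{y}_{\phi,x\mapsto S,y\mapsto S}=S$ and the right-hand side is $\den{x}_{\phi,x\mapsto S}=S$. For the constants $\ket 0$, $\ket 1$, $\z[B]$ and $\ite{}{}{}$, the interpretation clause does not mention the valuation at all, so the equality holds trivially.

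For the inductive cases, every remaining compositional clause of $\den{\cdot}$ builds $\den{t}_\psi$ by applying a fixed set-theoretic operation (sum, scalar product, tensor, pointwise application, head/tail projection, measurement, or cast) to the interpretations of the immediate subterms under the \emph{same} valuation $\psi$. Applying the induction hypothesis term-wise to every immediate subterm and plugging into the clause gives the equality for $t$. The application clause has two subcases, depending on whether the function has type $\gB\Rightarrow A$ or $S(\gB\Rightarrow A)$, but both are uniformly handled by invoking the induction hypothesis once on the function and once on the argument.

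The one step that requires real care is the abstraction case $t=\lambda z:\gB~t'$. Here I would first alpha-rename the binder so that $z$ is fresh for $x$, for $y$, and for the domain of $\phi$. After this renaming, $(x/y)(\lambda z:\gB~t')=\lambda z:\gB~(x/y)t'$, and the abstraction clause extends the valuation by $z$ identically on both sides; the induction hypothesis applied to $t'$ with the extended valuation $\phi,x\mapsto S,y\mapsto S,z\mapsto a$ (for each $a\in\den{\gB}$) then yields the claim. This alpha-conversion is the only genuine subtlety; everything else is mechanical unfolding of clauses.
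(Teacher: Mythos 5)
Your proposal is correct and follows essentially the same route as the paper: structural induction on $t$, with the variable cases ($z$, $x$, $y$) and valuation-independent constants handled directly, and every compositional clause (including both typing subcases of application) discharged by applying the induction hypothesis to the immediate subterms under the same valuation. Your explicit alpha-renaming of the binder in the abstraction case is a sound precaution that the paper leaves implicit.
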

\begin{proof}
  By induction on $t$. Cf.~\ref{ap:denSem}. \qed
\end{proof}

\begin{theorem}\label{thm:soundness}
  If $\Gamma\vdash t:A$, and $\phi$ is a $\Gamma$-valuation. Then $\den
  t_\phi\subseteq\den A$.
\end{theorem}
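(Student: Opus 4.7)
The plan is to proceed by structural induction on the typing derivation of $\Gamma \vdash t : A$. I would first dispose of the three non-syntax-directed rules. For $\preceq$, Lemma~\ref{lem:inc} gives $\den A \subseteq \den B$, so by induction hypothesis $\den t_\phi \subseteq \den A \subseteq \den B$. For $W$, the interpretation of $t$ depends only on the values $\phi$ assigns to the free variables of $t$, so extending $\phi$ with a fresh variable leaves $\den t_\phi$ unchanged. For $C$, Lemma~\ref{lem:subsDen} rewrites $\den{(x/y)t}_{\phi,x\mapsto S}$ as $\den t_{\phi,x\mapsto S,y\mapsto S}$, and the induction hypothesis applied to $\Gamma,x:B,y:B \vdash t:A$ delivers the desired inclusion.

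Next I would treat the syntax-directed rules by direct inspection of the corresponding interpretation clauses. The axioms $\tax, \tax_{\vec 0}, \tax_{\ket 0}, \tax_{\ket 1}, \tif$ follow immediately from the definitions. For the introduction rules $S_I^\alpha$, $S_I^+$, and $\otimes_I$, the induction hypothesis places the subterm denotations inside $\den A$ (and $\den B$), and then $\{\alpha a\}$, $\{a+b\}$, and $\den t_\phi \times \den u_\phi$ land respectively in $\gen \den A = \den{S(A)}$ and in $\den A \times \den B = \den{A \otimes B}$ by construction. The abstraction case $\Rightarrow_I$ uses the induction hypothesis on the body to show that each applied value lies in $\den A$, so the function lies in $\den \gB \Rightarrow \den A = \den{\gB \Rightarrow A}$. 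The elimination cases $\Rightarrow_E$, $\Rightarrow_{ES}$, $\otimes_{E_r}$, $\otimes_{E_l}$ follow from the induction hypothesis by unfolding the relevant clause. For the cast rules $\Uparrow_r, \Uparrow_l, \Uparrow^\alpha, \Uparrow^+$, since $\den{\cas{A}{B \otimes C} t}_\phi = \den t_\phi$, soundness reduces to the inclusion $\gen(\gen \den A \times \den B) \subseteq \gen(\den A \times \den B) = \den{S(A \otimes B)}$, which holds by bilinearity of $\otimes$ (every generator $(\sum_i \alpha_i a_i) \otimes b$ expands to $\sum_i \alpha_i (a_i \otimes b)$).

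The main obstacle is the projection rule $S_E$. From the induction hypothesis and the premise $\Gamma \vdash t : Q_n^S$ we have $\den t_\phi \subseteq \den{Q_n^S}$, and we must show that every element of $\den{\pi_j t}_\phi$ lies in $\den{Q_n^{S \setminus \{1,\ldots,j\}}}$. Such an element has the form $\bigotimes_{h=1}^j b_{hk} \otimes \sum_{i \in P} (\alpha_i / \sqrt{\sum_{i \in P}|\alpha_i|^2})\,(b_{j+1,i} \otimes \cdots \otimes b_{mi})$, where each $b_{hk}$ is a basis vector of $\den \B$ and the tail is a linear combination of base tensors sitting in the appropriate span. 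The real work is to unfold Definition~\ref{def:Snotation} and check that removing $\{1,\ldots,j\}$ from $S$ corresponds exactly to replacing $\gen \den \B$ by $\den \B$ at those first $j$ positions in the interpretation of $Q_n^S$; this is done by a side induction on $n$ using the two recursive cases for $A_k^S(\B)$ and $A_k^S(S(B))$, paying attention to the case where two consecutive $S$-positions collapse into a single $S(\B\otimes\B)$ block, which is the delicate bookkeeping point.
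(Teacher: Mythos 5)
Your proposal follows essentially the same route as the paper's proof: induction on the typing derivation, dispatching $\preceq$ via Lemma~\ref{lem:inc}, $W$ by noting the valuation extension is harmless, $C$ via Lemma~\ref{lem:subsDen}, the casts via $\gen(\gen\den A\times\den B)=\gen(\den A\times\den B)$, and the remaining rules by unfolding the interpretation clauses. The only notable difference is that you flag the $S_E$/$Q_n^S$ bookkeeping as the delicate point and sketch a side induction for it, whereas the paper treats that case rather tersely; your plan there is sound and, if anything, more explicit.
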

\begin{proof}
  By induction on the typing derivation. Cf.~\ref{ap:denSem}. \qed
\end{proof}

\begin{theorem}\label{thm:denRed}
  If $\Gamma\vdash t:A$, $\phi$ is a $\Gamma$-valuation, and $t\lra[p_i] r_i$,
  with $\sum_ip_i=1$, then $\den t_\phi=\bigcup_i\den{r_i}_\phi$.
\end{theorem}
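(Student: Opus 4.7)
The plan is to proceed by induction on the derivation of $t\lra[p_i]r_i$, treating each rewrite rule of Table \ref{tab:URS} as a base case and each contextual closure rule via the induction hypothesis. For every rule except \rproj\ the reduction is deterministic, so the sum $\sum_i p_i = 1$ collapses to a single reduct $r$ with probability $1$, and the claim reduces to the set-equality $\den t_\phi = \den r_\phi$.

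I would first dispatch the administrative cases: the vector space axioms, if-then-else, linear distribution rules, \rhead/\rtail, and the typing-cast rules. Each follows by unfolding the definition of the interpretation and invoking the standard vector space identities (associativity, commutativity, distributivity, etc.) that these rewrite rules are explicitly directing. In particular, the type-cast rules are semantically transparent because by definition $\den{\cas A{B\otimes C}t}_\phi = \den t_\phi$, so \rdistsumr, \rdistscalr, \rdistzr, and their symmetric counterparts, as well as \rdistcasum, \rdistcascal, \rcaneutr, \rcaneutl, all boil down to distributivity/neutrality identities already validated by the denotation of the underlying subterms. The contextual closure rules follow directly from the induction hypothesis by compositionality: each clause of $\den{\cdot}$ is defined as a set-builder over the denotations of immediate subterms, so replacing a subterm's denotation by the union of its reducts' denotations preserves the identity.

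For the two beta rules \rbetab\ and \rbetan, I would rely on a substitution-level version of Lemma \ref{lem:subsDen} stating that $\den t_{\phi,x\mapsto a} = \den{(u/x)t}_\phi$ whenever $a\in\den u_\phi$ (with $u$ closed). Combining this with the definition $\den{(\lambda x:\gB~t)u}_\phi = \{\,f(a)\mid f\in\den{\lambda x:\gB~t}_\phi,\, a\in\den u_\phi\,\}$ produced by the application clause, one obtains the required equality $\den{(\lambda x:\gB~t)u}_\phi = \den{(u/x)t}_\phi$. A small subtlety is that for \rbetab\ one must use the side condition that $u$ is a base term so that $\den u_\phi$ is a singleton lying in the basis $\den\gB$; this matches the way the $\lambda$-clause quantifies over basis elements.

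The main obstacle is the projection rule \rproj, the only genuinely probabilistic reduction. Here one has to verify that the set defining $\den{\pi_j t}_\phi$ coincides, element for element, with the union $\bigcup_i \den{r_i}_\phi$ over the family of possible measurement outcomes $r_i$ indexed by the choice of $k$ and the associated partition $P$ of compatible indices, whose probabilities $p_i = \sum_{i\in P}\frac{|\alpha_i|^2}{\sum_{i=1}^n|\alpha_i|^2}$ sum to $1$ by construction. The correspondence is essentially by definition, but it requires carefully matching the partition of indices by shared basis prefix in both the semantics and the rewrite rule, and checking that the normalization factor $\sqrt{\sum_{i\in P}|\alpha_i|^2}$ on the residual state in $\den{\pi_j t}_\phi$ is exactly the one produced by \rproj. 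Once this bookkeeping is carried out, the overall induction closes.
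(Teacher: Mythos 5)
Your proposal is correct and follows essentially the same route as the paper's proof: induction on the rewrite relation, with the two beta rules handled via a denotational substitution lemma and every remaining rule (including \rproj) verified directly from the algebraic definition of the interpretation. You are in fact slightly more careful than the paper, which invokes Lemma~\ref{lem:subsDen} (stated only for variable renaming) at exactly the point where your explicitly generalized substitution-level version is what is actually needed, and which dismisses the \rproj\ bookkeeping you spell out as ``straightforward by the algebraic nature of the interpretation.''
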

\begin{proof}
  We proceed by induction on the rewrite relation.
  \begin{description}
  \item[\rbetab\ and \rbetan] Let $\vdash(\lambda x^{\gB}.t)u:A$, with $\vdash
    u:\gB$, where, if $\gB=\B^n$, then $u\in\basis$. Then by
    Lemma~\ref{lem:generation}, one of the following possibilities happens:
    \begin{enumerate}
    \item $\vdash\lambda x^{\gB}.t:\gB'\Rightarrow B$ and $\vdash u:\gB'$, with
      $B\preceq A$. Thus, $\den{(\lambda x^{\gB}.t)u}_\phi=\{f(a)\mid
      a\in\den u_\phi\}\subseteq\den t_{\phi,x\mapsto\den{\gB}} $.
    \item $\vdash\lambda x^{\gB}.t:S(\gB'\Rightarrow B)$ and $\vdash u:S(\gB')$,
      with $S(B)\preceq A$. Thus, $\den{(\lambda x^{\gB}.t)u}_\phi=\{\sum_{j\in
        J}\beta_j . f(c_j)\mid \sum_{j\in J}\beta_j .
      c_j\in\den{u}_\phi\}\subseteq\den t_{\phi,x\mapsto\den{\gB}}$.
    \end{enumerate}
    In any case, by Lemma~\ref{lem:subsDen}, $\den
    t_{\phi,x\mapsto\den{\gB}}=\den{(u/x)t}_\phi$.
  \item[Other cases] All the remaining cases are straightforward by the
    algebraic nature of the interpretation.
    \qed
  \end{description}
\end{proof}

\section{Examples}\label{sec:examples}
In this section we show that our language is expressive enough to express the
Deutsch algorithm (Section~\ref{ex:deut}) and the Teleportation algorithm.
(Section~\ref{ex:telep}).

\subsection{Deutsch algorithm}\label{ex:deut}
The Deutsch algorithm tests whether the binary function $f$ implemented by the
oracle $U_f$ is constant ($f(0)=f(1)$) or balanced ($f(0)\neq f(1)$). The
algorithm is as follows: it starts with a qubit in state $\ket 0$ and another in
state $\ket 1$, and applies Hadamard gates to both. Then it applies the $U_f$
operator, followed by a Hadamard and a measurement to the first qubit. When the
function is constant, the first qubit ends in $\ket 0$, when it is balanced, it
ends in $\ket 1$.

The Hadamard gate ($H$) produces $\tfrac 1{\sqrt 2}.\pair{\ket 0}{\ket 1}$ when
applied to $\ket 0$ and $\tfrac 1{\sqrt 2}.\npair{\ket 0}{\ket 1}$ when applied
to $\ket 1$. Hence, it can be implemented with the if-then-else construction:
$\s H =\lambda x^{\B}.\frac 1{\sqrt 2}.\pair{\ket 0}{(\ite x{(-\ket 1)}{\ket
    1})}$. Notice that the abstracted variable has a base type
(i.e.~non-linear). Hence, if $H$ is applied to a superposition, say
$\pair{\alpha.\ket 0}{\beta.\ket 1}$, it reduces, as expected, in the following
way:
\[
  \s H\pair{\alpha.\ket 0}{\beta.\ket 1} \red\rlinr \pair{\s H\alpha.\ket 0}{\s
    H\beta.\ket 1} \red{\rlinscalr^2} \pair{\alpha.\s H\ket 0}{\beta.\s H\ket 1}
\]
and then is applied to the base terms. We define $\s H_1$ as the function taking
a two-qubits system and applying $\s H$ to the first. $\s H_1 =\lambda
x^{\B\times\B}.((\s H~(\head~x))\times(\tail~x))$. Similarly, $\s
H_{\textsl{both}}$ applies $\s H$ to both qubits.
\[\s H_{\textsl{both}} =
  \lambda x^{\B\times\B}.((\s H~(\head~x))\times(\s H~(\tail~x)))\] The gate
$U_f$ is called {\em oracle}, and it is defined by $U_f\ket{xy} = \ket{x,y\oplus
  f(x)}$ where $\oplus$ is the addition modulo $2$. In order to implement it, we
need the $\mathit{not}$ gate, which can be implemented similarly to the Hadamard
gate:
\[
  \s{not}=\lambda x^{\B}.(\ite x{\ket 0}{\ket 1})
\]
Then, the $U_f$ gate is implemented by:
\[
  \s U_f = \lambda x^{\B\times\B}.
  ((\head~x)\times(\ite{(\tail~x)}{(\s{not}~(f~(\head~x)))}{(f~(\head~x))}))
\]
where $f$ is a given term of type $\B\Rightarrow\B$.

Finally, the Deutsch algorithm combines all the previous definitions:
\[\s{Deutsch}_f = \pi_1~(\Uparrow_r \s H_1~(\s
  U_f~\Uparrow_\ell\Uparrow_r \s H_{\textsl{both}}~(\ket 0\times\ket 1)))\]

The casts after the Hadamards are needed to fully develop the qubits and then be
able to use it as an argument of a non-linear abstraction (i.e.~an abstraction
expecting for base terms and so linear-distributing over superpositions). The
$\s{Deutsch}_f$ term is typed, as expected, by $\vdash \s{Deutsch}_f:\B\times
S(\B)$.

This term, on the identity function, reduces as follows:
\[
  \s{Deutsch}_{id} \lra^* \pi_1\npair{\frac 1{\sqrt 2}.\ket{1} \times
    \ket{0}}{\frac 1{\sqrt 2}.\ket{1} \times \ket{1}} \red{\rproj} \ket
  1\times\npair{\frac 1{\sqrt 2}.\ket{0}}{\frac 1{\sqrt 2}.\ket{1}}
\]

The trace on this reduction and the type derivation are given in~\ref{ap:Deutsch}.

\subsection{Teleportation algorithm}\label{ex:telep}

The circuit for this algorithm is given in Figure~\ref{fig:telep}.
\begin{figure}[t]
  \begin{center}
    \begin{tikzpicture}
      \node[anchor=west] at (0,0) {$ \Qcircuit @C=1em @R=1em {
          & \ctrl{1} \qw & \gate{H} \qw & \meter & \controlo \cw \cwx[1] \\
          & \targ \qw    & \qw          & \meter & \controlo \cw \cwx \\
          & \qw & \qw & \qw & \gate{Z^{b_1}not^{b_2}} \cwx & \qw } $ };
      \node[anchor=east] at (0,.9) {$\ket\psi$}; \node[anchor=west] at (5,-.85)
      {$\ket\psi$}; \node[anchor=east] at (0,-.4) {$\beta_{00}$}; \draw[dashed]
      (.3,-.3) -- (4.2,-.3) -- (4.2,1.2) -- (.3,1.2) -- (.3,-.3); \node at
      (4.7,1.1) {Alice}; \draw[dashed] (.3,-.5) -- (4.7,-.5) -- (4.7,-1.2) --
      (.3,-1.2) -- (.3,-.5); \node at (4.8,-.3) {Bob};
    \end{tikzpicture}
  \end{center}
  \caption{Teleportation circuit}
  \label{fig:telep}
\end{figure}

The $\mathit{cnot}$ gate, which applies $\mathit{not}$ to the second qubit only
when the first qubit is $\ket 1$, can be implemented with an if-then-else
construction as follows:
\[\s{cnot} = \lambda
  x^{\B\times\B}.((\head~x)\times(\ite{(\head~x)}{(\s{not}~(\tail~x))}{(\tail~x)}))\]

We define $\s H^3_1$ to apply $H$ to the first qubit of a three-qubit system.
\[\s H^3_1 =
  \lambda x^{\B\times\B\times\B}. ((\s H~(\head~x))\times(\tail~x))\]

Remark that the only difference with $\s H_1$ is the type of the abstracted
variable. In addition, we need to apply $cnot$ to the two first qubits, so we
define $\s{cnot}^3_{12}$ as
\[
  \s{cnot}^3_{12} = \lambda x^{\B\times\B\times\B}.
  ((\s{cnot}~(\head~x\times(\head~\tail~x)))\times (\tail~\tail~x))
\]

The $Z$ gate returns $\ket 0$ when it receives $\ket 0$, and $-\ket 1$ when it
receives $\ket 1$. Hence, it can be implemented by:
\[\s Z = \lambda x^{\B}.(\ite x{(-\ket 1)}{\ket 0})\]

The Bob side of the algorithm will apply $Z$ and/or $\mathit{not}$ according to
the bits it receives from Alice. Hence, for any $\vdash \s U:\B\Rightarrow
S(\B)$ or $\vdash \s U:\B\Rightarrow\B$, we define $\s U^{(b)}$ to be the
function which depending on the value of a base qubit $b$ applies the $\s U$
gate or not:
\[
  \s U^{(b)} =(\lambda x^{\B}.\lambda y^{\B}.(\ite x{\s Uy}y))\ b
\]

Alice and Bob parts of the algorithm are defined separately:
\[
  \s{Alice} = \lambda x:S(\B)\times S(\B\times
  \B)(\pi_2(\Uparrow_r\s
  H^3_1~(\s{cnot}^3_{12}\Uparrow_\ell\Uparrow_r x)))
\]

Notice that before passing to $\s{cnot}_{12}^3$ the parameter of type
$S(\B)\times S(\B\times\B)$, we need to fully develop the term using the two
casts, and again, after the Hadamard gate. Bob side is implemented by
\[ \s{Bob} = \lambda x^{\B\times\B\times\B}.(\s Z^{(\head~x)}(\s{not}^{(\head\
    \tail\ x)}~(\tail~\tail~x)))\]

The teleportation is applied to an arbitrary qubit and to the following Bell
state
\[\beta_{00} = \pair{\frac 1{\sqrt 2}.\ket{0} \times \ket{0}}{\frac
    1{\sqrt 2}.\ket{1} \times \ket{1}}\] and it is defined by:
\[\s{Teleportation} = \lambda q^{S(\B)}.(\s{Bob}(\Uparrow_\ell \s{Alice}~(q\times\beta_{00})))\]

This term is typed, as expected, by: $\vdash\s{Teleportation}:S(\B)\Rightarrow
S(\B)$ and applying the teleportation to any superposition $\pair{\alpha.\ket
  0}{\beta.\ket 1}$ will reduce, as expected, to $\pair{\alpha.\ket
  0}{\beta.\ket 1}$. The trace on this reduction and the type derivation are
given in~\ref{ap:telep}.

\section{Conclusion}\label{sec:conclusion}

In this paper we have proposed a way to unify logic-linear and algebraic-linear
quantum $\lambda$-calculi, by interpreting $\lambda$-terms as linear functions
when they expect duplicable data and as non-linear ones when they do not, and
illustrated this idea with the definition of a calculus.

This calculus is first-order in the sense that variables do not have functional
types. In a higher-order version we should expect abstractions to be clonable.
But, allowing cloning abstractions allows cloning superpositions, by hiding them
inside. For example, $\lambda x^{\B\Rightarrow\B}.\pair{\frac 1{\sqrt 2}.\ket
  0}{\frac 1{\sqrt 2}.\ket 1}$. It has been argued
\cite{ArrighiDowekRTA08,ArrighiDiazcaroValironIC17} that what is cloned is not
the superposition but a function that creates the superposition, because we had
no way there to create such an abstraction from an arbitrary superposition. The
situation is different in the calculus presented in this paper as the term
$\lambda x^{S(\B)}.\lambda y^{\B}.x$ precisely takes any term $t$ of type
$S(\B)$ and returns the term $\lambda y^{\B}.t$. So, a cloning machine could be
constructed by encapsulating any superposition $t$ under a lambda, which
transform it into a basis term, so a clonable term. Extending this calculus to
the higher-order will require characterizing precisely the abstractions that can
be taken as arguments, not allowing to duplicate functions creating
superpositions.

\subsubsection*{Acknowledgements}
We would like to thank Eduardo Bonelli, Octavio Malherbe, Luca Paolini, Simona
Ronchi della Rocca and Luca Roversi for interesting comments and suggestions in
an early draft of this work.

\bibliographystyle{elsarticle-num} \bibliography{biblio}

\newpage
\appendix

\section{Detailed proofs of Section~\ref{sec:SR} (Subject reduction)}\label{ap:SR}
\xrecap{Corollary}{Simplification}{cor:simplification}{
  \begin{enumerate}
  \item If $\vdash\pair tu:A$, then $\vdash t:A$ and $\vdash u:A$.
  \item If $\vdash\pair tu:A$, then $A=S(B)$.
  \item If $\vdash \alpha.t:A$, then $\vdash t:A$.
  \item If $\vdash\alpha.t:A$, then $\vdash\beta.t:A$.
  \item If $\vdash \alpha.t:A$, then $A=S(B)$.
  \end{enumerate}
}
\begin{proof}
  ~
  \begin{enumerate}
  \item By Lemma~\ref{lem:generation}, $\vdash t:B$ and $\vdash u:B$, with
    $B\preceq S(B)\preceq A$, then, we conclude by rule $\preceq$.
  \item By Lemma~\ref{lem:generation}, $\vdash t:C$ and $\vdash u:C$, with
    $C\preceq S(C)\preceq A$, but then, by Lemma~\ref{lem:noBasisPrec}, $A=S(B)$
    for some type $B$.
  \item By Lemma~\ref{lem:generation}, $\vdash t:B$, with $S(B)\preceq A$, then,
    we conclude by rule $\preceq$.
  \item By Lemma~\ref{lem:generation}, $\vdash t:B$, with $S(B)\preceq A$, then
    we conclude by rules $S_I^\alpha$ and $\preceq$.
  \item By Lemma~\ref{lem:generation}, $\vdash t:C$ with $S(C)\preceq A$, but
    then, by Lemma~\ref{lem:noBasisPrec}, $A=S(B)$ for some type $B$.
    \qed
  \end{enumerate}
\end{proof}

\recap{Corollary}{cor:basisTermNoSup}{If $b\in\basis$ and $\vdash b:S(A)$, then $\vdash b:A$.}
\begin{proof}
  We proceed by cases on $b$.
  \begin{itemize}
  \item Let $b=\lambda x^{\gB}.t$. Then, by Lemma~\ref{lem:generation},
    $x:\gB\vdash t:B$, with $\gB\Rightarrow B\preceq S(A)$, and so
    $\gB\Rightarrow B\preceq A$, and we conclude by rule $\preceq$.
  \item Let $b=\ket 0$. Then, by Lemma~\ref{lem:generation}, $\B\preceq S(A)$,
    hence $\B\preceq A$ and we conclude by rule $\preceq$.
  \item Let $b=\ket 1$. Analogous to previous case.
  \item Let $b=b_1\times b_2$. Then, by Lemma~\ref{lem:generation}, $\vdash
    b_1:B_1$, $\vdash b_2:B_2$, and $B_1\times B_2\preceq S(A)$. Hence,
    $B_1\times B_2\preceq A$ and we conclude by rule $\preceq$.
    \qed
  \end{itemize}
\end{proof}

\xrecap{Lemma}{Substitution lemma}{lem:substitution}{
  Let $FV(u)=\emptyset$, then if $\Gamma,x:\gB\vdash t:A$, $\Delta\vdash u:\gB$,
  where if $\gB=\B^n$ then $u\in\basis$, we have $\Gamma,\Delta\vdash (u/x)t:A$.
}
\begin{proof}
  Notice that due to Lemma~\ref{lem:linearContext}, $|\Delta|\subseteq\bqtypes$,
  hence, it suffices to consider $\Delta=\emptyset$. We proceed by structural
  induction on $t$.

  The set of terms be divided in the following groups:
  \begin{align*}
    \s{unclassified} &:= x\mid \lambda x^{\gB}.t\\
    \s{arity}^0 &:= \z\mid \ket 0\mid \ket 1\\
    \s{arity}^1(r) &:= \pi_j r\mid \alpha.r\mid \head\ r\mid \tail\ r\mid \Uparrow_r t\mid\Uparrow_\ell t\\
    \s{arity}^2(r)(s) &:= rs\mid \pair rs\mid r\times s\mid \ite{}rs
  \end{align*}
  Hence, we can consider the terms by groups:
  \begin{description}
  \item[\s{unclassified} terms] ~
    \begin{description}
    \item[$t=x$.] By Lemma~\ref{lem:generation}, 
      $|\Gamma|\subseteq\bqtypes$ and $\gB\preceq A$. Since $(u/x)x=u$, we have
      $\vdash (u/x)x:\gB$. Hence, since $\gB\preceq A$, by rule $\preceq$,
      $\vdash(u/x)x:A$. Finally, since $|\Gamma|\subseteq\bqtypes$, by rule $W$,
      we have $\Gamma\vdash (u/x)x:A$.
    \item[$t=y\neq x$.] By Lemma~\ref{lem:generation}, $y:\gB'\in\Gamma$,
      $(|\Gamma|\cup\{\gB\})\setminus\{\gB'\}\subseteq\bqtypes$ and $\gB'\preceq
      A$. Hence, by rule $\preceq$, $y:\gB'\vdash y:A$. Since
      $|\Gamma|\subseteq\bqtypes$, by rule $W$, we have $\Gamma\vdash y:A$.
      Finally, since $(u/x)y=y$, we have $\Gamma\vdash (u/x)y:A$.
    \item[$t=\lambda y^{\gB'}.v$.] Without loss of generality, assume $y$ is
      does not appear free in $u$. By Lemma~\ref{lem:generation},
      $\Gamma',y:\gB'\vdash v:B$, with $\Gamma'\subseteq\Gamma\cup\{x:\gB\}$,
      $\gB'\Rightarrow B\preceq A$ and
      $(|\Gamma|\cup\{\gB\})\setminus|\Gamma'|\subseteq\bqtypes$. By the induction
      hypothesis, $\Gamma'',y:\gB'\vdash (u/x)v:B$, with
      $\Gamma''=\Gamma'\setminus\{x:\gB\}$. Notice that if $x:\gB\in\Gamma'$,
      the induction hypothesis applies directly, in other case, $\gB\in\bqtypes$
      and so by rule $W$ the context can be enlarged to include $x:\gB$, hence
      the induction hypothesis applies in any case. Therefore, by rule
      $\Rightarrow_I$, $\Gamma''\vdash\lambda y^{\gB'}.(u/x)v:\gB'\Rightarrow
      B$. Since $\gB'\Rightarrow B\preceq A$, by rule $\preceq$,
      $\Gamma''\vdash\lambda y^{\gB'}.(u/x)v:A$. Hence, since
      $|\Gamma|\setminus|\Gamma''|\subseteq\bqtypes$, by rule $W$,
      $\Gamma\vdash\lambda y^{\gB'}.(u/x)v:A$. Since $y$ does not appear free in
      $u$, $\lambda y^{\gB'}.(u/x)v = (u/x)(\lambda y^{\gB'}.v)$. Therefore,
      $\Gamma\vdash(u/x)(\lambda y^{\gB'}.v):A$.
    \end{description}
  \item[$\s{arity}^0$ terms] All of these terms are typed by an axiom with a
    type $B$ which, by Lemma~\ref{lem:generation}, $B\preceq A$. Also, by the
    same Lemma, $|\Gamma,x:\gB|\subseteq\bqtypes$. So, we can type with the axiom,
    and empty context, $\vdash\s{arity}^0:B$, and so, by rule $W$,
    $\Gamma\vdash\s{arity}^0:B$. Notice that $\s{arity}^0=(u/x)\s{arity}^0$. We
    conclude by rule $\preceq$.
  \item[$\s{arity}^1(r)$ terms] By Lemma~\ref{lem:generation}, $\Gamma'\vdash
    r:B$, such that by a derivation tree $T$, $\Gamma'\vdash\s{arity}^1(r):C$,
    where $\Gamma'\subseteq(\Gamma\cup\{x:\gB\})$,
    $(|\Gamma|\cup\gB)\setminus|\Gamma'|\subseteq\bqtypes$ and $C\preceq A$. If
    $x:\gB\notin\Gamma'$, then $\gB=B$ and so we can extend $\Gamma'$ with
    $x:\gB$. Hence, in any case, by the induction hypothesis,
    $\Gamma'\setminus\{x:\gB\}\vdash (u/x)r:C$. Then, using the derivation tree
    $T$, $\Gamma'\setminus\{x:\gB\}\vdash\s{arity}^1((u/x)r):C$. Notice that
    $\s{arity}^1((u/x)r)=(u/x)\s{arity}^1(r)$. We conclude by rules $W$ and
    $\preceq$.

  \item[$\s{arity}^2(r)(s)$ terms] By Lemma~\ref{lem:generation},
    $\Gamma_1\vdash r:C$ and $\Gamma_2\vdash s:D$, such that by a typing rule R,
    $\Gamma_1,\Gamma_2\vdash\s{arity}^2(r)(s):E$, with $E\preceq A$, and where
    $(\Gamma_1\cup\Gamma_2)\subseteq(\Gamma\cup\{x:\gB\})$ and
    $(|\Gamma|\cup\gB)\setminus(|\Gamma_1|\cup|\Gamma_2|)\subseteq\bqtypes$.
    Therefore, if $x:\gB\notin\Gamma_i$, $i=1,2$, we can extend $\Gamma_i$ with
    $x:\gB$ using rule $W$. Hence, by the induction hypothesis,
    $\Gamma_1\setminus\{x:\gB\}\vdash (u/x)r:C$ and
    $\Gamma_2\setminus\{x:\gB\}\vdash (u/x)s:D$. So, by rule R,
    $\Gamma_1\setminus\{x:\gB\},\Gamma_2\setminus\{x:\gB\}\vdash
    \s{arity}^2((u/x)r)((u/x)s):E$. Notice that
    $\s{arity}^2((u/x)r)((u/x)s)=(u/x)\s{arity}^2(r)(s)$. We conclude by rules
    $W$ and $\preceq$.
    \qed
  \end{description}
\end{proof}

\xrecap{Theorem}{Subject reduction on closed terms}{thm:SR}{
  For any closed terms $t$ and $u$ and type $A$, if $t\lra[p] u$ and $\vdash
  t:A$, then $\vdash u:A$.
}
\begin{proof}
  We proceed by induction on the rewrite relation.
  \begin{description}
  \item[\rbetab\ and \rbetan] Let $\vdash(\lambda x^{\gB}.t)u:A$, with $\vdash
    u:\gB$, where, if $\gB=\B^n$, then $u\in\basis$. Then by
    Lemma~\ref{lem:generation}, one of the following possibilities happens:
    \begin{enumerate}
    \item\label{it:caseEbeta} $\vdash\lambda x^{\gB}.t:\gB'\Rightarrow B$ and
      $\vdash u:\gB'$, with $B\preceq A$, or
    \item\label{it:caseESbeta} $\vdash\lambda x^{\gB}.t:S(\gB'\Rightarrow B)$
      and $\vdash u:S(\gB')$, with $S(B)\preceq A$.
    \end{enumerate}
    Thus, in any case, by Lemma~\ref{lem:generation} again, $x:\gB\vdash t:C$,
    with, in case \ref{it:caseEbeta}, $\gB\Rightarrow C\preceq \gB'\Rightarrow
    B$ and in case \ref{it:caseESbeta}, $\gB\Rightarrow C\preceq
    S(\gB'\Rightarrow B)$. Hence, $\gB=\gB'$ and in the first case $C\preceq
    B\preceq A$, while in the second, $C\preceq B\preceq S(B)\preceq A$, so, in
    general $C\preceq A$. Since $\vdash u:\gB$, where if $\gB=\B^n$, then
    $u\in\basis$, by Lemma~\ref{lem:substitution}, $\vdash (u/x)t:C$, and by
    rule $\preceq$, $\vdash (u/x)t:A$.

  \item[\riftrue] Let $\vdash\ite{\ket 1}uv:A$. Then, by
    Lemma~\ref{lem:generation}, one of the following possibilities happens:
    \begin{itemize}
    \item $\vdash\ite{}uv:\Psi\Rightarrow B$ and $\vdash\ket 1:\Psi$, with
      $B\preceq A$.
      Then, by
      Lemma~\ref{lem:generation} again,
      $\vdash u:C$, $\vdash v:C$ and $\B\Rightarrow C\preceq\Psi\Rightarrow
      B$. Hence, $\Psi=\B$ and $C\preceq B\preceq A$.
    \item $\vdash\ite{}uv:S(\Psi\Rightarrow B)$ and $\vdash \ket 1:S(\Psi)$,
      with $S(B)\preceq A$. Then, by Lemma~\ref{lem:generation} again,
      $\vdash u:C$, $\vdash v:C$ and $\B\Rightarrow C\preceq S(\Psi\Rightarrow
      B)$. Hence, $\Psi=\B$ and $C\preceq B\preceq S(B)\preceq A$.
    \end{itemize}
    So, by rule $\preceq$, $\vdash u:A$.
  \item[\riffalse] Analogous to case \riftrue.
  \item[\rlinr] Let $\vdash t\pair uv:A$, with $\vdash t:\B^n\Rightarrow B$. Then,
    by Lemma~\ref{lem:generation}, one of the following cases happens:
    \begin{enumerate}
    \item $\vdash t:\gB\Rightarrow C$ and $\vdash\pair uv:\gB$, with $C\preceq
      A$. However, since $\vdash t:\B^n\Rightarrow B$, we have $\gB\in\bqtypes$,
      which is impossible due to Corollary~\ref{cor:simplification}.
    \item $\vdash t:S(\gB\Rightarrow C)$ and $\vdash\pair uv:S(\gB)$, with
      $S(C)\preceq A$. Then, by Corollary~\ref{cor:simplification}, $\vdash
      u:S(\gB)$ and $\vdash v:S(\gB)$. Hence,
      \[
        \hspace{-5mm}  \infer[\preceq] {\vdash\pair{tu}{tv}:A} { \infer[S_I^+]
          {\vdash\pair{tu}{tv}:S(S(C))} { \infer[\Rightarrow_{ES}] {\vdash
              tu:S(C)} { \vdash t:S(\gB\Rightarrow C) & \vdash u:S(\gB) } &
            \infer[\Rightarrow_{ES}] {\vdash tv:S(C)} { \vdash
              t:S(\gB\Rightarrow C) & \vdash v:S(\gB) } } }
      \]
    \end{enumerate}
  \item[\rlinscalr] Let $\vdash t(\alpha.u):A$, with $\vdash t:\B^n\Rightarrow B$.
    Then, by Lemma~\ref{lem:generation}, one of the following cases happens:
    \begin{enumerate}
    \item $\vdash t:\gB\Rightarrow C$ and $\vdash \alpha.u:\gB$, with $C\preceq
      A$. However, since $\vdash t:\B^n\Rightarrow B$, we have $\gB\in\bqtypes$,
      which is impossible due to Corollary~\ref{cor:simplification}.
    \item $\vdash t:S(\gB\Rightarrow C)$ and $\vdash \alpha.u:S(\gB)$, with
      $S(C)\preceq A$. Then, by Corollary~\ref{cor:simplification}, $\vdash
      u:S(\gB)$. Hence,
      \[
        \infer[\preceq] {\vdash \alpha.tu:A} { \infer[S_I^\alpha] {\vdash
            \alpha.tu:S(S(C))} { \infer[\Rightarrow_{ES}] {\vdash tu:S(C)} {
              \vdash t:S(\gB\Rightarrow C) & \vdash u:S(\gB) } } }
      \]
    \end{enumerate}
  \item[\rlinzr] Let $\vdash t\z[\B^n]:A$, with $\vdash t:\B^n\Rightarrow B$. Then, by
    Lemma~\ref{lem:generation}, one of the following cases happens:
    \begin{enumerate}
    \item $\vdash t:\gB\Rightarrow C$ and $\vdash\z[\B^n]:\gB$, with $C\preceq A$.
      Then, by Lemma~\ref{lem:generation} again, $S(\B^n)\preceq \gB$. However,
      since $\vdash t:\B^n\Rightarrow B$, $\gB\in\bqtypes$, which is impossible by
      Lemma~\ref{lem:noBasisPrec}.
    \item $\vdash t:S(\gB\Rightarrow C)$ and $\vdash\z[\B^n]:S(\gB)$, with
      $S(C)\preceq A$. By rule $\tax_{\vec 0}$, $\vdash
      \z[\mathsf{min}(B)]:S(\mathsf{min}(B))$.
      Since $\vdash t:\B^n\Rightarrow B$ and $\vdash t:S(\Psi\Rightarrow C)$, by
      Lemma~\ref{lem:min}, we have $\mathsf{min}(\B^n\Rightarrow
      B)=\mathsf{min}(S(\Psi\Rightarrow C))$, so
      $\mathsf{min}(B)=\mathsf{min}(C)$. Then, by Lemma~\ref{lem:minlqA}, $\mathsf{min}(B)=\mathsf{min}(C)\preceq C$, then $S(\mathsf{min}(B))\preceq S(C)\preceq A$, so we conclude by rule $\preceq$.
    \end{enumerate}
  \item[\rlinl] Let $\vdash\pair tuv:A$. Then by Lemma~\ref{lem:generation}, one
    of the following cases happens:
    \begin{enumerate}
    \item $\vdash\pair tu:\gB\Rightarrow B$, which is impossible by
      Corollary~\ref{cor:simplification}.
    \item $\vdash\pair tu:S(\gB\Rightarrow B)$ and $\vdash v:S(\gB)$, with
      $S(B)\preceq A$. Then, by Corollary~\ref{cor:simplification}, $\vdash
      t:S(\gB\Rightarrow B)$ and $\vdash u:S(\gB\Rightarrow B)$. Hence,
      \[
        \hspace{-5mm} \infer[\preceq] {\vdash\pair{tv}{uv}:A} { \infer[S_I^+]
          {\vdash\pair{tv}{uv}:S(S(B))} { \infer[\Rightarrow_{ES}] {\vdash
              tv:S(B)} { {\vdash t:S(\gB\Rightarrow B)} & {\vdash v:S(\gB)} } &
            \infer[\Rightarrow_{ES}] {\vdash uv:S(B)} { {\vdash
                u:S(\gB\Rightarrow B)} & {\vdash v:S(\gB)} } } }
      \]
    \end{enumerate}
  \item[\rlinscall] Let $\vdash (\alpha.t)u:A$. Then, by
    Lemma~\ref{lem:generation}, one of the following cases happens:
    \begin{enumerate}
    \item $\vdash \alpha.t:\gB\Rightarrow B$, which is impossible by
      Corollary~\ref{cor:simplification}.
    \item $\vdash \alpha.t:S(\gB\Rightarrow B)$ and $\vdash u:S(\gB)$, with
      $S(B)\preceq A$. Then, by Corollary~\ref{cor:simplification}, $\vdash
      t:S(\gB\Rightarrow B)$. Hence,
      \[
        \infer[\preceq] {\vdash \alpha.tu:A} { \infer[S_I^\alpha] {\vdash
            \alpha.tu:S(S(B))} { \infer[\Rightarrow_{ES}] {\vdash tu:S(B)} {
              \vdash t:S(\gB\Rightarrow B) & \vdash u:S(\gB) } } }
      \]
    \end{enumerate}
  \item[\rlinzl] Let $\vdash \z[\B\Rightarrow B] t:A$. Then, by
    Lemma~\ref{lem:generation}, one of the following cases happens:
    \begin{enumerate}
    \item $\vdash \z[\B\Rightarrow B]:\gB\Rightarrow C$ and $\vdash t:\gB$, with
      $C\preceq A$. Then, by Lemma~\ref{lem:generation} again, $S(\B\Rightarrow
      B)\preceq \gB\Rightarrow C$, which is impossible by
      Lemma~\ref{lem:noBasisPrec}.
    \item $\vdash\z[\B\Rightarrow B]:S(\gB\Rightarrow C)$ and $\vdash t:S(\gB)$,
      with $S(C)\preceq A$. By Lemma~\ref{lem:generation} again, $S(\B\Rightarrow
      B)\preceq S(\gB\Rightarrow C)$.
      By Lemma~\ref{lem:min}, $\mathsf{min}(\B\Rightarrow
      B)=\mathsf{min}(\Psi\Rightarrow C)$, so $\mathsf{min}(B)=\mathsf{min}(C)$,
      and by Lemma~\ref{lem:minlqA}, $\mathsf{min}(C)\preceq C$, hence, 
      $\mathsf{min}(B)\preceq C$, and then $S(\mathsf{min}(B))\preceq
      S(C)\preceq A$. By rule $\tax_{\vec 0}$, $\vdash \z[\mathsf{min}(B)]:S(\mathsf{min}(B))$, hence we
      conclude by rule $\preceq$.
    \end{enumerate}
  \item[\rneut] Let $\vdash \pair\z t:A$. Then, by
    Corollary~\ref{cor:simplification}, $\vdash t:A$.
  \item[\runit] Let $\vdash 1.t:A$. Then, by Corollary~\ref{cor:simplification},
    $\vdash t:A$.
  \item[\rzeros] Let $\vdash 0.t:A$, with $\vdash t:B$. Then, we must show that $\vdash\z[\mathsf{min}(B)]:A$.
    
    By Lemma~\ref{lem:generation}, $\vdash t:C$ and $S(C)\preceq A$.
    By Lemma~\ref{lem:min}, $\mathsf{min}(B)=\mathsf{min}(C)$. By
    Lemma~\ref{lem:minlqA}, $\mathsf{min}(C)\preceq C$. Therefore,
    $S(\mathsf{min}(B))=S(\mathsf{min}(C))\preceq S(C)\preceq A$.
    By rule $\tax_{\vec 0}$, $\vdash\z[\mathsf{min}(B)]:S(\mathsf{min}(B))$,
    hence we conclude by rule $\preceq$.
  \item[\rzero] Let $\vdash \alpha.\z[B]:A$.
    By Lemma~\ref{lem:generation}, $\vdash\z[B]:C$ with $S(C)\preceq A$.
    Then, by Lemma~\ref{lem:generation} again, $S(B)\preceq C$.
    In addition, by Lemma~\ref{lem:minlqA}, $\mathsf{min}(B)\preceq B$.
    Therefore, $S(\mathsf{min}(B))\preceq S(B)\preceq C\preceq S(C)\preceq A$.
    Since, by rule $\tax_{\vec 0}$, $\vdash\z[\mathsf{min}(B)]:S(\mathsf{min}(B))$, we
    conclude by rule $\preceq$ that $\vdash\z[\mathsf{min}(B)]:A$.
  \item[\rprod] Let $\vdash \alpha.(\beta.t):A$. By
    Corollary~\ref{cor:simplification}, $\vdash\beta.t:A$. Then, by
    Corollary~\ref{cor:simplification} again, $\vdash(\alpha\times\beta).t:A$.
  \item[\rdists] Let $\vdash \alpha.\pair tu:A$. By Lemma~\ref{lem:generation},
    $\vdash\pair tu:B$, with $S(B)\preceq A$. Then, by
    Corollary~\ref{cor:simplification}, $\vdash t:B$ and $\vdash u:B$. Hence, by
    rule $S_I^\alpha$, $\vdash\alpha.t:S(B)$ and $\vdash\alpha.u:S(B)$. We
    conclude by rules $S_I^+$ and $\preceq$.
  \item[\rfact] Let $\vdash \pair{\alpha.t}{\beta.t}:A$. By
    Corollary~\ref{cor:simplification}, $\vdash\alpha.t:A$. Then, by
    Corollary~\ref{cor:simplification} again, $\vdash(\alpha+\beta).t:A$.
  \item[\rfacto] Let $\vdash \pair{\alpha.t}{t}:A$. By
    Corollary~\ref{cor:simplification}, $\vdash\alpha.t:A$. Then, by
    Corollary~\ref{cor:simplification} again, $\vdash(\alpha+1).t:A$.
  \item[\rfactt] Let $\vdash \pair{t}{t}:A$. By Lemma~\ref{lem:generation},
    $\vdash t:B$, with $S(B)\preceq A$. Then, by rule $S_I^\alpha$, $\vdash
    2.t:S(B)$. We conclude by rule $\preceq$.
  \item[\rzeroS] Let $\vdash\z[A]:B$. Then, by Lemma~\ref{lem:generation},
    $S(A)\preceq B$.
    By Lemma~\ref{lem:minlqA}, $\mathsf{min}(A)\preceq A$, hence
    $S(\mathsf{min}(A))\preceq S(A)$.
    By rule $\tax_{\vec 0}$, $\vdash\z[\mathsf{min}(A)]:S(\mathsf{min}(A))$, and
    since $S(\mathsf{min}(A))\preceq S(A)\preceq B$, we conclude by by rule
    $\preceq$.
  \item[\rcomm] Let $\vdash \pair uv:A$. By Lemma~\ref{lem:generation}, $\vdash
    u:B$ and $\vdash v:B$, with $S(B)\preceq A$. So,
    \[
      \infer[\preceq] {\vdash \pair vu:A} { \infer[S_I^+] {\vdash \pair vu:S(B)}
        {\vdash v:B\qquad\vdash u:B} }
    \]
  \item[\rassoc] Let $\vdash \pair{\pair uv}w:A$. By Lemma~\ref{lem:generation},
    $\vdash\pair uv:B$ and $\vdash w:B$, with $S(B)\preceq A$. Then, by
    Corollary~\ref{cor:simplification}, $\vdash u:B$ and $\vdash v:B$. Hence,
    \[
      \infer[\preceq] {\vdash \pair u{\pair vw}:A} { \infer[S_I^+] {\vdash \pair
          u{\pair vw}:S(S(B))} { \infer[\preceq] {\vdash u:S(B)} {\vdash u:B} &
          \infer[S_I^+] {\vdash\pair vw:S(B)} {\vdash v:B & \vdash w:B} } }
    \]
  \item[\rhead] Let $\vdash\head\ (v\times u):A$, with $v\neq t_1\times t_2$. Hence, by
    Lemma~\ref{lem:generation}, $\vdash v\times u:\B^n$, with $\B\preceq
    A$. Then, by Lemma~\ref{lem:generation} again, $\vdash v:B$ and $\vdash
    u:C$, with $B\times C\preceq \B^n$.
    Lemma~\ref{lem:lqBn}   
    $B\times C=\B^n$, so $B=\B^m$.
    Since $v\in\values$, by Lemma~\ref{lem:generation}, $B$ is not a product,
    and so $B=\B\preceq A$. Therefore, we conclude by rule $\preceq$.
  \item[\rtail] Analogous to case \rhead.
  \item[\rdistsumr] Let $\vdash\Uparrow_r(\pair rs\times
    u):A$. By Lemma~\ref{lem:generation}, $S(B\times C)\preceq A$ and
    $\vdash\pair rs\times u:S(S(B)\times C)$. Then, by the same Lemma,
    $\vdash\pair rs:D$ and $\vdash u:E$, with $D\times E\preceq S(S(B)\times
    C)$, so by Lemma~\ref{lem:SAxBmE}, there exists $F,G,n$ such that
    $S(S(B)\times C)=S^n(F\times G)$ and $D\preceq F$ and $E\preceq G$.
    Therefore, $n=1$ and $S(B)\times C=F\times G$.
    Since $\vdash \pair rs:D$, by Lemma~\ref{lem:generation}, there exists $H$
    such that $S(H)\preceq
    D$, so by transitivity $S(H)\preceq F$. Then, by
    Lemma~\ref{lem:noBasisPrec}, $F$ has the form $S(I)$.
    Therefore, neither $F$ nor $S(B)$ are products, and hence $S(B)=F$ and
    $C=G$.
    Hence, $D\preceq S(B)$ and $E\preceq C$, and hence, $\vdash\pair rs:S(B)$
    and $\vdash u:C$. Then, by Corollary~\ref{cor:simplification}, $\vdash
    r:S(B)$ and $\vdash s:S(B)$. Therefore,
    \[
      \infer[\preceq] {\vdash\pair{\Uparrow_r(r\times
          u)}{\Uparrow_r(s\times u)}:A} { \infer[S_I^+]
        {\vdash\pair{\Uparrow_r(r\times
            u)}{\Uparrow_r(s\times u)}:S(S(B\times C))} {
          \infer[\Uparrow_r] {\vdash\Uparrow_r(r\times
            u):S(B\times C)} { \infer[\preceq] {\vdash r\times u:S(S(B)\times
              C)} { \infer[\times_I] {\vdash r\times u:S(B)\times C} { \vdash
                r:S(B) & \vdash u:C } } } & \infer[\Uparrow_r]
          {\vdash\Uparrow_r(s\times u):S(B\times C)} {
            \infer[\preceq] {\vdash s\times u:S(S(B)\times C)} {
              \infer[\times_I] {\vdash s\times u:S(B)\times C} { \vdash
                s:S(B) & \vdash u:C } } } } }
    \]
  \item[\rdistsuml] Analogous to case \rdistsumr.
  \item[\rdistscalr] Let $\vdash\Uparrow_r((\alpha.r)\times u):A$. By Lemma~\ref{lem:generation}, $S(B\times
    C)\preceq A$, and $\vdash((\alpha.r)\times u):S(S(B)\times C)$. Then, by
    the same Lemma, $\vdash\alpha.r:D$ and $\vdash u:E$, with $D\times E\preceq
    S(S(B)\times C)$, so by Lemma~\ref{lem:SAxBmE}, there exists $F,G,n$ such that
    $S(S(B)\times C)=S^n(F\times G)$ and $D\preceq F$ and $E\preceq G$.
    Therefore, $n=1$ and $S(B)\times C=F\times G$.
    Since $\vdash \alpha.r:D$, by Lemma~\ref{lem:generation}, there exists $H$
    such that $S(H)\preceq
    D$, so by transitivity $S(H)\preceq F$. Then, by
    Lemma~\ref{lem:noBasisPrec}, $F$ has the form $S(I)$.
    Therefore, neither $F$ nor $S(B)$ are products, and hence $S(B)=F$ and
    $C=G$.
    Hence, $D\preceq S(B)$ and $E\preceq C$, so by rule
    $\preceq$, $\vdash\alpha.r:S(B)$ and $\vdash u:C$. By
    Corollary~\ref{cor:simplification}, $\vdash r:S(B)$. Therefore,
    \[
      \infer[\preceq] {\vdash\alpha.\Uparrow_r(r\times
        u):A} { \infer[S_I^\alpha] {\vdash\alpha.\Uparrow_r(r\times u):S(S(B\times C))} { \infer[\Uparrow_r]
          {\vdash\Uparrow_r(r\times u):S(B\times C)} {
            \infer[\preceq] {\vdash r\times u:S(S(B)\times C)} {
              \infer[\times_I] {\vdash r\times u:S(B)\times C} {\vdash r:S(B)
                & \vdash u:C} } } } }
    \]
  \item[\rdistscall] Analogous to case \rdistscalr.
  \item[\rdistzr] Let $\vdash\Uparrow_r(\z[B]\times u):A$.
    By Lemma~\ref{lem:generation}, $S(B\times C)\preceq A$. 
    By lemma~\ref{lem:minlqA}, $\mathsf{min}(B\times C)\preceq B\times C$,
    hence, $S(\mathsf{min}(B\times C))\preceq S(B\times C)$.
    By rule $\tax_{\vec
      0}$, $\vdash\z[\mathsf{min}(B\times C)]:S(\mathsf{min}(B\times C))$.
    Hence, since $S(\mathsf{min}(B\times C))\preceq S(B\times C)\preceq A$, we conclude by rule
    $\preceq$.
  \item[\rdistzl] Analogous to case \rdistzr.
  \item[\rdistcasum] Let $\vdash\Uparrow\pair tu:A$. Then, by
    Lemma~\ref{lem:generation}, $S(C\times D)\preceq A$, $\vdash\Uparrow
    t:S(C\times D)$ and $\vdash\Uparrow u:S(C\times D)$. We
    conclude by rules $S_I^+$ and $\preceq$.
  \item[\rdistcascal] Let $\vdash\Uparrow(\alpha.t):A$. Then, by
    Lemma~\ref{lem:generation}, $S(C\times D)\preceq A$, and $\vdash\Uparrow t:S(C\times D)$. We conclude by rules $S_I^\alpha$ and
    $\preceq$.
  \item[\rcaneutr] Let $\vdash \Uparrow_r(b\times r):A$,
    with $b\in\basis$. Then, by Lemma~\ref{lem:generation}, $\vdash b\times
    r:S(S(B)\times C)$ and $S(B\times C)\preceq A$. Then, by
    Lemma~\ref{lem:generation} again, $\vdash b:D$ and $\vdash r:E$, with
    $D\times E\preceq S(S(B)\times C)$.
    Without lost of generality, let $b=\prod_{i=1}^nb_i$ where each $b_i$ is not
    a product. Then, by Lemma~\ref{lem:generation}, $\vdash b_i:D_i$ with
    $\prod_{i=1}^n D_i\preceq D$, and, by the same lemma, $D_i$ are not
    products.
    Therefore, $D_1\times\prod_{i=2}^n D_i\times E\preceq S(S(B)\times C)$, so by Lemma~\ref{lem:SAxBmE}, there exists $F,G,n$ such that
    $S(S(B)\times C)=S^n(F\times G)$ and $D_1\preceq F$ and $\prod_{i=1}^n D_i\times E\preceq G$.
    Therefore, $n=1$ and $S(B)\times C=F\times G$.
    Since $D_1$ is not a product, $F$ is not a product. Hence, since 
    neither $F$ nor $S(B)$ are products, we have $S(B)=F$ and $C=G$.
    $D_1\preceq S(B)$ and $\prod_{i=2}^nD_i\times E\preceq C$,
    hence, $\vdash b_1:S(B)$ and $\vdash\prod_{i=2}^nb_i\times r:\prod_{i=2}^n
    D_i\times E$. Therefore, by
    Corollary~\ref{cor:basisTermNoSup}, $\vdash b_1:B$, and so, by rule
    $\times_I$, $\vdash b\times r:B\times C$, and by rule $\preceq$, $\vdash
    b\times r:S(B\times C)$.
  \item[\rcaneutl] Analogous to case $\rcaneutr$.
  \item[\rproj] Let $\vdash\pi_j(\sum_{i=1}^n \may[\alpha_i]\prod_{h=1}^m
    b_{hi}):A$.
    Then, by Lemma~\ref{lem:generation}, we have that
    $\B^j\times S(\B^{m'-1})\preceq A$. Hence, we have the derivation from Figure~\ref{fig:deriv}.
    \begin{figure}[t]
      \centering
      \[
        \infer[\preceq]{\vdash\prod\limits_{h=1}^jb_{hk}\times\sum\limits_{i\in P}\left(
            \frac{\alpha_i}{\sqrt{\sum\limits_{r\in P}|\alpha_r|^2}}
          \right)\prod\limits_{h=j+1}^ mb_{hi}:A }
        {
          \infer[\times_I]{\vdash\prod\limits_{h=1}^jb_{hk}\times\sum\limits_{i\in P}\left(
              \frac{\alpha_i}{\sqrt{\sum\limits_{r\in P}|\alpha_r|^2}}
            \right)\prod\limits_{h=j+1}^ mb_{hi}:\B^ j\times S(\B^ {n-j})}
          {
            \infer[\times_I]{\vdash\prod_{h=1}^ jb_{hk}:\B^ j}{\forall h &\infer[\tax_{\ket x}]{\vdash b_{hk}:\B}{}}
            &
            \infer[\preceq]{
              \vdash\sum\limits_{i\in
                P}\left(\frac{\alpha_i}{\sqrt{\sum\limits_{r\in
                      P}|\alpha_r|^2}}\right)\prod\limits_{h=j+1}^mb_{hi}:S(\B^{n-1})}
            {
              \infer[S_I^+]{
                \vdash\sum\limits_{i\in
                  P}\left(\frac{\alpha_i}{\sqrt{\sum\limits_{r\in
                        P}|\alpha_r|^2}}\right)\prod\limits_{h=j+1}^mb_{hi}:S(S(\B^{n-1}))
              }
              {
                \forall i\in P &
                \infer[S_I^\alpha]{
                  \vdash\left(\frac{\alpha_i}{\sqrt{\sum\limits_{r\in
                          P}|\alpha_r|^2}}\right)\prod\limits_{h=j+1}^mb_{hi}:S(\B^{n-1})
                }
                {
                  \infer[\times_I]{
                    \vdash\prod\limits_{h=j+1}^mb_{hi}:\B^{n-1}
                  }
                  {\forall h & \infer[\tax_{\ket x}]{\vdash b_{hi}:\B}{}}
                }
              }
            }
          }
        }
      \]
      \caption{Derivation from case $\rproj$ on Theorem~\ref{thm:SR}.}
      \label{fig:deriv}
    \end{figure}

  \item[Contextual rules]~ Let $t\lra[p] u$. Then,
    \begin{description}
    \item[($tv\lrap uv$)] Let $\vdash tv:A$. By Lemma~\ref{lem:generation}, one
      of the following cases happens:
      \begin{itemize}
      \item $\vdash t:\gB\Rightarrow B$ and $\vdash v:\gB$, with $B\preceq A$.
        Then, by the induction hypothesis, $\vdash u:\gB\Rightarrow B$. We
        conclude by rules $\Rightarrow_E$ and $\preceq$.
      \item $\vdash t:S(\gB\Rightarrow B)$ and $\vdash v:S(\gB)$, with
        $S(B)\preceq A$. Then, by the induction hypothesis, $\vdash
        u:S(\gB\Rightarrow B)$. We conclude by rules $\Rightarrow_{ES}$ and
        $\preceq$.
      \end{itemize}
    \item[($(\lambda x^{B}.v)t\lrap(\lambda x^{B}.v)u$)] Let $\vdash (\lambda
      x^{B}.v)t:A$. By Lemma~\ref{lem:generation}, one of the following cases
      happens:
      \begin{itemize}
      \item $\vdash (\lambda x^{B}.v):\gB\Rightarrow B$ and $\vdash t:\gB$, with
        $B\preceq A$. Then, by the induction hypothesis, $\vdash u:\gB$. We
        conclude by rules $\Rightarrow_E$ and $\preceq$.
      \item $\vdash (\lambda x^{B}.v):S(\gB\Rightarrow B)$ and $\vdash
        t:S(\gB)$, with $S(B)\preceq A$. Then, by the induction hypothesis,
        $\vdash u:S(\gB)$. We conclude by rules $\Rightarrow_{ES}$ and
        $\preceq$.
      \end{itemize}
    \item[($\pair tv\lrap\pair uv$)] Let $\vdash\pair tv:A$. By
      Lemma~\ref{lem:generation}, $\vdash t:B$ and $\vdash v:B$, with
      $S(B)\preceq A$. Then, by the induction hypothesis, $\vdash u:B$. We
      conclude by rules $S_I^+$ and $\preceq$.
    \item[($\alpha.t\lrap \alpha.u$)] Let $\vdash\alpha.t:A$. By
      Lemma~\ref{lem:generation}, $\vdash t:S(B)$, with $S(B)\preceq A$. Then,
      by the induction hypothesis, $\vdash u:S(B)$. We conclude by rules
      $S_I^\alpha$ and $\preceq$.
    \item[($\pi_jt\lrap \pi_ju$)] Let $\vdash\pi_jt:A$. By
      Lemma~\ref{lem:generation}, $\vdash t:S(\B^n)$, and $\B^j\times
      S(\B^{n-j})\preceq A$ Then, by the induction hypothesis, $\vdash
      u:S(\B^n)$. We conclude by rules $S_E$ and $\preceq$.
    \item[($t\times v\lrap u\times v$)] Let $\vdash t\times v:A$. By
      Lemma~\ref{lem:generation}, $\vdash t:B$ and $\vdash v:C$, with $B\times
      C\preceq A$. Then, by the induction hypothesis, $\vdash u:B$. We conclude
      by rules $\times_I$ and $\preceq$.
    \item[($v\times t\lrap v\times u$)] Analogous to previous case.
    \item[($\Uparrow_r t\lrap\Uparrow_r u$)] Let $\vdash\Uparrow_r t:A$. By
      Lemma~\ref{lem:generation}, $\vdash t:S(S(B)\times C)$, and $S(B\times
      C)\preceq A$. Then, by the induction hypothesis, $\vdash u:S(S(B)\times
      C)$, and so, by rule $\Uparrow_r$, $\vdash\Uparrow_r u:S(B\times C)$. We
      conclude by rule $\preceq$.
    \item[($\Uparrow_\ell t\lrap\Uparrow_\ell u$)] Let $\vdash\Uparrow_\ell t:A$. By
      Lemma~\ref{lem:generation}, $\vdash t:S(S(B)\times C)$, and $S(B\times
      C)\preceq A$. Then, by the induction hypothesis, $\vdash u:S(S(B)\times
      C)$, and so, by rule $\Uparrow_\ell$, $\vdash\Uparrow_\ell u:S(B\times C)$. We
      conclude by rule $\preceq$.
    \item[($\head\ t\lrap \head\ u$)] Let $\vdash\head\ t:A$. By
      Lemma~\ref{lem:generation}, $\vdash t:\B^n$, with $\B\preceq A$.
      Then, by the induction hypothesis, $\vdash u:\B^n$. We conclude by
      rules $\times_{Er}$ and $\preceq$.
    \item[($\tail\ t\lrap \tail\ u$)] Let $\vdash\tail\ t:A$. By
      Lemma~\ref{lem:generation}, $\vdash t:\B^n$, with $\B^{n-1}\preceq A$.
      Then, by the induction hypothesis, $\vdash u:\B^n$. We conclude by
      rules $\times_{Er}$ and $\preceq$.
      \qed
    \end{description}
  \end{description}
\end{proof}

\section{Detailed proofs of Section~\ref{sec:SN} (Strong normalisation)}\label{ap:SN}
\recap{Lemma}{lem:size}{
  If \( t \lra r \) by any of the rules in the groups linear distribution,
  vector space axioms or lists, or their contextual closure, then $\size r \geq \size t$. Moreover, $\size r
  = \size t$ if and only if the rule is $\rzeroS$.
}
\begin{proof}
  By induction and rule by rule analysis:
 \begin{description}
 \item[\rlinr:] \( {t}{(u + v)} \lra {t}{u} + {t}{v} \).
   \begin{align*}
     \size{{t}{(u + v)}}
     & = (3 \size{t} + 2)(3 \size{u + v} + 2) \\
     & = (3 \size{t} + 2)(3 (2 + \size{u} + \size{v}) + 2) \\
     & = (3 \size{t} + 2)(8 + 3 \size{u} + 3 \size{v}) \\
     & = 4 (3 \size{t} + 2) + (3 \size{t} + 2)(4 + 3 \size{u} + 3 \size{v}) \\
     & = 12 \size{t} + 8 + (3 \size{t} + 2)(4 + 3 \size{u} + 3 \size{v}) \\
     & = 12 \size{t} + 8 + (3 \size{t} + 2)((3 \size{u} + 2) + (3 \size{v} + 2)) \\
     & = 12 \size{t} + 8 + (3 \size{t} + 2)(3 \size{u} + 2) + (3 \size{t} + 2)(3 \size{v} + 2) \\
     & = 12 \size{t} + 8 + \size{{t}{u}} + \size{{t}{v}} \\
     & = 12 \size{t} + 6 + \size{{t}{u} + {t}{v}} \\
     & > \size{{t}{u} + {t}{v}}
   \end{align*}
 \item[\rlinscalr:] \( {t}{(\alpha . u)} \lra \alpha . {t}{u} \)
   \begin{align*}
     \size{{t}{(\alpha . u)}}
     & = (3 \size{t} + 2)(3 \size{\alpha . u} + 2) \\
     & = (3 \size{t} + 2)(3 (1 + 2 \size{u}) + 2) \\
     & = (3 \size{t} + 2)(6 \size{u} + 5) \\
     & = 3 \size{t} + 2 + (3 \size{t} + 2)(6 \size{u} + 4) \\
     & = 3 \size{t} + 2 + 2 (3 \size{t} + 2)(3 \size{u} + 2) \\
     & = 3 \size{t} + 2 + 2 \size{{t}{u}} \\
     & = 3 \size{t} + 1 + \size{\alpha . {t}{u}} \\
     & > \size{\alpha . {t}{u}}
   \end{align*}
 \item[\rlinzr:] \( {t}{\nullvec{\B}} \lra \nullvec{\mathsf{min}(A)} \)
   \begin{align*}
     \size{{t}{\nullvec{\B}}} & = (3 \size{t} + 2) (3 \size{\nullvec{\B}} +
                                2) > 0 = \size{\nullvec{\mathsf{min}(A)}}
   \end{align*}
 \item[\rlinl:] \( {(t + u)}{v} \lra {t}{v} + {u}{v} \)
   \begin{align*}
     \size{{(t + u)}{v}}
     & = (3 \size{t + u} + 2)(3 \size{v} + 2) \\
     & = (3 (2 + \size{t} + \size{u}) + 2)(3 \size{v} + 2) \\
     & = (3 \size{t} + 3 \size{u} + 8)(3 \size{v} + 2) \\
     & = (3 \size{t} + 3 \size{u} + 4)(3 \size{v} + 2) + 4 (3 \size{v} + 2) \\
     & = ((3 \size{t} + 3 \size{u} + 4)(3 \size{v} + 2) + 2) + 12 \size{v} + 6 \\
     & = \size{{t}{v} + {u}{v}} + 12 \size{v} + 6 \\
     & > \size{{t}{v} + {u}{v}}
   \end{align*}
 \item[\rlinscall:] \( {(\alpha . t)}{u} \lra \alpha . {t}{u} \)
   \begin{align*}
     \size{{(\alpha . t)}{u}}
     & = (3 \size{\alpha . t} + 2)(3 \size{u} + 2) \\
     & = (3 (1 + 2 \size{t}) + 2)(3 \size{u} + 2) \\
     & = (6 \size{t} + 5)(3 \size{u} + 2) \\
     & = (6 \size{t} + 4)(3 \size{u} + 2) + (3 \size{u} + 2) \\
     & = 2 (3 \size{t} + 2)(3 \size{u} + 2) + 3 \size{u} + 2 \\
     & = \size{\alpha . {t}{u}} + 3 \size{u} + 1 \\
     & > \size{\alpha . {t}{u}}
   \end{align*}
 \item[\rlinzl:] \( {\z[\B \Rightarrow A]}{t} \lra \z[\mathsf{min}(A)] \)
   \[
     \size{{\z[\B \Rightarrow A]}{t}} = (3 \size{\z[\B \Rightarrow A]} + 2)(3 \size{t} + 2) = 6 \size{t} + 4 > 0 = \size{\z[\mathsf{min}(A)]}
   \]
 \item[\rneut:] \( \z + t \lra t \)
   \[
     \size{\nullvec{A} + t} = 2 + \size{\nullvec{A}} + \size{t} = 2 + \size{t} > \size{t}
   \]
 \item[\runit:] \( 1 . t \lra t \)
   \[
     \size{1 . t} = 1 + 2 \size{t} > \size{t}
   \]
 \item[\rzeros:] \( 0 . t \lra \z[\mathsf{min}(A)]\)
   \[
     \size{0. t} = 1 + 2 \size{t} > 0 = \size{\z[\mathsf{min}(A)]}
   \]
 \item[\rzero:] \( \alpha . \nullvec{A} \lra \z[\mathsf{min}(A)] \)
   \[
     \size{\alpha . \nullvec{A}} = 1 + 2 \size{\nullvec{A}} = 1 > 0 = \size{\z[\mathsf{min}(A)]}
   \]
 \item[\rprod:] \( \alpha . (\beta . t) \lra (\alpha \times \beta) . t \)
   \begin{align*}
     \size{\alpha . (\beta . t)}
     & = 1 + 2 \size{\beta . t} \\
     & = 1 + 2 (1 + 2 \size{t}) \\
     & = 3 + 4 \size{t} \\
     & > 1 + 2 \size{t} \\
     & = \size{(\alpha \times \beta) . t}
   \end{align*}
 \item[\rdists:] \( \alpha . (t + u) \lra (\alpha . t + \alpha . u) \)
   \begin{align*}
     \size{\alpha . (t + u)}
     & = 1 + 2 \size{t + u} \\
     & = 5 + 2 \size{t} + 2 \size{u} \\
     & = 3 + \size{\alpha . t} + \size{\alpha . u} \\
     & = 1 + \size{\alpha . t + \alpha . u} \\
     & > \size{\alpha . t + \alpha . u}
   \end{align*}
 \item[\rfact:] \( \alpha . t + \beta . t \lra (\alpha + \beta) . t \)
   \begin{align*}
     \size{\alpha . t + \beta . t}
     & = 2 + \size{\alpha . t} + \size{\beta . t} \\
     & = 4 + 4 \size{t} \\
     & > 1 + 2 \size{t} \\
     & = \size{(\alpha + \beta) . t}
   \end{align*}
 \item[\rfacto:] \( \alpha . t + t \lra (\alpha + 1) . t \)
   \begin{align*}
     \size{\alpha . t + t}
     & = 2 + \size{\alpha . t} + \size{t} \\
     & = 3 + 3 \size{t} \\
     & > 1 + 2 \size{t} \\
     & = \size{(\alpha + 1) . t}
   \end{align*}
 \item[\rfactt:] \( t + t \lra 2 . t \)
   \[
     \size{t + t} = 2 + 2 \size{t} > 1 + 2 \size{t} = \size{2 . t}
   \]
 \item[\rzeroS:] \(\z\lra\z[\mathsf{min}(A)]\)
   \[
     \size{\z}=0=\size{\z[\mathsf{min}(A)]}
   \]

 \item[\rhead:] \( \head\ t \times r \lra t \)
   \[
     \size{\head\ t \times r} = 1 + \size{t \times r} = 1 + \size{t} + \size{r} > \size{t}
   \]
 \item[\rtail:] \( \tail\ {t \times r} \lra r \)
   \[
     \size{\tail\ {t \times r}} = 1 + \size{t \times r} = 1 + \size{t} + \size{r} > \size{r}
   \]
 \item[Contextual rules:] If $t\lra r$, then, by the induction hypothesis,
   $\size t\geq \size r$, and hence, $\size{C[t]}\geq \size{C[r]}$, where $C[\cdot]$ is
   a context with one hole.
    \qed
  \end{description}
\end{proof}

\recap{Lemma}{lem:ri_in_snset_implies_sum_ri_in_snset}{
  If for every \( i \in \{1, \ldots, n\} \) we have \( r_i \in \SN \),
  then \( \sum_{i = 1}^{n} \may r_i \in \SN \).
}
\begin{proof}
  Induction on the lexicographic order of \( (\sum_{i = 1}^{n} |r_i|,
  \size{\sum_{i = 1}^{n} \may r_i}) \) to show that \( \Red{\sum_{i =
      1}^{n}\may r_i} \subseteq \SN \). Let $t\in\Red{\sum_{i=1}^n\may r_i}$.
  The possibilities are:
  \begin{itemize}
  \item \( t = \sum_{i = 1}^n\may s_i \) where for all \( i \neq k \), \(
    s_i = r_i \) and \( r_k \lrap s_k \). Since \( \sum_{i = 1}^n
    \lpl{s_i} < \sum_{i = 1}^n \lpl{r_i} \), we conclude by the induction
    hypothesis.
  \item \( t = \sum_{i = 1}^n s_i \) where for all \( i \neq k \), \(
    s_i = \may r_i \) and \( \alpha_k . r_k \lra s_k \).
    Then, the reduction $\alpha_k . r_k\lra s_k$, is by one of the following
    rules: \runit, \rzeros, \rzero, \rprod, or \rdists.
    In any of these cases \( \sum_{i = 1}^{n} |s_i| \leq \sum_{i = 1}^{n} |r_i| \) and, by Lemma~\ref{lem:size}, \( \size{t} < \size{\sum_{i = 1}^{n}\may r_i} \). Hence, we conclude by the induction
    hypothesis.
  \item \( t = \sum_{\substack{i \neq j\\ i\neq k}}
    \may  r_i + ([\alpha_j] + [\alpha_k]) r_j \), where \( r_j = r_k \) (rule
    \rfact, \rfacto, or \rfactt). In this case \( (\sum_{i
      \neq j} |r_i|) + |r_j| \leq \sum_i |r_i| \) and by Lemma~\ref{lem:size},
    \( \size{t} < \size{\sum_{i = 1}^{n} \may r_i}\). Hence, we conclude by the
    induction hypothesis. \qed
  \end{itemize}
\end{proof}

\recap{Lemma}{lem:t_implies_proj_t}{
  If \( t \in \SN \), then \( \pi_{j}{t} \in \SN \).
}
\begin{proof}
  We show by induction on $|t|$ that $\Red{\pi_{j}{t}}\subseteq\SN$.
  Let $r\in\Red{\pi_jt}$. The possibilities are:
  \begin{itemize}
  \item \( r=\pi_{j}{t'} \) where \( t \lrap t' \). Since \( \lpl{t'} < \lpl{t} \)
    and \( t' \in \SN \), we conclude by the induction hypothesis.
  \item \( r = \prod_{h = 1}^j b_{hk} \times \sum_{i \in P}
    \left(\frac{\alpha_i}{\sqrt{\sum_{r \in P} |\alpha_r|^2}}\right)
    \prod_{h=j+1}^m b_{hi}\), $t=\sum_{i=1}^n\may\prod_{h=1}^m b_{hi}$.
    Any sequence starting on $r$ will only use vector space axioms rules, which,
    by Lemma~\ref{lem:size} reduce the size of the term, except for
    \rzeroS, which anyway can be used only a finite number of times. Therefore, $r\in\SN$.
    \qed
  \end{itemize}
\end{proof}

\recap{Lemma}{lem:cr}{
  For all \( A \), the following properties hold:
  \begin{description}
    \item[(CR1)] If \( t \in \interp{A} \), then \( t \in \SN \).
    \item[(CR2)] If \( t \in \interp{A} \), then \( \Red{t} \subseteq \interp{A} \).
    \item[(CR3)] If \( t \in \mathcal{N} \), $t$ has the same type as all the
      terms in $\interp A$, and \( \Red{t} \subseteq \interp{A} \) then \( t \in \interp{A} \).
    \item[(HAB)] For all \( x^A \), \( x \in \interp{A} \).
    \item[(LIN1)] If \( t \in \interp{A} \) and \( r \in \interp{A} \), then \( t + r \in \interp{A} \).
    \item[(LIN2)] If \( t \in \interp{A} \) then \( \alpha . t \in \interp{A} \).
    \item[(NULL)] \( \nullvec{A} \in \interp{A} \)
  \end{description}
}
\begin{proof}
  We proceed by induction over \( A \).
  \begin{itemize}
    \item Let \( A = \B \).
    \begin{description}
      \item[CR1]
       Let \( t \in \interp{\B} \). By definition, \( \interp{\B} \subseteq \SN \), so \( t \in \SN \).
      \item[CR2]
       Let \( t \in \interp{\B} \). By definition, \( \interp{\B} \subseteq
      \SN \), so \( t \in \SN \) and \( \Red{t} \subseteq \SN \). Furthermore,
      since \( t \in \interp{\B} \), we have \(  t : S(\B) \). Let \( r
      \in \Red{t} \), then \(  r : S(\B) \). Therefore, by definition, \( \Red{t} \subseteq \interp{\B} \).
      \item[CR3]
       Let \( t \in \mathcal{N} \) and $t:S(\B)$ where \( \Red{t} \subseteq \interp{\B} \).
      Since \( \Red{t} \subseteq \interp{\B} \subseteq \SN \), we have \( t \in
      \SN \).
      Therefore, by definition, \(
       t : \interp{\B} \).
      \item[HAB]
       Since \(  x^{\B} \in \SN \) and \(  x^{\B} : \B \preceq S(\B) \), we have, by definition, \( x^{\B} \in \interp{\B} \).
      \item[LIN1]
       Since \( t \in \interp{\B} \) and \( r \in \interp{\B} \), we have, by definition of \( \interp{\B} \), that \( t : S(\B) \), \( r : S(\B) \), \( t \in \SN \) and \( r \in \SN \). Then, \( t + r : S(S(\B)) \preceq S(\B) \) and, by Lemma~\ref{lem:ri_in_snset_implies_sum_ri_in_snset}, \( t + r \in \SN \). Therefore, by definition, \( t + r \in \interp{\B} \).
      \item[LIN2]
       Since \( t \in \interp{\B} \), we have, by definition of \( \interp{\B} \) that \( t : S(\B) \) and \( t \in \SN \). Then, \( \alpha . t : S(S(\B)) \preceq S(\B) \) and, by Lemma~\ref{lem:ri_in_snset_implies_sum_ri_in_snset}, \( \alpha . t \in \SN \). Therefore, by definition, \( \alpha . t \in \interp{\B} \).
      \item[NULL]
       Since \( \nullvec{\B} : S(\B) \) and \( \nullvec{\B} \in \SN \), we have by definition of \( \interp{\B} \) that \( \nullvec{\B} \in \interp{\B} \).
    \end{description}
    \item Let \( A = B \times C \)
    \begin{description}
      \item[CR1]
         Since \( t \in \interp{B \times C} \), we have by definition that \( t \in \SN \).
      \item[CR2]
         Since \( t \in \interp{B \times C} \), we have \( t : S(S(B) \times
        S(C)) \). Let \( t' \in \Red{t} \). Since $t'\in\Red t$,
          we have that \( t' : S(S(B) \times S(C)) \). On the other hand, since \( t \in \interp{B \times C} \), we have that \( t \in \SN \). Then, \( t' \in \SN \).
         Therefore, by definition, \( t' \in \interp{B \times C} \), which means \( \Red{t} \subseteq \interp{B \times C} \)
      \item[CR3]
         Let \( t \in \mathcal{N} \) and $t:S(S(B)\times S(C))$ where \( \Red{t}
        \subseteq \interp{B \times C} \).
        Since \( \interp{B \times C}\subseteq\SN \), we have \( t \in \SN \).
        Therefore, by definition, \( t \in \interp{B \times C} \).
      \item[HAB]
         Since \( x^{B \times C} : B \times C \preceq S(S(B) \times S(C)) \) and \( x^{B \times C} \in \SN \), we have by definition that \( x^{B \times C} \in \interp{B \times C} \).
      \item[LIN1]
         Since \( t \in \interp{B \times C} \), we have that \( t : S(S(B) \times S(C)) \) and \( t \in \SN \). Similarly, we have that \( r : S(S(B) \times S(C)) \) and \( r \in \SN \). Then, \( t + r : S(S(B) \times S(C)) \) and, by Lemma~\ref{lem:ri_in_snset_implies_sum_ri_in_snset}, \( t + r \in \SN \).
         Therefore, by definition, \( t + r \in \interp{B \times C} \).
      \item[LIN2]
         Since \( t \in \interp{B \times C} \), we have that \( t : S(S(B) \times S(C)) \) and \( t \in \SN \). Then, \( \alpha . t : S(S(S(B) \times S(C))) \preceq S(S(B) \times S(C)) \) and, by Lemma~\ref{lem:ri_in_snset_implies_sum_ri_in_snset}, \( \alpha . t \in \SN \).
         Therefore, by definition, \( \alpha . t \in \interp{B \times C} \).
      \item[NULL]
         Since \( \nullvec{B \times C} : S(B \times C) \preceq S(S(B) \times S(C)) \) and \( \nullvec{B \times C} \in \SN \), we have by definition that \( \nullvec{B \times C} \in \interp{B \times C} \).
    \end{description}
    \item Let \( A = \Psi \Rightarrow B \)
    \begin{description}
      \item[CR1]
         Given \( t \in \interp{\Psi \Rightarrow B} \), we want to show that \( t \in \SN \). Let \( r \in \interp{\Psi} \) (note that by induction hypothesis (HAB), such \( r \) exists). By definition, we have that \( \app{t}{r} \in \interp{B} \).
        And by induction hypothesis, we have that \( tr \in \SN \), which means, \( \lpl{tr} \) is finite.
        And since \( \lpl{t} \leq \lpl{\app{t}{r}} \), we have that \( \lpl{t} \) is finite, and therefore, \( t \in \SN \).
      \item[CR2]
         Given \( t \in \interp{\Psi \Rightarrow B} \), we want to show that \( \Red{t} \subseteq \interp{\Psi \Rightarrow B} \), which means that given \( t' \in \Red{t} \), \( t' \in \interp{\Psi \Rightarrow B} \). By definition of \( \interp{\Psi \Rightarrow B} \), this is the same as showing that \( t' : S(\Psi \Rightarrow B) \) and, for all \( r \in \interp{\Psi} \), \( \app{t'}{r} \in \interp{B} \).
         Since \( t \in \interp{\Psi \Rightarrow B} \), we have by definition
        that, for all \( r \in \interp{\Psi} \), \( \app{t}{r} \in \interp{B}
        \). And by induction hypothesis, this implies that, for all \( r \in
        \interp{\Psi} \), \( \Red{\app{t}{r}} \subseteq \interp{B} \). In
        particular, given \( t' \in \Red{t} \), we have that, for all \( r \in
        \interp{\Psi} \), \( \app{t'}{r} \in \Red{\app{t}{r}} \subseteq
        \interp{B} \). And since \( t \in \interp{\Psi \Rightarrow B} \), we
        have by definition that \( t : S(\Psi \Rightarrow B) \). Since
        $t'\in\Red t$, we have that \( t' : S(\Psi \Rightarrow B) \).
      \item[CR3]
         Given \( t \in \mathcal{N} \) and $t:S(\Psi\Rightarrow\B)$ where \(
        \Red{t} \subseteq \interp{\Psi \Rightarrow B} \), we want to show that
        \( t \in \interp{\Psi \Rightarrow B} \). By definition, this is the same
        than showing that for all \( r \in \interp{\Psi} \), \( \app{t}{r} \in
        \interp{B} \). By induction hypothesis, it suffices to show that for all
        \( r \in \interp{\Psi} \), \( \Red{\app{t}{r}} \in \interp{B} \). Notice
        that if $r\in\interp{\Psi}$, then $r:S(\Psi)$, and since
        $t:S(\Psi\Rightarrow B)$, we have $\app tr:S(B)$.
        Let \( r \in \interp{\Psi} \). By induction hypothesis (CR1), we have that \( r \in \SN \), which means \( \lpl{r} \) exists. Therefore, we can proceed by induction (2) over \( (\lpl{r}, \size{r}) \).
        We analyze the reducts of \( \app{t}{r} \):
        \begin{itemize}
          \item \( \app{t}{r} \reducesto \app{t'}{r} \) where \( t \reducesto t' \)
             Since \( t' \in \Red{t} \), we have that \( t' \in \interp{\Psi \Rightarrow B} \). And since \( r \in \interp{\Psi} \), we have by definition that \( \app{t'}{r} \in \interp{B} \).
          \item \( \app{t}{r} = \app{(\ite{}{u}{v})}{r} \reducesto \app{(\ite{}{u}{v})}{r'} \) where \( r \reducesto r' \)
             Since \( r \in \interp{\Psi} \), we have by induction hypothesis (CR2) that \( r' \in \interp{\Psi} \). And since \( \lpl{r'} < \lpl{r} \), we have by induction hypothesis (2) that \( \app{(\ite{}{u}{v})}{r'} \in \interp{B} \).
          \item \( \app{t}{r} = \app{t}{(r_1 + r_2)} \reducesto \app{t}{r_1} + \app{t}{r_2} \)
             Since \( \lpl{r_1} \leq \lpl{r} \) and, by Lemma~\ref{lem:size}, \( \size{r_1} < \size{r} \), we have by induction hypothesis (2) that \( \app{t}{r_1} \in \interp{B} \). Similarly, we have that \( \app{t}{r_2} \in \interp{B} \). Therefore, by Lemma~\ref{lem:cr} (LIN1), we have that \( \app{t}{r_1} + \app{t}{r_2} \in \interp{B} \).
          \item \( \app{t}{r} = \app{t}{(\alpha . r_1)} \reducesto \alpha . \app{t}{r_1} \)
             Since \( \lpl{r_1} \leq \lpl{r} \) and, by Lemma~\ref{lem:size}, \( \size{r_1} < \size{r} \), we have by induction hypothesis (2) that \( \app{t}{r_1} \in \interp{B} \). Therefore, by Lemma~\ref{lem:cr} (LIN2), we have that \( \alpha . \app{t}{r_1} \in \interp{B} \).
          \item \( \app{t}{r} = \app{t}{\nullvec{\Psi}} \reducesto \nullvec{\mathsf{min}(B)} \)
             By Lemma~\ref{lem:cr} (NULL), we have that \( \nullvec{\mathsf{min}(B)}
             \in \interp{\mathsf{min}(B)} \).
             By Lemma~\ref{lem:minlqA},
             $\mathsf{min}(B)\preceq B$, and $\z[\mathsf{min}(B)]\in\SN$, hence,
             by definition, $\z[\mathsf{min}(B)]\in\interp B$.
        \end{itemize}
      \item[HAB]
         By definition of \( \interp{\Psi \Rightarrow B} \), and since \( x^{\Psi \Rightarrow B} : S(\Psi \Rightarrow B) \), it suffices to show that, for all \( t \in \interp{\Psi} \), we have that \( \app{x^{\Psi \Rightarrow B}}{t} \in \interp{B} \).
         Let \( t \in \interp{\Psi} \). Since \( \app{x^{\Psi \Rightarrow B}}{t} \in \mathcal{N} \), it suffices to show that \( \Red{\app{x^{\Psi \Rightarrow B}}{t}} \subseteq \interp{B} \). Since \( t \in \interp{\Psi} \), we have by induction hypothesis (CR1) that \( t \in \SN \). Therefore, we can proceed by induction (2) over \( (\lpl{t}, \size{t}) \). We analyze the possible reducts of \( \app{x^{\Psi \Rightarrow B}}{t} \):
        \begin{itemize}
          \item \( \app{x^{\Psi \Rightarrow B}}{t} = \app{x^{\Psi \Rightarrow B}}{(t_1 + t_2)} \reducesto \app{x^{\Psi \Rightarrow B}}{t_1} + \app{x^{\Psi \Rightarrow B}}{t_2} \)
             Since \( \lpl{t_1} \leq \lpl{t} \) and, by Lemma~\ref{lem:size}, \( \size{t_1} < \size{t} \), we have by induction hypothesis (2) that \( \app{x^{\Psi \Rightarrow B}}{t_1} \in \interp{B} \). Similarly, we have that \( \app{x^{\Psi \Rightarrow B}}{t_2} \in \interp{B} \). Therefore, by induction hypothesis (LIN1), we have that \( \app{x^{\Psi \Rightarrow B}}{t_1} + \app{x^{\Psi \Rightarrow B}}{t_2} \in \interp{B} \).
          \item \( \app{x^{\Psi \Rightarrow B}}{t} = \app{x^{\Psi \Rightarrow B}}{(\alpha . t_1)} \reducesto \alpha . \app{x^{\Psi \Rightarrow B}}{t_1} \)
             Since \( \lpl{t_1} \leq \lpl{t} \) and, by Lemma~\ref{lem:size}, \( \size{t_1} < \size{t} \), we have by induction hypothesis (2) that \( \app{x^{\Psi \Rightarrow B}}{t_1} \in \interp{B} \). Therefore, by induction hypothesis (LIN2), we have that \( \alpha . \app{x^{\Psi \Rightarrow B}}{t_1} \in \interp{B} \).
	   \item \( \app{x^{\Psi \Rightarrow B}}{t} = \app{x^{\Psi \Rightarrow B}}{(\nullvec{\Psi})} \reducesto \nullvec{\mathsf{min}(B)} \).
             By induction hypothesis (NULL), we have that \( \nullvec{\mathsf{min}(B)} \in \interp{\mathsf{min}(B)} \).
             By Lemma~\ref{lem:minlqA},
             $\mathsf{min}(B)\preceq B$, and $\z[\mathsf{min}(B)]\in\SN$, hence,
             by definition, $\z[\mathsf{min}(B)]\in\interp B$.
        \end{itemize}
      \item[LIN1]
         By definition of \( \interp{\Psi \Rightarrow B} \), it suffices to show that \( t + r : S(\Psi \Rightarrow B) \) and, for all \( s \in \interp{\Psi} \), \( \app{(t + r)}{s} \in \interp{B} \).
         Since \( t \in \interp{\Psi \Rightarrow B} \) and \( r \in \interp{\Psi \Rightarrow B} \), we have that \( t : S(\Psi \Rightarrow B) \), \( r : S(\Psi \Rightarrow B) \) and, for all \( s \in \interp{\Psi} \), \( \app{t}{s} \in \interp{B} \) and \( \app{r}{s} \in \interp{B} \). Therefore, \( t + r : S(S(\Psi \Rightarrow B)) \preceq S(\Psi \Rightarrow B) \) and \( \app{(t + r)}{s} : S(B) \).
         It remains to show that, for all \( s \in \interp{\Psi} \), \( \app{(t + r)}{s} \in \interp{B} \). Since \( \app{(t + r)}{s} \in \mathcal{N} \) and, by type derivation, \( \app{(t + r)}{s} : S(B) \), we have by induction hypothesis (CR3) that it is sufficient to show that, for all \( s \in \interp{\Psi} \), \( \Red{\app{(t + r)}{s}} \subseteq \interp{B} \). Since \( \app{t}{s} \in \interp{B} \), \( \app{r}{s} \in \interp{B} \), y \( s \in \interp{\Psi} \), we have by induction hypothesis (CR1) that \( t \in \SN \), \( r \in \SN \) y \( s \in \SN \). Therefore, we can proceed by induction (2) over \( (\lpl{t} + \lpl{r} + \lpl{s}, \size{\app{(t + r)}{s}}) \). We analyze the possible reducts of \( \app{(t + r)}{s} \):
        \begin{itemize}
          \item \( \app{(t + r)}{s} \reducesto \app{(t' + r)}{s} \) where \( t \reducesto t' \)
           Since \( \lpl{t'} < \lpl{t} \), we have by induction hypothesis that \( \app{(t' + r)}{s} \in \interp{B} \).
          \item \( \app{(t + r)}{s} \reducesto \app{(t + r')}{s} \) where \( r \reducesto r' \)
           Analogous to the previous case.
          \item \( \app{(t + r)}{(s_1 + s_2)} \reducesto \app{(t + r)}{s_1} + \app{(t + r)}{s_2} \)
           Since \( \lpl{s_1} \leq \lpl{s} \) and \( \size{\app{(t + r)}{s_1}} < \size{\app{(t + r)}{s}} \), we have by induction hypothesis (2) that \( \app{(t + r)}{s_1} \in \interp{B} \). Similarly, we have that \( \app{(t + r)}{s_2} \in \interp{B} \). Therefore, by induction hypothesis, we have that \( \app{(t + r)}{s_1} + \app{(t + r)}{s_2} \in \interp{B} \).
          \item \( \app{(t + r)}{(\alpha . s_1)} \reducesto \alpha . \app{(t + r)}{s_1} \)
           Since \( \lpl{s_1} \leq \lpl{s} \) and \( \size{\app{(t + r)}{s_1}} < \size{\app{(t + r)}{s}} \), we have by induction hypothesis (2) that \( \app{(t + r)}{s_1} \in \interp{B} \). Therefore, by induction hypothesis (LIN2), we have that \( \alpha . \app{(t + r)}{s_1} \in \interp{B} \).
          \item \( \app{(t + r)}{\nullvec{\Psi}} \reducesto \nullvec{\mathsf{min}(B)} \)
             By induction hypothesis (NULL), we have that \( \nullvec{\mathsf{min}(B)} \in \interp{\mathsf{min}(B)} \).
             By Lemma~\ref{lem:minlqA},
             $\mathsf{min}(B)\preceq B$, and $\z[\mathsf{min}(B)]\in\SN$, hence,
             by definition, $\z[\mathsf{min}(B)]\in\interp B$.
          \item \( \app{(t + r)}{s} \reducesto \app{t}{s} + \app{r}{s} \)
           Since \( \app{t}{s} \in \interp{B} \) and \( \app{r}{s} \in \interp{B} \), we have by induction hypothesis that \( \app{t}{s} + \app{r}{s} \in \interp{B} \).
        \end{itemize}
      \item[LIN2]
         By definition of \( \interp{\Psi \Rightarrow B} \), it suffices to show that \( \alpha . t : S(\Psi \Rightarrow B) \) and, for all \( s \in \interp{\Psi} \), \( \app{(\alpha . t)}{s} \in \interp{B} \).
         Since \( t \in \interp{\Psi \Rightarrow B} \), we have that \( t : S(\Psi \Rightarrow B) \) and, for all \( s \in \interp{\Psi} \), \( \app{t}{s} \in \interp{B} \). Therefore, \( \alpha . t : S(S(\Psi \Rightarrow B)) \preceq S(\Psi \Rightarrow B) \) and \( \app{(\alpha . t)}{s} : S(B) \).
         It remains to show that, for all \( s \in \interp{\Psi} \), \( \app{(\alpha . t)}{s} \in \interp{B} \). Since \( \app{(\alpha . t)}{s} \in \mathcal{N} \), we have by induction hypothesis (CR3) that it is sufficient to show that, for all \( s \in \interp{\Psi} \), \( \Red{\app{(\alpha . t)}{s}} \subseteq \interp{B} \). Since \( \app{t}{s} \in \interp{B} \) and \( s \in \interp{\Psi} \), we have by induction hypothesis (CR1) that \( t \in \SN \) and \( s \in \SN \). Therefore, we can proceed by induction (2) over \( (\lpl{t} + \lpl{s}, \size{\app{(\alpha . t)}{s}}) \). We analyze the possible reducts of \( \app{(\alpha . t)}{s} \):
        \begin{itemize}
          \item \( \app{(\alpha . t)}{s} \reducesto \app{(\alpha . t')}{s} \) where \( t \reducesto t' \)
           Since \( \lpl{t'} < \lpl{t} \), we have by induction hypothesis (2) that \( \app{(\alpha . t')}{s} \in \interp{B} \).
          \item \( \app{(\alpha . t)}{s} = \app{(\alpha . t)}{(s_1 + s_2)} \reducesto \app{(\alpha . t)}{s_1} + \app{(\alpha . t)}{s_2} \)
           Since \( \lpl{t} + \lpl{s_1} \leq \lpl{t} + \lpl{s} \) and \( \size{\app{(\alpha . t)}{s_1}} < \size{\app{(\alpha . t)}{s}} \), we have by induction hypothesis that \( \app{(\alpha . t)}{s_1} \in \interp{B} \). Similarly, \( \app{(\alpha . t)}{s_2} \in \interp{B} \). Therefore, by induction hypothesis (LIN1), \( \app{(\alpha . t)}{s_1} + \app{(\alpha . t)}{s_2} \in \interp{B} \).
          \item \( \app{(\alpha . t)}{s} = \app{(\alpha . t)}{(\beta . s_1)} \reducesto \beta . \app{(\alpha. t)}{s_1} \)
           Since \( \lpl{t} + \lpl{s_1} \leq \lpl{t} + \lpl{s} \) and \( \size{\app{(\alpha . t)}{s_1}} < \size{\app{(\alpha . t)}{s}} \), we have by induction hypothesis (2) that \( \app{(\alpha . t)}{s_1} \in \interp{B} \). Therefore, by induction hypothesis, \( \beta . \app{(\alpha . t)}{s_1} \in \interp{B} \).
          \item \( \app{(\alpha . t)}{s} = \app{(\alpha . t)}{\nullvec{\Psi}} \reducesto \nullvec{\mathsf{min}(B)} \)
             By induction hypothesis (NULL), we have that \( \nullvec{\mathsf{min}(B)} \in \interp{\mathsf{min}(B)} \).
             By Lemma~\ref{lem:minlqA},
             $\mathsf{min}(B)\preceq B$, and $\z[\mathsf{min}(B)]\in\SN$, hence,
             by definition, $\z[\mathsf{min}(B)]\in\interp B$.
          \item \( \app{(\alpha . t)}{s} \reducesto \alpha . \app{t}{s} \)
           Since \( \app{t}{s} \in \interp{B} \), we have by induction hypothesis that \( \alpha . \app{t}{s} \in \interp{B} \).
        \end{itemize}
      \item[NULL]
         We want to show that \( \nullvec{\Psi \Rightarrow B} \in \interp{\Psi
\Rightarrow B} \). By definition of \( \interp{\Psi \Rightarrow B} \), this is
equivalent to showing that, for all \( t \in \interp{\Psi} \), \(
\app{\nullvec{\Psi \Rightarrow B}}{t} \in \interp{B} \). Since \(
\app{\nullvec{\Psi \Rightarrow B}}{t} \in \mathcal{N} \) and \(
\app{\nullvec{\Psi \Rightarrow B}}{t} : S(B) \), we have by induction hypothesis
(CR3) that this is equivalent to showing that \( \Red{\app{\nullvec{\Psi
\Rightarrow B}}{t}} \subseteq \interp{B} \). Since the only possible reduct of
\( \app{\nullvec{\Psi \Rightarrow B}}{t} \) is \( \nullvec{\mathsf{min}(B)} \), it suffices to
show that \( \nullvec{\mathsf{min}(B)} \in \interp{B} \).
             By induction hypothesis, we have that \( \nullvec{\mathsf{min}(B)} \in \interp{\mathsf{min}(B)} \).
             By Lemma~\ref{lem:minlqA},
             $\mathsf{min}(B)\preceq B$, and $\z[\mathsf{min}(B)]\in\SN$, hence,
             by definition, $\z[\mathsf{min}(B)]\in\interp B$.
    \end{description}
    \item Let \( A = S(B) \)
    \begin{description}
      \item[CR1]
         Since \( t \in \interp{S(B)} \), we have by definition that \( t \in \SN \).
      \item[CR2]
         Given \( t \in \interp{S(B)} \), we want to show that \( \Red{t} \subseteq \interp{S(B)} \).
        By definition, \( t \in \SN \), and so, \( \Red{t} \subseteq \SN \). Therefore, \( \Red{t} \subseteq \interp{S(B)} \).
      \item[CR3]
         Given \( t \in \mathcal{N} \) where $t:S(B)$ and \( \Red{t} \subseteq \interp{S(B)} \), we want to show that \( t \in \interp{S(B)} \).
        By definition, \( \Red{t}\subseteq\interp{S(B)} \subseteq \SN \), and so, \( t \in \SN \). Therefore, by definition, \( t \in \interp{S(B)} \).
      \item[HAB]
       Since \( x^{S(B)} : S(B) \) and \( x^{S(B)} \in \SN \), we have by definition that \( x^{S(B)} \in \interp{S(B)} \).
      \item[LIN1]
         Since \( t \in \interp{S(B)} \) and \( r \in \interp{S(B)} \), we have by definition of \( \interp{S(B)} \) that \( t : S(B) \), \( r : S(B) \), \( t \in \SN \) and \( r \in \SN \). Therefore, \( t + r : S(S(B)) \preceq S(B) \) and, by Lemma~\ref{lem:ri_in_snset_implies_sum_ri_in_snset}, \( t + r \in \SN \). Therefore, by definition, \( t + r \in \interp{S(B)} \).
      \item[LIN2]
         Since \( t \in \interp{S(B)} \), we have by definition of \( \interp{S(B)} \) that \( t : S(B) \) and \( t \in \SN \). Then, \( \alpha . t : S(S(B)) \preceq S(B) \) and \( \alpha . t \in \SN \). Therefore, by definition, \( \alpha . t \in \interp{S(B)} \).
      \item[NULL]
         We want to show \( \nullvec{S(B)} \in \interp{S(B)} \). Since \(
         \nullvec{S(B)} : S(S(B)) \preceq S(B) \) and \( \nullvec{S(B)} \in \SN
         \), we have by definition that \( \nullvec{S(B)} \in \interp{S(B)} \).
         \qed
    \end{description}
  \end{itemize}
\end{proof}

\recap{Lemma}{lem:a_subset_b}{
  If \( A \preceq B \) then \( \interp{A} \subseteq \interp{B} \).
}
\begin{proof}
  We proceed by induction on the relation $\preceq$.
  \begin{itemize}
  \item \( \vcenter{\infer{A \preceq A}{}} \). Trivial by the reflexivity of set inclusion.
  \item \( \vcenter{\infer{A \preceq C}{A \preceq B & B \preceq C}} \). Trivial by the transitivity of set inclusion.
  \item \( \vcenter{\infer{A \preceq S(A)}{}} \). Let \( t \in \interp{A} \). By definition, \( t : S(A) \preceq S(S(A)) \). And by Lemma~\ref{lem:cr} (CR1), \( t \in \SN \). Therefore, by definition, \( t \in \interp{S(A)} \).
  \item \( \vcenter{\infer{S(S(A)) \preceq S(A)}{}} \).
    Let \( t \in \interp{S(S(A))}\). By definition, \( t : S(S(A)) \preceq S(A) \) and \( t \in \SN \). Therefore, by definition, \( t \in \interp{S(A)} \).
  \item \( \vcenter{\infer{\Psi \Rightarrow A \preceq \Psi \Rightarrow B}{A
        \preceq B}} \).
    Let \( t\in\interp{\Psi\Rightarrow A} \). By definition, \( t : S(\Psi \Rightarrow A) \) and, for all \( r \in \interp{\Psi} \), \( \app{t}{r} \in \interp{A} \). By induction hypothesis, \( \app{t}{r} \in \interp{A} \subseteq \interp{B} \). And since \( A \preceq B \), \( t : S(\Psi \Rightarrow A) \preceq S(\Psi \Rightarrow B) \). Therefore, by definition, \( t \in \interp{\Psi \Rightarrow B} \).
  \item \( \vcenter{\infer{S(A) \preceq S(B)}{A \preceq B}} \). Let \( t \in \interp{S(A)} \). By definition, \( t : S(A) \) and \( t \in \SN \). And since \( A \preceq B \), we have \( t : S(A) \preceq S(B) \). Therefore, by definition, \( t \in \interp{S(B)} \).
  \item \( \vcenter{\infer{A \times C \preceq B \times C}{A \preceq B}} \). Let \( t \in \interp{A \times C} \). By definition, \( t \in \SN \) and \( t : S(S(A) \times S(C)) \). And since \( A \preceq B \), we have that \( S(A) \preceq S(B) \), and so, \( t : S(S(B) \times S(C)) \). Therefore, by definition, \( t \in \interp{B \times C} \).
  \item \( \vcenter{\infer{A \times C \preceq B \times C}{A \preceq B}} \). Analogous to the previous case.
  \qed
  \end{itemize}
\end{proof}

\xrecap{Lemma}{Adequacy}{lem:adequacy}{
  If \( \Gamma \vdash t:A \) and \( \theta \vDash \Gamma \) then \( \theta(t)
  \in \interp{A} \).
}
\begin{proof}
  By induction in the derivation of $\Gamma\vdash t:A$.
  We proceed by cases.
  \begin{itemize}
  \item \( \vcenter{\infer[\tax]{x^\Psi \vdash x : \Psi}{}} \). Since
    $\theta\vDash x^\Psi$, we have $\theta(x)\in\interp{\Psi}$.

  \item \(\vcenter{\infer[\tax_{\vec 0}]{\vdash \nullvec{A}: S(A)}{}}\). By Lemma~\ref{lem:cr} (NULL) and Lemma~\ref{lem:a_subset_b}, $\theta(\z)=\z\in\interp{S(A)}$.

  \item \(\vcenter{\infer[\tax_{\ket 0}]{\vdash \ket{0}: \B}{}}\). By definition, \( \theta(\ket{0}) = \ket{0} \in \SN \). And since \( \ket{0} : S(\B) \), we have by definition that \( \ket{0} \in \interp{\B} \).

  \item \(\vcenter{\infer[\tax_{\ket 1}]{\vdash \ket{1}: \B}{}}\). Analogous to the previous case.

  \item \( \vcenter{\infer[S_I^\alpha]{\Gamma \vdash \alpha . t : S(A)}{\Gamma \vdash
        t: A}} \). By the induction hypothesis, \( \theta(t)\in\interp A \), and by Lemma~\ref{lem:cr} (LIN2), \( \alpha . \theta(t) = \theta(\alpha . t) \in \interp{A} \). Finally, by Lemma~\ref{lem:a_subset_b}, \( \theta(\alpha . t) \in \interp{S(A)} \).

  \item \( \vcenter{\infer[S_I +]{\Gamma, \Delta \vdash (t + u): S(A)}{\Gamma \vdash t : A & \Delta
        \vdash u : A}} \).
    By the induction hypothesis, $\theta_1(t)\in\interp A$ and
    $\theta_2(u)\in\interp A$, with $\theta_1\vDash\Gamma$ and
    $\theta_2\vDash\Delta$. Then, since \( \Gamma \) and \( \Delta \) are disjoint, $\theta_1\cup\theta_2(t+u)=\theta_1(t)+\theta_2(u)$. And by Lemma~\ref{lem:cr} (LIN1), \( \theta_1(t)+\theta_2(u) \in \interp{A} \). Therefore, by Lemma~\ref{lem:a_subset_b}, \( \theta_1(t)+\theta_2(u)\in\interp{S(A)} \).

  \item \( \vcenter{\infer[S_E]{\Gamma \vdash \pi_{j}{t} : \B^j \times S(\B^{n - j})}{\Gamma
        \vdash t : S(\B^n)}} \).
    We want to show that if \( \theta \models \Gamma \), then \( \theta(\proj{j}{t}) = \proj{j}{\theta(t)} \in \interp{\B^j \times S(\B^{n - j})} \). By definition, it suffices to show that \( \proj{j}{\theta(t)} : S(S(\B^j) \times S(S(\B^{n - j})) \) and \( \proj{j}{\theta(t)} \in \SN \).
    By induction hypothesis, \( \theta(t) \in \interp{S(\B^n)} \), which implies that \( \theta(t) : S(S(\B^n)) \preceq S(\B^n) \). Therefore, \( \proj{j}{\theta(t)} : \B^j \times S(\B^{n - j}) \preceq S(S(\B^j) \times S(S(\B^{n - j}))) \).
    On the other hand, since \( \theta(t) \in \interp{S(\B^n)} \subseteq \SN \), we have by Lemma~\ref{lem:t_implies_proj_t} that \( \proj{j}{\theta(t)} \in \SN \).
    Therefore, \( \proj{j}{\theta(t)} \in \interp{\B^j \times S(\B^{n - j})} \).

  \item \( \vcenter{\infer[\preceq]{\Gamma \vdash t : B}{\Gamma \vdash t: A & A
        \preceq B}} \).
    By the induction hypothesis $\theta(t)\in\interp A$, and by
    Lemma~\ref{lem:a_subset_b}, $\interp A\subseteq\interp B$.

  \item \( \vcenter{\infer[\tif]{\Gamma \vdash \ite{}{t}{r}: \B \Rightarrow A}{\Gamma \vdash t : A
        & \Gamma \vdash r : A}} \).
    We want to show that if \( \theta \models \Gamma \), then \( \theta(\ite{}{t}{r}) = \ite{}{\theta(t)}{\theta(r)} \in \interp{\B \Rightarrow A} \). By definition, this is equivalent to showing that \( \ite{}{\theta(t)}{\theta(r)} : S(\B \Rightarrow A) \) and, for all \( s \in \interp{\B}, \ite{s}{\theta(t)}{\theta(r)} \in \interp{A} \).
    \\ By induction hypothesis, we have that \( \theta(t) \in \interp{A} \) and \( \theta(r) \in \interp{A} \), which implies that \( \theta(t) : S(A) \) and \( \theta(r) : S(A) \). Therefore, \( \ite{}{\theta(t)}{\theta(r)} : \B \Rightarrow S(A) \preceq S(\B \Rightarrow A) \) and \( \ite{s}{\theta(t)}{\theta(r)} : S(A) \).
    \\ And since \( \ite{s}{\theta(t)}{\theta(r)} \in \mathcal{N} \) (because such a term is actually an application), we have by Lemma~\ref{lem:cr} (CR3) that it suffices to show that \( \Red{\ite{s}{\theta(t)}{\theta(r)}} \subseteq \interp{A} \).
    \\ We proceed by induction (2) over \( (\lpl{s}, \size{\ite{s}{\theta(t)}{\theta(r)}}) \). We analyze each of the reducts of \( \ite{s}{\theta(t)}{\theta(r)} \):
    \begin{itemize}
    \item \( \ite{s}{\theta(t)}{\theta(r)} \reducesto \ite{u}{\theta(t)}{\theta(r)} \) where \( s \reducesto u \)
      \\ Since \( \lpl{u} < \lpl{s} \), we have by induction hypothesis (2) that \( \ite{u}{\theta(t)}{\theta(r)} \in \interp{A} \).
    \item \( \ite{s}{\theta(t)}{\theta(r)} \reducesto \theta(t) \) where \( s = \ket{1} \)
      \\ By induction hypothesis, \( \theta(t) \in \interp{A} \).
    \item \( \ite{s}{\theta(t)}{\theta(r)} \reducesto \theta(r) \) where \( s = \ket{0} \)
      \\ By induction hypothesis, \( \theta(r) \in \interp{A} \).
    \item \( \ite{s}{\theta(t)}{\theta(r)} = \ite{(s_1 + s_2)}{\theta(t)}{\theta(r)} \reducesto \ite{s_1}{\theta(t)}{\theta(r)} + \ite{s_2}{\theta(t)}{\theta(r)} \)
      \\ Since \( \lpl{s_1} \leq \lpl{s} \) and \( \size{\ite{s_1}{\theta(t)}{\theta(r)}} < \size{\ite{s}{\theta(t)}{\theta(r)}} \), we have by induction hypothesis (2) that \( \ite{s_1}{\theta(t)}{\theta(r)} \in \interp{A} \). Similarly, \( \ite{s_2}{\theta(t)}{\theta(r)} \in \interp{A} \). Therefore, by Lemma~\ref{lem:cr} (LIN1), \( \ite{s_1}{\theta(t)}{\theta(r)} + \ite{s_2}{\theta(t)}{\theta(r)} \in \interp{A} \).
    \item \( \ite{s}{\theta(t)}{\theta(r)} = \ite{(\alpha . s_1)}{\theta(t)}{\theta(r)} \reducesto \alpha . \ite{s_1}{\theta(t)}{\theta(r)} \)
      \\ Since \( \lpl{s_1} \leq \lpl{s} \) and \( \size{\ite{s_1}{\theta(t)}{\theta(r)}} < \size{\ite{s}{\theta(t)}{\theta(r)}} \), we have by induction hypothesis (2) that \( \ite{s_1}{\theta(t)}{\theta(r)} \in \interp{A} \). Therefore, by Lemma~\ref{lem:cr} (LIN2), \( \alpha . \ite{s_1}{\theta(t)}{\theta(r)} \in \interp{A} \).
    \item \( \ite{s}{\theta(t)}{\theta(r)} = \ite{\nullvec{\B}}{\theta(t)}{\theta(r)} \reducesto \nullvec{A} \)
      \\ By Lemma~\ref{lem:cr} (NULL), \( \nullvec{A} \in \interp{A} \).
    \end{itemize}

  \item \( \vcenter{\infer[\Rightarrow_I]{\Gamma \vdash \lambda x^\Psi.t: \Psi
        \Rightarrow A}{\Gamma, x : \Psi \vdash t : A}} \).
    We want to show that if \( \theta' \models \Gamma \), then \( \theta'(\abstr{\vrbl{x}{\Psi}}{t}) = (\abstr{\vrbl{x}{\Psi}}{\theta'(t)}) \in \interp{\Psi \Rightarrow A} \), which is equivalent to showing that \( (\abstr{\vrbl{x}{\Psi}}{\theta'(t)}) : S(\Psi \Rightarrow A) \) and, for all \( r \in \interp{\Psi}, \app{(\abstr{\vrbl{x}{\Psi}}{\theta'(t)})}{r} \in \interp{A} \).
    \\ By induction hypothesis, \( \theta(t) \in \interp{A} \), which implies that \( \theta(t) : S(A) \). Therefore, \( (\abstr{\vrbl{x}{\Psi}}{\theta'(t)}) : \Psi \Rightarrow S(A) \preceq S(\Psi \Rightarrow A) \) and \( \app{(\abstr{\vrbl{x}{\Psi}}{\theta'(t)})}{r} : S(A) \). And since \( \app{(\abstr{\vrbl{x}{\Psi}}{\theta'(t)})}{r} \in \mathcal{N} \), Lemma~\ref{lem:cr} (CR3) tells us that if \( \Red{\app{(\abstr{\vrbl{x}{\Psi}}{\theta'(t)})}{r}} \subseteq \interp{A} \), then \( \app{(\abstr{\vrbl{x}{\Psi}}{\theta'(t)})}{r} \in \interp{A} \).
    We are going to show that in fact \( \Red{\app{(\abstr{\vrbl{x}{\Psi}}{\theta'(t)})}{r}} \subseteq \interp{A} \).
    \\ Since \( r \in \interp{\Psi} \), we have by Lemma~\ref{lem:cr} (CR2) that \( r \in \SN \). Therefore, we can proceed by induction (2) over \( (\lpl{r}, \size{\app{(\abstr{\vrbl{x}{\Psi}}{\theta'(t)})}{r}} \). We analyze each of the reducts of \( \app{(\abstr{\vrbl{x}{\Psi}}{\theta'(t)})}{r} \):
    \begin{itemize}
    \item \( \app{(\abstr{\vrbl{x}{\Psi}}{\theta'(t)})}{r} \reducesto \theta'(t)[r/x] \)
      \\ We want to show that \( \theta'(t)[r/x] \in \interp{A} \). By definition, \( \theta'(t)[r/x] = \theta(t) \), and by induction hypothesis, \( \theta(t) \in \interp{A} \). Therefore, \( \theta'(t)[r/x] \in \interp{A} \).
    \item \( \app{(\abstr{\vrbl{x}{\Psi}}{\theta'(t)})}{r} \reducesto \app{(\abstr{\vrbl{x}{\Psi}}{\theta'(t)})}{r'} \) where \( r \reducesto r' \)
      \\ By induction hypothesis (2), \( \app{(\abstr{\vrbl{x}{\Psi}}{\theta'(t)})}{r'} \in \interp{A} \).
    \item \( \app{(\abstr{\vrbl{x}{\Psi}}{\theta'(t)})}{r} = \app{(\abstr{\vrbl{x}{\Psi}}{\theta'(t)})}{(r_1 + r_2)} \reducesto \app{(\abstr{\vrbl{x}{\Psi}}{\theta'(t)})}{r_1} + \app{(\abstr{\vrbl{x}{\Psi}}{\theta'(t)})}{r_2} \)
      \\ Since \( \lpl{r_1} \leq \lpl{r} \) y \( \size{\app{(\abstr{\vrbl{x}{\Psi}}{\theta'(t)})}{r_1}} < \size{\app{(\abstr{\vrbl{x}{\Psi}}{\theta'(t)})}{r}} \), we have by induction hypothesis (2) that \( \app{(\abstr{\vrbl{x}{\Psi}}{\theta'(t)})}{r_1} \in \interp{A} \). Similarly, we have that \( \app{(\abstr{\vrbl{x}{\Psi}}{\theta'(t)})}{r_2} \in \interp{A} \). Therefore, by Lemma~\ref{lem:cr} (LIN1), we have that \( \app{(\abstr{\vrbl{x}{\Psi}}{\theta'(t)})}{r_1} + \app{(\abstr{\vrbl{x}{\Psi}}{\theta'(t)})}{r_2} \in \interp{A} \).
    \item \( \app{(\abstr{\vrbl{x}{\Psi}}{\theta'(t)})}{r} = \app{(\abstr{\vrbl{x}{\Psi}}{\theta'(t)})}{(\alpha . r_1)} \reducesto \alpha . \app{(\abstr{\vrbl{x}{\Psi}}{\theta'(t)})}{r_1} \)
      \\ Since \( \lpl{r_1} \leq \lpl{r} \) y \( \size{\app{(\abstr{\vrbl{x}{\Psi}}{\theta'(t)})}{r_1}} < \size{\app{(\abstr{\vrbl{x}{\Psi}}{\theta'(t)})}{r}} \), we have by induction hypothesis (2) that \( \app{(\abstr{\vrbl{x}{\Psi}}{\theta'(t)})}{r_1} \in \interp{A} \). Therefore, by Lemma~\ref{lem:cr} (LIN2), we have that \( \alpha . \app{(\abstr{\vrbl{x}{\Psi}}{\theta'(t)})}{r_1} \in \interp{A} \).
    \item \( \app{(\abstr{\vrbl{x}{\Psi}}{\theta'(t)})}{r} = \app{(\abstr{\vrbl{x}{\Psi}}{\theta'(t)})}{\nullvec{\B^n}} \reducesto \nullvec{A} \)
      \\ By Lemma~\ref{lem:cr} (NULL), we have that \( \nullvec{A} \in \interp{A} \).
    \end{itemize}

  \item \( \vcenter{\infer[\Rightarrow_E]{\Gamma, \Delta \vdash {t}{u} :
        A}{\Gamma \vdash t : \Psi \Rightarrow A & \Delta \vdash u : \Psi}} \) We must show that if \( \theta \models \Gamma, \Delta \), then \( \theta(\app{t}{u}) \in \interp{A} \).
    Since \( \Gamma \) and \( \Delta \) are disjoint, we have that \( \theta(\app{t}{u}) = (\theta_1 \cup \theta_2)(\app{t}{u}) = \app{\theta_1(t)}{\theta_2(u)} \), where \( \theta_1 \models \Gamma \) and \( \theta_2 \models \Delta \).
    Therefore, it suffices to show that \( \app{\theta_1(t)}{\theta_2(u)} \in \interp{A} \).
    By induction hypothesis and definition of \( \interp{\Psi \Rightarrow A} \), we have that \( \app{\theta_1(t)}{\theta_2(u)} \in \interp{A} \).

  \item \( \vcenter{\infer[\Rightarrow_{ES}]{\Gamma, \Delta \vdash t u :
        S(A)}{\Gamma \vdash t : S(\Psi \Rightarrow A) & \Delta \vdash u :
        S(\Psi)}} \). We want to show that if \( \theta \models \Gamma, \Delta \), then \( \theta(\app{t}{u}) \in \interp{S(A)} \).
    Since \( \Gamma \) y \( \Delta \) are disjoint, we have that \( \theta(\app{t}{u}) = (\theta_1 \cup \theta_2)(\app{t}{u}) = \app{\theta_1(t)}{\theta_2(u)} \), where \( \theta_1 \models \Gamma \) y \( \theta_2 \models \Delta \).
    Therefore, it suffices to show that \( \app{\theta_1(t)}{\theta_2(u)} \in \interp{S(A)} \).
    \\ Since \( \theta_1(t) \in \interp{S(\Psi \Rightarrow A)} \) and \( \theta_2(u) \in \interp{S(\Psi)} \), we have by definition that \( \theta_1(t) : S(\Psi \Rightarrow A) \) y \( \theta_2(u) : S(\Psi) \). Therefore, \( \app{\theta_1(t)}{\theta_2(u)} : S(A) \).
    \\ On the other hand, we need to show that \( \app{\theta_1(t)}{\theta_2(u)} \in \SN \). To do that, it is sufficient to show that \( \Red{\app{\theta_1(t)}{\theta_2(u)}} \subseteq \SN \). Since \( \theta_1(t) \in \interp{S(\Psi \Rightarrow A)} \subseteq \SN \) and \( \theta_2(u) \in \interp{S(\Psi)} \subseteq \SN \), we can proceed by induction (2) over \( (\lpl{\theta_1(t)} + \lpl{\theta_2(u)}, \size{\app{\theta_1(t)}{\theta_2(u)}} \). We analyze the possible reducts of \( \app{\theta_1(t)}{\theta_2(u)} \):
    \begin{itemize}
    \item \( \app{\theta_1(t)}{\theta_2(u)} \reducesto \app{t'}{\theta_2(u)} \) where \( \theta_1(t) \reducesto t' \)
      \\ Since \( \lpl{t'} < \lpl{\theta(t)} \), we have by induction hypothesis (2) that \( \app{t'}{\theta_2(u)} \in \SN \).
    \item \( \app{\theta_1(t)}{\theta_2(u)} \reducesto \app{\theta_1(t)}{u'} \) where \( \theta_2(u) \reducesto u' \)
      \\ Analogous to the previous case.
    \item \( \app{\theta_1(t)}{\theta_2(u)} = \app{(\abstr{\vrbl{x}{\B^n}}{t_1})}{\theta_2(u)} \reducesto t_1[\theta_2(u)/x] \)
      \\ Since \( \theta_1(t) = (\abstr{\vrbl{x}{\B^n}}{t_1}) : S(\Psi \Rightarrow A) \), we have by Lemma~\ref{lem:generation}  that \( t_1 : A \) with a smaller derivation tree. Then, by induction hypothesis (Adequacy), we have that \( t_1 \in \interp{A} \), and therefore, by Lemma~\ref{lem:cr} (CR1), we have that \( t_1 \in \SN \).
    \item \( \app{\theta_1(t)}{\theta_2(u)} = \app{(\abstr{\vrbl{x}{S(\Psi)}}{t_1})}{\theta_2(u)} \reducesto t_1[\theta_2(u)/x] \)
      \\ Analogous to the previous case.
    \item \( \app{\theta_1(t)}{\theta_2(u)} = \ite{\ket{1}}{t_1}{t_2} \reducesto t_1 \)
      \\ Since \( \theta_1(t) = (\ite{}{t_1}{t_2}) : S(\Psi \Rightarrow A) \), we have by Lemma~\ref{lem:generation}  that \( t_1 : A \) with a smaller derivation tree. Then, by induction hypothesis (Adequacy), we have that \( t_1 \in \interp{A} \), and therefore, by Lemma~\ref{lem:cr} (CR1), we have that \( t_1 \in \SN \).
    \item \( \app{\theta_1(t)}{\theta_2(u)} = \ite{\ket{0}}{t_1}{t_2} \reducesto t_2 \)
      \\ Analogous to the previous case.
    \item \( \app{\theta_1(t)}{\theta_2(u)} = \app{\theta_1(t)}{(u_1 + u_2)} \reducesto \app{\theta_1(t)}{u_1} + \app{\theta_1(t)}{u_2} \)
      \\ Since \( \lpl{u_1} \leq \lpl{\theta_2(u)} \) and, by Lemma~\ref{lem:size}, \( \size{\app{\theta_1(t)}{u_1}} < \size{\app{\theta_1(t)}{\theta_2(u)}} \), we have by induction hypothesis (2) that \( \app{\theta_1(t)}{u_1} \in \SN \). Similarly, we have that \( \app{\theta_1(t)}{u_2} \in \SN \). Therefore, by Lemma~\ref{lem:ri_in_snset_implies_sum_ri_in_snset}, we have that \( \app{\theta_1(t)}{u_1} + \app{\theta_1(t)}{u_2} \in \SN \).
    \item \( \app{\theta_1(t)}{\theta_2(u)} = \app{(t_1 + t_2)}{\theta_2(u)} \reducesto \app{t_1}{\theta_2(u)} + \app{t_2}{\theta_2(u)} \)
      \\ Analogous to the previous case.
    \item \( \app{\theta_1(t)}{\theta_2(u)} = \app{\theta_1(t)}{(\alpha . u_1)} \reducesto \alpha . \app{\theta_1(t)}{u_1} \)
      \\ Since \( \lpl{u_1} \leq \lpl{\theta_2(u)} \) and, by Lemma~\ref{lem:size}, \( \size{\app{\theta_1{t}}{u_1}} < \size{\app{\theta_1{t}}{\theta_2(u)}} \), we have by induction hypothesis (2) that \( \app{\theta_1(t)}{u_1} \in \SN \). Therefore, by Lemma~\ref{lem:ri_in_snset_implies_sum_ri_in_snset}, we have that \( \alpha . \app{\theta_1(t)}{u_1} \in \SN \).
    \item \( \app{\theta_1(t)}{\theta_2(u)} = \app{(\alpha . t_1)}{\theta_2(u)} \reducesto \alpha . \app{t_1}{\theta_2(u)} \)
      \\ Analogous to the previous case..
    \item \( \app{\theta_1(t)}{\theta_2(u)} = \app{\theta_1(t)}{\nullvec{\B^n}} \reducesto \nullvec{A} \)
      \\ By definition, \( \nullvec{A} \in \interp{S(A)} \subseteq \SN \).
    \item \( \app{\theta_1(t)}{\theta_2(u)} = \app{\nullvec{\B^n \Rightarrow A}}{\theta_2(u)} \reducesto \nullvec{A} \)
      \\ Analogous to the previous case.
    \end{itemize}

  \item \( \vcenter{\infer[W]{\Gamma, x^{\B^n} \vdash t : A}{\Gamma \vdash t: A}}
    \). By definition of \( \theta \), we have that if \( \theta \models \Gamma, \vrbl{x}{\B^n} \), then \( \theta \models \Gamma \). And by induction hypothesis, \( \theta(t) \in \interp{A} \).

  \item $\vcenter{\infer[C]{\Gamma, x : \B^n \vdash (x/y) t : A}{\Gamma, x : \B^n,
        y : \B^n \vdash t : A}}$. By definition, \( \theta'(t[x/y]) = \theta(t) \). And by induction hypothesis, \( \theta(t) \in \interp{A} \). Therefore, \( \theta'(t[x/y]) \in \interp{A} \).

  \item $\vcenter{\infer[\times_I]{\Gamma, \Delta \vdash t \times u : A \times B}{\Gamma \vdash t : A & \Delta \vdash u : B}}$.
    Since \( \Gamma \) and \( \Delta \) are disjoint, \( \theta(t \times u) = (\theta_1 \cup \theta_2)(t \times u) = \theta_1(t) \times \theta_2(u) \), where \( \theta_1 \models \Gamma \) and \( \theta_2 \models \Delta \).
    Therefore, it suffices to show that \( \theta_1(t) \times \theta_2(u) \in \interp{A \times B} \).
    \\ Since \( \theta_1(t) \in \interp{A} \), we have by definition that \( \theta_1(t) : S(A) \) and, by Lemma~\ref{lem:cr} (CR1), \( \theta_1(t) \in \SN \). Similarly, \( \theta_2(u) : S(B) \) and \( \theta_2(u) \in \SN \).
    \\ Then, \( \theta_1(t) \times \theta_2(u) : S(A) \times S(B) \preceq S(S(A) \times S(B)) \) and \( \theta_1(t) \times \theta_2(u) \in \SN \). Therefore, by definition of \( \interp{A \times B} \), \( \theta_1(t) \times \theta_2(u) \in \interp{A \times B} \).

  \item $\vcenter{\infer[\times_{Er}]{\Gamma \vdash \head\ t : \B}{\Gamma \vdash
        t : \B^n}}$.
    \\ Since \( \head\ t : \B \), we have by Lemma~\ref{lem:substitution} that \( \theta(\head\ t) = \head\ \theta(t) : \B \preceq S(\B) \). And by induction hypothesis, \( \theta(t) \in \interp{\B^n} \subseteq \SN \). Then, \( \head\ \theta(t) \in \SN \). Therefore, by definition, \( \head\ \theta(t) \in \interp{\B} \).

  \item $\vcenter{\infer[\times_{El}]{\Gamma \vdash \tail\ t : \B^{n-1}}{\Gamma \vdash
        t : \B^n}}$.
    \\ Since \( \tail\ t : \B^n \), we have by Lemma~\ref{lem:substitution} that \( \theta(\tail\ t) = \tail\ \theta(t) : \B^n = \B \times \B^{n - 1} \preceq S(S(\B) \times S(\B^{n - 1})) \). Furthermore, by induction hypothesis, \( \theta(t) \in \interp{\B^n} \), and by Lemma~\ref{lem:cr} (CR1), \( \theta(t) \in \SN \). Then, \( \tail\ \theta(t) \in \SN \). Therefore, by definition, \( \tail\ \theta(t) \in \interp{\B} \).

  \item $\vcenter{\infer[\Uparrow_r]{\Gamma \vdash \Uparrow_r t : S(A
        \times B)}{\Gamma \vdash t : S(S(A) \times B)}}$.
    \\ Por definición de \( \interp{S(S(A) \times B)} \), tenemos que \( \theta(t) : S(S(A) \times B) \) y \( \theta(t) \in \SN \). Entonces, \( \Uparrow_r \theta(t) : S(A \times B) \) y \( \Uparrow_r \theta(t) \in \SN \). Por lo tanto, \( \Uparrow_r \theta(t) \in \interp{S(A \times B)} \), que es lo que queríamos mostrar.

  \item $\vcenter{\infer[\Uparrow_\ell]{\Gamma \vdash \Uparrow_\ell t : S(A
        \times B)}{\Gamma \vdash t : S(A \times S(B))}}$.
    Analogous to the previous case. \qed
  \end{itemize}
\end{proof}

\section{Detailed proofs of Section~\ref{sec:denSem} (Interpretation)}\label{ap:denSem}
\recap{Lemma}{lem:inc}{If $A\preceq B$, then $\den A\subseteq\den B$.}
\begin{proof}
  We proceed by induction on the relation $\preceq$.
  \begin{itemize}
  \item $\den{A} \subseteq \den{A}$ by the reflexivity of set inclusion.
  \item Let $A\preceq B$ and $B\preceq C$. By the induction hypothesis, $\den A\subseteq \den B$ and $\den B\subseteq \den C$. Then $\den A\subseteq \den C$ by the transitivity of set inclusion.
  \item $\den A\subseteq\gen\den A=\den{S(A)}$.
  \item $\den{S(S(A))}=\gen(\gen\den A)=\gen\den A=\den{S(A)}$.
  \item Let $A\preceq B$ and $\den A\subseteq \den B$. Then,
    \begin{itemize}
    \item $\begin{aligned}[t] \den{\gB\Rightarrow A}
        &=\den{\gB}\Rightarrow\den A\\
        &=\{f~|~\forall a\in\den{\gB}, fa\in\den A\}\\
        &\subseteq\{f~|~\forall a\in\den{\gB},fa\in\den B\}\\
        &=\den{\gB}\Rightarrow\den B\\
        &=\den{\gB\Rightarrow B}
      \end{aligned}$
    \item $\begin{aligned}[t] \den{A\times C}
        &=\den A\times\den C\\
        &=\{a\times c~|~a\in\den A,c\in\den C\}\\
        &\subseteq\{b\times c~|~b\in\den B,c\in\den C\}\\
        &=\den B\times\den C\\
        &=\den{B\times C}
      \end{aligned}$
    \item $\den{C\times A}\subseteq\den{C\times B}$ by an analogous reasoning
      to the previous one.
    \item $\den{S(A)}=\gen\den A\subseteq\gen\den B=\den{S(B)}$.
      \qed
    \end{itemize}
  \end{itemize}
\end{proof}

\recap{Lemma}{lem:subsDen}{
  If $\Gamma\vdash t:A$ and $\phi,x\mapsto S,y\mapsto S$ is a
  $\Gamma$-valuation, then $\den t_{\phi,x\mapsto S,y\mapsto
    S}=\den{(x/y)t}_{\phi,x\mapsto S}$.
}
\begin{proof}
  We proceed by induction on $t$.
  \begin{description}
  \item[Independent cases.] The cases where $t$ does not includes $x$ nor $y$
    and the denotation does not depends on the valuation, are trivial. Those
    cases are: $\ket 0$, $\ket 1$, $\z$ and $\ite{}{}{}$.
  \item[Let $t=x$.] Then, $\den x_{\phi,x\mapsto S,y\mapsto S}=S=\den
    x_{\phi,x\mapsto S}$.
  \item[Let $t=y$.] Then, $\den y_{\phi,x\mapsto S,y\mapsto S}=S=\den
    x_{\phi,x\mapsto S}$.
  \item[Let $t=z$.] Then, $\den z_{\phi,x\mapsto S,y\mapsto S}=\phi z=\den
    z_{\phi,x\mapsto S}$.
  \item[Let $t=\lambda z^{\gB}.r$.] Then,
    \begin{align*}
      \den{\lambda z^{\gB}.r}_{\phi,x\mapsto S,y\mapsto S}
      &=\{f~|~\forall a\in\den{\gB},fa\in\den r_{\phi,x\mapsto S,y\mapsto S,z\mapsto\den{\gB}}\}\\
      \textrm{(by IH)} &=\{f~|~\forall a\in\den{\gB},fa\in\den{(x/y)r}_{\phi,x\mapsto S,z\mapsto\den{\gB}}\}\\
      &=\den{\lambda z^{\gB}.(x/y)r}_{\phi,x\mapsto S}\\
      &=\den{(x/y)\lambda z^{\gB}.r}_{\phi,x\mapsto S}
    \end{align*}
  \item[Let $t=r\times s$.] Then,
    \begin{align*}
      \den{r\times s}_{\phi,x\mapsto S,y\mapsto S}
      &=\den r_{\phi,x\mapsto S,y\mapsto S}\times\den s_{\phi,x\mapsto S,y\mapsto S}\\
      \textrm{(by IH) }&=\den{(x/y)r}_{\phi,x\mapsto S}\times\den{(x/y)s}_{\phi,x\mapsto S}\\
      &=\den{(x/y)r\times(x/y)s}_{\phi,x\mapsto S}\\
      &=\den{(x/y)(r\times s)}_{\phi,x\mapsto S}
    \end{align*}
  \item[Let $t=\pair rs$.] Then,
    \begin{align*}
      \den{\pair rs}_{\phi,x\mapsto S,y\mapsto S}
      &=\{a+b~|~a\in\den r_{\phi,x\mapsto S,y\mapsto S},b\in\den s_{\phi,x\mapsto S,y\mapsto S}\}\\
      \textrm{(by IH) }&=\{a+b~|~a\in\den{(x/y)r}_{\phi,x\mapsto S},b\in\den{(x/y)s}_{\phi,x\mapsto S}\}\\
      &=\den{\pair{(x/y)r}{(x/y)s}}_{\phi,x\mapsto S}\\
      &=\den{(x/y)\pair rs}_{\phi,x\mapsto S}
    \end{align*}
  \item[Let $t=\alpha.r$.] Then,
    \begin{align*}
      \den{\alpha.r}_{\phi,x\mapsto S,y\mapsto S}
      &=\{\alpha.a~|~a\in\den r_{\phi,x\mapsto S,y\mapsto S}\}\\
      \textrm{(by IH) }&=\{\alpha.a~|~a\in\den{(x/y)r}_{\phi,x\mapsto S}\}\\
      &=\den{\alpha.(x/y)r}_{\phi,x\mapsto S}\\
      &=\den{(x/y)\alpha.r}_{\phi,x\mapsto S}
    \end{align*}
  \item[Let $t=rs$.] Then,
    \begin{align*}
      &\den{rs}_{\phi,x\mapsto S,y\mapsto S}\\
      &=
        \left\{\begin{array}{l}
                 \{\sum\limits_{i\in I}\alpha_i . g_i(a)~|~\sum\limits_{i\in I}\alpha_i . g_i\in\den r_{\phi,x\mapsto S,y\mapsto S},a\in\den s_{\phi,x\mapsto S,y\mapsto S}\}\\
                 \hspace{8cm}\textrm{ If } A=\gB\Rightarrow B\\
                 \{\sum\limits_{i\in I}\sum\limits_{j\in J}\alpha_i . \beta_j . g_i(c_j)~|~\sum\limits_{i\in I}\alpha_i . g_i\in\den r_{\phi,x\mapsto S,y\mapsto S},\sum\limits_{j\in J}\beta_j . c_j\in\den s_{\phi,x\mapsto S,y\mapsto S}\} \\
                 \hspace{8cm}\textrm{ If } A=S(\gB\Rightarrow B)
               \end{array}\right.\\
      &\textrm{(by IH)}\\
      &=
        \left\{\begin{array}{l}
                 \{\sum\limits_{i\in I}\alpha_i . g_i(a)~|~\sum\limits_{i\in I}\alpha_i . g_i\in\den{(x/y)r}_{\phi,x\mapsto S},a\in\den{(x/y)s}_{\phi,x\mapsto S}\}\\
                 \hspace{8cm}\textrm{ If } A=\gB\Rightarrow B\\
                 \{\sum\limits_{i\in I}\sum\limits_{j\in J}\alpha_i . \beta_j . g_i(c_j)~|~\sum\limits_{i\in I}\alpha_i . g_i\in\den{(x/y)r}_{\phi,x\mapsto S},\sum\limits_{j\in J}\beta_j . c_j\in\den{(x/y)s}_{\phi,x\mapsto S}\} \\
                 \hspace{8cm}\textrm{ If } A=S(\gB\Rightarrow B)
               \end{array}\right.\\
      &=\den{(x/y)r(x/y)s}_{\phi,x\mapsto S}\\
      &=\den{(x/y)(rs)}_{\phi,x\mapsto S}
    \end{align*}
  \item[Let $t=\pi_jr$.] Then,
    \[
      \den{\pi_jr}_{\phi,x\mapsto S,y\mapsto S}\! =\! \{
      \prod_{h=1}^jb_{hk}\times \sum_{i\in P}\! \left(\!
        \dfrac{\alpha_i}{\sqrt{\sum_{r\in P}|\alpha_r|^2}}
        \!\right)\prod_{h=j+1}^m b_{hi}~|~ \forall i\in P,\forall
      h,b_{hi}=b_{hk} \}
    \]
    with $\den r_{\phi,x\mapsto S,y\mapsto S} = \{
    \sum_{i=1}^n\may\prod_{h=1}^mb_{hi} \}$ where
    $b_{hi}=\vect 01$ or $\vect 10$.

    By the induction hypothesis, $\den{(x/y)r}_{\phi,x\mapsto S}=\den
    r_{\phi,x\mapsto S,y\mapsto S}$, hence,
    \[
      \den{(x/y)(\pi_jr)}_{\phi,x\mapsto S}=\den{\pi_j((x/y)r)}_{\phi,x\mapsto
        S}=\den{\pi_jr}_{\phi,x\mapsto S,y\mapsto S}
    \]
  \item[Let $t=\head~r$.] Then,
    \begin{align*}
      \den{\head~r}_{\phi,x\mapsto S,y\mapsto S}
      &=\{a_1~|~\prod_{i=1}^n a_i\in\den t_{\phi,x\mapsto S,y\mapsto S}, a_1\in\den{\B}\}\\
      \textrm{(by IH) }&=
                         \{a_1~|~\prod_{i=1}^n\in\den{(x/y)t}_{\phi,x\mapsto S}, a_1\in\den{\B}\}\\
      &=\den{\head~(x/y)r}_{\phi,x\mapsto S}\\
      &=\den{(x/y)(\head~r)}_{\phi,x\mapsto S}
    \end{align*}
  \item[Let $t=\tail~r$.] Then,
    \begin{align*}
      \den{\tail~r}_{\phi,x\mapsto S,y\mapsto S}
      &=\{\prod_{i=2}^n a_i~|~\prod_{i=1}^n a_i\in\den t_{\phi,x\mapsto S,y\mapsto S}, a_1\in\den{\B}\}\\
      \textrm{(by IH) }&=
                         \{\prod_{i=2}^n a_i~|~\prod_{i=1}^n a_i\in\den{(x/y)t}_{\phi,x\mapsto S}, a_1\in\den{\B}\}\\
      &=\den{\tail~(x/y)r}_{\phi,x\mapsto S}\\
      &=\den{(x/y)(\tail~r)}_{\phi,x\mapsto S}
    \end{align*}
  \item[Let $t=\Uparrow r$] Then,
    \begin{align*}
      \den{\Uparrow r}_{\phi,x\mapsto S,y\mapsto S}
      &=\den r_{\phi,x\mapsto S,y\mapsto S}\\
      \textrm{(by IH) }&=\den{(x/y)r}_{\phi,x\mapsto S}\\
      &=\den{\Uparrow (x/y)r}_{\phi,x\mapsto S}\\
      &=\den{(x/y)\Uparrow r}_{\phi,x\mapsto S}
    \end{align*} \qed
  \end{description}
\end{proof}

\recap{Theorem}{thm:soundness}{
  If $\Gamma\vdash t:A$, and $\phi$ is a $\Gamma$-valuation. Then $\den
  t_\phi\subseteq\den A$.
}
\begin{proof}
  We proceed by induction on the typing derivation.
  \begin{itemize}
  \item $\vcenter{\infer[\tax]{x:\gB\vdash x:\gB}{}}$.
    Then, $\den x_\phi=\phi x=\den{\gB}$.
  \item $\vcenter{\infer[\tax_{\vec 0}]{{\vdash \z:S(A)}}{}}$.
    Then, $\den{\z}_\phi = \{\vec 0\}\subset\gen\den A=\den{S(A)}$.
  \item $\vcenter{\infer[\tax_{\ket 0}]{{\vdash\ket 0:\B}}{}}$.
    Then, $\den{\ket 0}_\phi=\{\vect 10\}\subset\{\vect 10,\vect 01\}=\den{\B}$.
  \item $\vcenter{\infer[\tax_{\ket 1}]{{\vdash\ket 1:\B}}{}}$.
    Then, $\den{\ket 1}_\phi=\{\vect 01\}\subset\{\vect 10,\vect 01\}=\den{\B}$.
  \item $\vcenter{\infer[\preceq]{{\Gamma\vdash t:B}}{\Gamma\vdash t:A &
        A\preceq B}}$.

    Then, by the induction hypothesis, $\den t_\phi\subseteq\den A$ and by
    Lemma~\ref{lem:inc}, $\den A\subseteq\den B$, hence, $\den
    t_\phi\subseteq\den B$.
  \item $\vcenter{\infer[S_I^\alpha]{{\Gamma\vdash \alpha.t:S(A)}}{{\Gamma\vdash t:A}}}$.

    Then, by the induction hypothesis, $\den t_\phi\subseteq\den A$.
    Hence, $\den{\alpha.t}_\phi=\{\alpha.a~|~a\in\den
    t_\phi\}\subseteq\{\alpha.a~|~a\in\den A\}\subseteq\gen\den A=\den{S(A)}$.
  \item $\vcenter{\infer[\tif]{\Gamma\vdash\ite{}tr:\B\Rightarrow A}{\Gamma\vdash t:A
        & \Gamma\vdash r:A}}$.

    Then, since by the induction hypothesis, $\den t_\phi\in\den A$ and $\den
    r_\phi\in\den A$, we have
    \begin{align*}
      \den{\ite{}tr}_\phi & =\{f~|~\forall a\in\den{\B}, fa=\left\{\begin{array}{ll}
                                                                     \den t_\phi & \textrm{If }a=\vect 01\\
                                                                     \den r_\phi & \textrm{If }a=\vect 10
                                                                   \end{array}\right.\}\\
                          &\subset\den{\B}\Rightarrow\den{A}=\den{\B\Rightarrow A}.
    \end{align*}
  \item $\vcenter{\infer[\Rightarrow_I]{{\Gamma\vdash\lambda x^{\gB}.t:\gB\Rightarrow A}}{{\Gamma,x:\gB\vdash t:A}}}$.

    Then, by the induction hypothesis, $\den
    t_{\phi,x\mapsto\den{\gB}}\subseteq\den A$. Hence,
    \begin{align*}
      \den{\lambda x^{\gB}.t}_\phi
      & =\{f~|~\forall a\in\den{\gB}, fa\in\den t_{\phi,x\mapsto\den{\gB}}\}\\
      &\subseteq\{f~|~\forall a\in\den{\gB}, fa\in\den A\}\\
      &=\den{\gB}\Rightarrow\den A=\den{\gB\Rightarrow A}.
    \end{align*}
  \item $\vcenter{\infer[\Rightarrow_E]{{\Gamma,\Delta\vdash
          tu:A}}{{\Gamma\vdash t:\gB\Rightarrow A} & {\Delta\vdash u:\gB}}}$.

    Then, by the induction hypothesis $\den
    t_{\phi_\Gamma}\subseteq\den{\gB}\Rightarrow\den A$ and $\den
    u_{\phi_\Delta}\subseteq\den{\gB}$, where $\phi=\phi_\Gamma,\phi_\Delta$.
    Then,
    \begin{align*}
      \den{tu}_\phi
      &= \{\sum_{i\in I}\alpha_i . g_i(a)~|~\sum_{i\in I}\alpha_i . g_i\in\den t_\phi\textrm{ and }a\in\den u_\phi\}\\
      &\subseteq
        \{\sum_{i\in I}\alpha_i . g_i(a)~|~\sum_{i\in I}\alpha_i . g_i\in\den{\gB}\Rightarrow\den A\textrm{ and }a\in\den{\gB}\}.
    \end{align*}

    Since $\den{\gB}\Rightarrow\den A$ is a set of functions (and not a linear
    combination of them), $I$ is a singleton and so this set is equal to
    $\{fa~|~f\in\den{\gB}\Rightarrow\den A\textrm{ and
    }a\in\den{\gB}\}\subseteq\den A$.
  \item $\vcenter{\infer[\Rightarrow_{ES}]{{\Gamma,\Delta\vdash
          tu:S(A)}}{{\Gamma\vdash t:S(\gB\Rightarrow A)} & {\Delta\vdash u:S(\gB)}}}$.

    Then, by the induction hypothesis $\den
    t_{\phi_\Gamma}\subseteq\gen(\den{\gB}\Rightarrow\den A)$ and $\den
    u_{\phi_\Delta}\subseteq\gen\den{\gB}$. Then,

    \begin{align*}
      \den{tu}_\phi &= \{\sum_{i\in I}\sum_{j\in J}\alpha_i . \beta_j g_i(c_j)~|~\sum_{i\in I}\alpha_i . g_i\in\den t_\phi\textrm{ and }\sum_{j\in J}\beta_j . c_j\in\den u_\phi\}\\
                    & \subseteq \{\sum_{i\in I}\sum_{j\in J}\alpha_i . \beta_j . g_i(c_j)~|~\sum_{i\in I}\alpha_i . g_i\in\gen(\den{\gB}\Rightarrow\den A)\textrm{ and }\sum_{j\in J}\beta_j . c_j\in\gen\den{\gB}\}\\
                    &= \{\sum_{i\in I}\sum_{j\in J}\alpha_i . \beta_j . g_i(c_j)~|~g_i\in\den{\gB}\Rightarrow\den A\textrm{ and }c_j\in\den{\gB}\}\\
                    &\subseteq \gen{\den A}=\den{S(A)}
    \end{align*}
  \item $\vcenter{\infer[S_I^+]{{\Gamma,\Delta\vdash\pair
          tu:S(A)}}{{\Gamma\vdash t:A} & {\Delta\vdash u:A}}}$.

    Then, by the induction hypothesis $\den t_{\phi_\Gamma}\subseteq\den A$ and
    $\den u_{\phi_\Delta}\subseteq\den A$, with $\phi=\phi_\Gamma,\phi_\Delta$.
    Then,
    $\den{\pair tu}_\phi=\{a+b~|~a\in\den t_\phi\textrm{ and }b\in\den
    u_\phi\}\subseteq\{a+b~|~a,b\in\den A\}\subseteq\gen\den A=\den{S(A)}$.
  \item $\vcenter{\infer[S_E]{{\Gamma\vdash\pi_j t:\B^j\times
          S(\B^{n-j})}}{{\Gamma\vdash t:S(\B^n)}}}$.

    Then, by the induction hypothesis, $\den
    t_\phi\subseteq\den{S(\B^n)}=\mathbb C^{2^n}$. By definition,
    \[
      \den{\pi_jt}_\phi=\{\prod_{h=1}^jb_{hk}\times\sum_{i\in
        P}\left(\frac{\alpha_i}{\sqrt{\sum_{r\in
              P}|\alpha_r|^2}}\right)\prod_{h=j+1}^nb_{hi}~|~\forall i\in
      P,\forall h,b_{hi}=b_{hk}\}
    \]
    where $\den t_\phi=\{\sum_{i=1}^n \may\prod_{h=1}^n b_{hi}\}$, with
    $b_{hi}\in\den{\B}$, and $\forall i\in P, \forall h, b_{hi}=b_{hk}$. Then,
    $\den{\pi_jt}_\phi \subseteq\den{\B}^n\times\den{\B^{n-j}}
    =\den{\B^n\times\B^{n-j}}$.
  \item $\vcenter{\infer[W]{{\Gamma,x:\B^n\vdash t:A}}{{\Gamma\vdash t:A}}}$.

    Then, by the induction hypothesis, $\den t_\phi\subseteq\den A$, where
    $\phi$ is a $\Gamma$-valuation. Notice that any $\phi'$ that is a
    $(\Gamma,x:\B^n)$-valuation is also a $\Gamma$-valuation. Then, $\den
    t_{\phi'}\subseteq\den A$.
  \item $\vcenter{\infer[C]{{\Gamma,x:\B^n\vdash (x/y)t:A}}{{\Gamma,x:\B^n,y:\B^n\vdash t:A}}}$.

    Then, by the induction hypothesis, $\den t_\phi\subseteq\den A$, where
    $\phi$ is a $(\Gamma,x:\B^n,y:\B^n)$-valuation. Let $\phi'$ be a
    $(\Gamma,x:\B^n)$-valuation, then, by Lemma~\ref{lem:subsDen},
    $\den{(x/y)t}_{\phi'}\subseteq\den A$.
  \item $\vcenter{\infer[\times_I]{{\Gamma,\Delta\vdash t\times u:A\times
          B}}{\Gamma\vdash t:A & \Delta\vdash u:B}}$.

    Then, by the induction hypothesis, $\den t_{\phi_\Gamma}\subseteq\den A$ and
    $\den u_{\phi_\Delta}\subseteq\den B$. Then,
    $\den{t,u}_{\phi_\Gamma,\phi_\Delta}
    =\den t_{\phi_\Gamma,\phi_\Delta}\times\den u_{\phi_\Gamma,\phi_\Delta}
    =\den t_{\phi_\Gamma}\times\den u_{\phi_\Delta}
    \subseteq \den A\times\den B
    =\den{A\times B}$.
  \item $\vcenter{\infer[\times_{Er}]{{\Gamma\vdash \head~t:\B}}{{\Gamma\vdash t:\B^n}}}$.

    Then, by the induction hypothesis, $\den t_\phi\subseteq\den{\B^n}=
    \den{\B}\times\den{\B^{n-1}}= \{a\times
    b~|~a\in\den{\B},b\in\den{\B^{n-1}}\}$. So,
    $\den{\head~t}_\phi=\{a~|~a\times b\in\den
    t_\phi,a\in\den{\B}\}=\{a~|~a\times b\in\{a'\times
    b'~|~a'\in\den{\B},b'\in\den{\B^{n-1}}\}\}=\{a~|~a\in\den{\B}\}=\den{\B}$.
  \item $\vcenter{\infer[\times_{El}]{{\Gamma\vdash \tail~t:\B^{n-1}}}{{\Gamma\vdash t:\B^n}}}$.

    Then, by the induction
    hypothesis, $\den t_\phi\subseteq\den{\B^n}=\den{\B}\times\den{\B^{n-1}}=\{a\times b~|~a\in\den{\B},b\in\den{\B^{n-1}}\}$. So
    $\den{\tail~t}_\phi
    =\{\prod_{i=2}^n a_i~|~\prod_{i=1}^n a_i\in\den t_\phi,a_1\in\den{\B}\}
    =\{\prod_{i=2}^n a_i~|~\prod_{i=1}^n a_i\in\{a'\times b'~|~a'\in\den{\B},b'\in\den{\B^{n-1}}\}\}
    =\{b~|~b\in\den{\B^{n-1}}\}
    =\den{\B^{n-1}}$.
  \item $\vcenter{\infer[\Uparrow_r]{{\Gamma\vdash \Uparrow_r t:S(A\times
          B)}}{{\Gamma\vdash t:S(S(A)\times B)}}}$.

    Then, by the induction hypothesis, $\den{\Uparrow_r t}_\phi =\den
    t_\phi\subseteq\den{S(S(A)\times B)} =\gen(\gen{\den A}\times\den B)
    =\gen(\den A\times\den B) =\den{S(A\times B)}$.
  \item $\vcenter{\infer[\Uparrow_\ell]{{\Gamma\vdash \Uparrow_\ell t:S(A\times
          B)}}{{\Gamma\vdash t:S(A\times S(B))}}}$.
    This case is analogous to the previous one.
    \qed
  \end{itemize}
\end{proof}

\section{Trace and typing of the Deutsch algorithm}\label{ap:Deutsch}

We may use $\ket{q_1\cdots q_n}$ as a shorthand notation for $\ket{q_1}\times\cdots\times\ket{q_n}$.

The full trace of $\s{Deutsch}_{id}$ is given below.
\begin{align*}
  \omit\rlap{\parbox{\textwidth}{$\s{Deutsch}_{id}$}}\\
  =\quad &
           \pi_1~(\Uparrow_r \s H_1~(\s U_f~\Uparrow_\ell\Uparrow_r (\s H_{\textsl{both}}~\ket{01})))\\
  \red\rbetab&
               \pi_1\ (\Uparrow_r \s H_1\ (\s U_{id}\
               \Uparrow_\ell\Uparrow_r ((\s H(\head~\ket{01}))\times(\s H(\tail~\ket{01})))))
  \\
  \red\rhead&
              \pi_1\ (\Uparrow_r \s H_1\ (\s U_{id}\
              \Uparrow_\ell\Uparrow_r ((\s H\ket 0)\times(\s H(\tail~\ket{01})))))\\
  \red\rtail&
              \pi_1\ \Uparrow_r \s H_1\ (\s U_{id}\
              \Uparrow_\ell\Uparrow_r ((\s H\ket 0)\times(\s H\ket 1)))\\
  \red{\rbetab^2}&
                   \pi_1\ (\Uparrow_r \s H_1\ (\s U_{id}\
                   \Uparrow_\ell\Uparrow_r (
                   \frac 1{\sqrt 2}.\pair{\ket 0}{(\ite{\ket 0}{(-\ket 1)}{\ket 1})}
                   \times
                   \frac 1{\sqrt 2}.\pair{\ket 0}{(\ite{\ket 1}{(-\ket 1)}{\ket 1})}
                   )))
  \\
  \red\riffalse&
                 \pi_1\ (\Uparrow_r \s H_1\ (\s U_{id}\
                 \Uparrow_\ell\Uparrow_r (
                 \frac 1{\sqrt 2}.\pair{\ket 0}{\ket 1}
                 \times
                 \frac 1{\sqrt 2}.\pair{\ket 0}{(\ite{\ket 1}{(-\ket 1)}{\ket 1})}
                 )))
  \\
  \red\riftrue&
                \pi_1\ (\Uparrow_r \s H_1\ (\s U_{id}\
                \Uparrow_\ell\Uparrow_r (\frac 1{\sqrt 2}.\pair{\ket 0}{\ket 1}\times\frac 1{\sqrt 2}.\npair{\ket 0}{\ket 1}
                )))
  \\
  \red\rdistscalr&
                   \pi_1\ (\Uparrow_r \s H_1\ (\s U_{id}\
                   \Uparrow_\ell\frac 1{\sqrt 2}.\Uparrow_r(\pair{\ket 0}{\ket 1}\times\frac 1{\sqrt 2}.\npair{\ket 0}{\ket 1}
                   )))
  \\
  \red\rdistsumr&
                  \pi_1\ (\Uparrow_r \s H_1\ (\s U_{id}\
                  \Uparrow_\ell\frac 1{\sqrt 2}.
                  ( {\Uparrow_r
                  \ket 0\times\frac 1{\sqrt 2}.\npair{\ket 0}{\ket 1}}
                  + {\Uparrow_r \ket
                  1\times\frac 1{\sqrt 2}.\npair{\ket 0}{\ket 1}})))
  \\
  \red\rcaneutr&
                 \pi_1\ (\Uparrow_r \s H_1\ (\s U_{id}\
                 \Uparrow_\ell\frac 1{\sqrt 2}.
                 ( { \ket 0\times\frac 1{\sqrt 2}.\npair{\ket 0}{\ket 1} }
                 + { \ket 1\times\frac 1{\sqrt 2}.\npair{\ket 0}{\ket 1}
                 })))
  \\
  \red\rdistcascal&
                    \pi_1\ (\Uparrow_r \s H_1\ (\s U_{id}\
                    \frac 1{\sqrt 2}.
                    \Uparrow_\ell
                    ( { \ket 0\times\frac 1{\sqrt 2}.\npair{\ket 0}{\ket 1} }
                    + { \ket 1\times\frac 1{\sqrt 2}.\npair{\ket 0}{\ket 1}
                    })))
  \\
  \red\rdistcasum&
                   \pi_1\ (\Uparrow_r \s H_1\ (\s U_{id}\
                   \frac 1{\sqrt 2}.
                   ( \Uparrow_\ell (\ket 0\times\frac 1{\sqrt 2}.\npair{\ket 0}{\ket 1})
                   + \Uparrow_\ell (\ket 1\times\frac
                   1{\sqrt 2}.\npair{\ket 0}{\ket 1}))))
  \\
  \red{\rdistscall^2}&
                       \pi_1\ (\Uparrow_r \s H_1\ (\s U_{id}\
                       \frac 1{\sqrt 2}.
                       ( \frac 1{\sqrt 2}.\Uparrow_\ell (\ket 0\times\npair{\ket 0}{\ket 1})
                       + \frac 1{\sqrt 2}.\Uparrow_\ell
                       (\ket 1\times\npair{\ket 0}{\ket 1}))))
  \\
  \red\rdists&
               \pi_1\ (\Uparrow_r \s H_1\ (\s U_{id}\
               ( \frac 1{\sqrt 2}.  \frac 1{\sqrt 2}.\Uparrow_\ell (\ket 0\times\npair{\ket 0}{\ket 1})
               + \frac 1{\sqrt 2}. \frac 1{\sqrt 2}.\Uparrow_\ell (\ket 1\times\npair{\ket 0}{\ket 1}))))
  \\
  \red{\rprod}&
                \pi_1\ (\Uparrow_r \s H_1\ (\s U_{id}\
                ( \frac 12.\Uparrow_\ell (\ket 0\times\npair{\ket 0}{\ket 1})
                + \frac 12.\Uparrow_\ell (\ket
                1\times\npair{\ket 0}{\ket 1}))))
  \\
  \red{\rdistsuml^2}&
                      \pi_1\ (\Uparrow_r \s H_1\ (\s U_{id}\
                      ( \frac 12.  \pair{\Uparrow_\ell\ket{00}}{\Uparrow_\ell\ket 0\times(-\ket 1)}
                      + \frac 12. \pair{\Uparrow_\ell\ket{10}}{\Uparrow_\ell\ket 1\times(-\ket 1)})))
  \\
  \red{\rdistscall^2}&
                       \pi_1\ (\Uparrow_r \s H_1\ (\s U_{id}\
                       ( \frac 12.  \npair{\Uparrow_\ell\ket{00}}{\Uparrow_\ell\ket{01}}
                       + \frac 12. \npair{\Uparrow_\ell\ket{10}}{\Uparrow_\ell\ket{11}})))
  \\
  \red{\rcaneutl^4}&
                     \pi_1\ (\Uparrow_r \s H_1\ (\s U_{id}\
                     \pair { \frac 12.  \npair{\ket{00}}{\ket{01}} }
                     { \frac 12.  \npair{\ket{10}}{\ket{11}} })) \\
  \red{\rlinr}&
                \pi_1\ (\Uparrow_r \s H_1\
                \pair{\s U_{id}\ \frac 12.\npair{\ket{00}}{\ket{01}}}{\s U_{id}\ \frac 12.\npair{\ket{10}}{\ket{11}}}) \\
  \red{\rlinscalr^2}&
                      \pi_1\ (\Uparrow_r \s H_1\
                      \pair{\frac 12.\s U_{id}\npair{\ket{00}}{\ket{01}}}{\frac 12.\s U_{id}\npair{\ket{10}}{\ket{11}}}) \\
  \red{\rlinr^2}&
                  \pi_1\ (\Uparrow_r \s H_1\
                  \pair{\frac 12.\pair{\s U_{id}\ket{00}}{\s U_{id}(-\ket{01})}}{\frac 12.\pair{\s U_{id}\ket{10}}{\s U_{id}(-\ket{11})}}) \\
  \red{\rlinscalr^2}&
                      \pi_1\ (\Uparrow_r \s H_1\
                      \pair{\frac 12.\npair{\s U_{id}\ket{00}}{\s U_{id}\ket{01}}}{\frac 12.\npair{\s U_{id}\ket{10}}{\s U_{id}\ket{11}}}) \\
  \red\rbetab&
               \begin{aligned}[t]
                 \pi_1\ (\Uparrow_r\s H_1\ ( & \frac 12.(
                 (\head\ket{00})\times(\ite{(\tail\ket{00})}{(\s{not}(id(\head\ket{00})))}{(id(\head\ket{00}))})
                 \\
                 -& { \s U_{id}\ \ket{01} }) +{ \frac 12. \npair{ \s U_{id}\
                     \ket{10} }{ \s U_{id}\ \ket{11} } }))
               \end{aligned}
  \\
  \red{\rhead^3}&
                  \begin{aligned}[t]
                    \pi_1\ (\Uparrow_r\s H_1\ (
                    & \frac 12.( \ket 0\times(\ite{(\tail\ket{00})}{(\s{not}(id\ket 0)))}{(id\ket 0)}) \\
                    -& { \s U_{id}\ \ket{01} }) +({ \frac 12. \npair{ \s U_{id}\
                        \ket{10} }{ \s U_{id}\ \ket{11} } }))
                  \end{aligned}
  \\
  \red{\rtail}&
                \begin{aligned}[t]
                  \pi_1\ (\Uparrow_r\s H_1\ ( &
                  \frac 12.( \ket 0\times(\ite{\ket 0}{(\s{not}(id\ket 0)))}{(id\ket 0)}) \\
                  -& { \s U_{id}\ \ket{01} }) +({ \frac 12. \npair{ \s U_{id}\
                      \ket{10} }{ \s U_{id}\ \ket{11} } }))
                \end{aligned}
  \\
  \red\riffalse&
                 \pi_1\ (\Uparrow_r\s H_1\
                 \pair
                 { \frac 12.  \npair{\ket 0\times(id\ket 0)} { \s U_{id}\ \ket{01} } }
                 { \frac 12.  \npair{ \s U_{id}\ \ket{10} }{ \s U_{id}\ \ket{11} } }) \\
  \red\rbetab&
               \pi_1\ (\Uparrow_r\s H_1\
               \pair
               { \frac 12.  \npair{\ket{00}} { \s U_{id}\ \ket{01} } }
               { \frac 12.  \npair{ \s U_{id}\ \ket{10} }{ \s U_{id}\ \ket{11} } }) \\
  \lra^*&
          \pi_1\ (\Uparrow_r\s H_1\
          \pair
          { \frac 12.  \npair{\ket{00}}{\ket{01}} }
          { \frac 12.  \npair{\ket{11}}{\ket{10}} }) \\
  \red\rlinr&
              \pi_1\ (\Uparrow_r
              \pair
              { \s H_1\ \frac 12.  \npair{\ket{00}}{\ket{01}} }
              { \s H_1\ \frac 12.  \npair{\ket{11}}{\ket{10}} }) \\
  \red{\rlinscalr^2}&
                      \pi_1\ (\Uparrow_r
                      \pair
                      { \frac 12.  \s H_1 \npair{\ket{00}}{\ket{01}} }
                      { \frac 12.  \s H_1 \npair{\ket{11}}{\ket{10}} }) \\
  \red{\rlinr^2}&
                  \pi_1\ (\Uparrow_r
                  \pair
                  { \frac 12.  \pair{\s H_1\ket{00}}{\s H_1(-\ket{01})} }
                  { \frac 12.  \pair{\s H_1\ket{11}}{\s H_1(-\ket{10})} }) \\
  \red{\rlinscalr^2}&
                      \pi_1\ (\Uparrow_r
                      \pair
                      { \frac 12.  \npair{\s H_1\ket{00}}{\s H_1\ket{01}} }
                      { \frac 12.  \npair{\s H_1\ket{11}}{\s H_1\ket{10}} }) \\
  \red\rbetab&
               \pi_1\ (\Uparrow_r
               \pair
               { \frac 12.  \npair{ (\s H(\head\ket{00}))\times(\tail\ket{00}) }{\s H_1\ket{01}} }
               { \frac 12.  \npair{\s H_1\ket{11}}{\s H_1\ket{10}} }) \\
  \red\rhead&
              \pi_1\ (\Uparrow_r
              \pair
              { \frac 12.  \npair{ (\s H\ket 0)\times(\tail\ket{00}) }{\s H_1\ket{01}} }
              { \frac 12.  \npair{\s H_1\ket{11}}{\s H_1\ket{10}} }) \\
  \red\rtail&
              \pi_1\ (\Uparrow_r
              \pair
              { \frac 12.  \npair{ (\s H\ket 0)\times\ket 0 }{\s H_1\ket{01}} }
              { \frac 12.  \npair{\s H_1\ket{11}}{\s H_1\ket{10}} }) \\
  \red\rbetab&
               \pi_1\ (\Uparrow_r
               \pair
               { \frac 12.  \npair{ ( \frac 1{\sqrt 2}.  \pair{\ket 0}{\ite{\ket 0}{(-\ket 1)}{\ket 1}}) \times\ket 0 }{\s H_1\ket{01}} }
               { \frac 12.  \npair{\s H_1\ket{11}}{\s H_1\ket{10}} }) \\
  \red\riffalse&
                 \pi_1\ (\Uparrow_r
                 \pair
                 { \frac 12.  \npair{ ( \frac 1{\sqrt 2}.  \pair{\ket 0}{\ket 1}) \times\ket 0 }{\s H_1\ket{01}} }
                 { \frac 12.  \npair{\s H_1\ket{11}}{\s H_1\ket{10}} }) \\
  \red\rdistcasum&
                   \pi_1
                   \pair
                   { \Uparrow_r \frac 12.  \npair{ ( \frac 1{\sqrt 2}.\ket 0+\frac 1{\sqrt 2}.\ket 1) \times\ket 0 }{\s H_1\ket{01}} }
                   { \Uparrow_r \frac 12.  \npair{\s H_1\ket{11}}{\s H_1\ket{10}} } \\
  \red{\rdistcascal^2}&
                        \pi_1
                        \pair
                        { \frac 12.  \Uparrow_r \npair{ ( \frac 1{\sqrt 2}.\ket 0+\frac 1{\sqrt 2}.\ket 1) \times\ket 0 }{\s H_1\ket{01}} }
                        { \frac 12.  \Uparrow_r \npair{\s H_1\ket{11}}{\s H_1\ket{10}} } \\
  \red\rdistcasum&
                   \begin{aligned}[t]
                     \pi_1 ( & { \frac 12.  \pair{ \Uparrow_r ( ( \frac 1{\sqrt 2}.\ket 0+\frac 1{\sqrt 2}.\ket 1) \times\ket 0) }{ \Uparrow_r (-\s H_1\ket{01}) } }\\
                     +& { \frac 12. \Uparrow_r \npair{\s
                         H_1\ket{11}}{\s H_1\ket{10}} })
                   \end{aligned}
  \\
  \red\rdistcascal&
                    \begin{aligned}[t]
                      \pi_1 ( & { \frac 12.  \npair{ \Uparrow_r ( \pair{\frac 1{\sqrt 2}.\ket 0}{\frac 1{\sqrt 2}.\ket 1} \times\ket 0) }{ \Uparrow_r \s H_1\ket{01} } }\\
                      +& { \frac 12. \Uparrow_r \npair{\s
                          H_1\ket{11}}{\s H_1\ket{10}} })
                    \end{aligned}
  \\
  \red\rdistsumr&
                  \begin{aligned}[t]
                    \pi_1 ( & { \frac 12.  \npair{ \pair { \Uparrow_r (\frac 1{\sqrt 2}.\ket 0)\times\ket 0 }{ \Uparrow_r (\frac 1{\sqrt 2}.\ket 1)\times\ket 0 } }{ \Uparrow_r \s H_1\ket{01} } }\\
                    +& { \frac 12. \Uparrow_r \npair{\s
                        H_1\ket{11}}{\s H_1\ket{10}} })
                  \end{aligned}
  \\
  \red{\rdistscalr^2}&
                       \begin{aligned}[t]
                         \pi_1 ( & { \frac 12. \npair{ \pair { \frac 1{\sqrt
                                 2}.\Uparrow_r\ket{00} }{
                               \frac 1{\sqrt
                                 2}.\Uparrow_r\ket{10} }
                           }{ \Uparrow_r \s H_1\ket{01} }
                         }\\
                         +& { \frac 12. \Uparrow_r
                           \npair{\s H_1\ket{11}}{\s H_1\ket{10}} } )
                       \end{aligned}
  \\
  \red{\rcaneutr^2}&
                     \pi_1 ( { \frac 12.  \npair{ \pair { \frac 1{\sqrt 2}.\ket{00} }{ \frac 1{\sqrt 2}.\ket{10} } }{ \Uparrow_r \s H_1\ket{01} } }
                     +
                     { \frac 12.  \Uparrow_r \npair{\s H_1\ket{11}}{\s H_1\ket{10}} }) \\
  \lra^*&
          \begin{aligned}[t]
            \pi_1 (\frac 12. ( &{ \pair { \pair{\frac 1{\sqrt 2}.\ket{00}}{\frac
                  1{\sqrt 2}.\ket{10}} }{ \npair{-\frac 1{\sqrt
                    2}.\ket{01}}{\frac 1{\sqrt 2}.\ket{11}} }
            }\\
            +& { \pair{ \npair{\frac 1{\sqrt 2}.\ket{01}}{\frac 1{\sqrt
                    2}.\ket{11}} }{ \pair{-\frac 1{\sqrt 2}.\ket{00}}{\frac
                  1{\sqrt 2}.\ket{10}} } } ) )
          \end{aligned}
  \\
  =_{AC}\ &
            \begin{aligned}[t]
              \pi_1 (\frac 12. ( &{ \pair { \npair {\frac 1{\sqrt 2}.\ket{00}}
                  {\frac 1{\sqrt 2}.\ket{00}} }{ \pair {-\frac 1{\sqrt
                      2}.\ket{01}} {\frac 1{\sqrt 2}.\ket{01}} }
              }\\
              +& { \pair{ \npair {-\frac 1{\sqrt 2}.\ket{11}} {\frac 1{\sqrt
                      2}.\ket{11}} }{ \pair {\frac 1{\sqrt 2}.\ket{10}} {\frac
                    1{\sqrt 2}.\ket{10}} } } ) )
            \end{aligned}
  \\
  \red{\rfact^4} &
                   \begin{aligned}[t]
                     \pi_1 (\frac 12. ( &{ \pair { \npair {\frac 1{\sqrt 2}}
                         {\frac 1{\sqrt 2}} .\ket{00} }{ \pair {-\frac 1{\sqrt
                             2}} {\frac 1{\sqrt 2}} .\ket{01} }
                     }\\
                     +& { \pair{ \npair {-\frac 1{\sqrt 2}} {\frac 1{\sqrt
                             2}}.\ket{11} }{ \pair {\frac 1{\sqrt 2}} {\frac
                           1{\sqrt 2}}.\ket{10} } } ) )
                   \end{aligned}
  \\
  =\quad &
           \pi_1
           (\frac 12.
           (
           {
           \pair
           {
           0.\ket{00}
           }{
           0.\ket{01}
           }
           }
           +
           {
           \pair{
           {-\frac 2{\sqrt 2}}.\ket{11}
           }{
           {\frac 2{\sqrt 2}}.\ket{10}
           }
           }
           )
           )
  \\
  \red{\rzeros^2} &
                    \pi_1
                    (\frac 12.
                    (
                    {
                    \pair{\z[\B\times\B]}{\z[\B\times\B]}
                    }
                    +
                    {
                    \pair
                    {
                    {-\frac 2{\sqrt 2}}.\ket{11}
                    }
                    {
                    \frac 2{\sqrt 2}.\ket{10}
                    }
                    }
                    )
                    )
  \\
  \red{\rneut^2} &
                   \pi_1
                   (\frac 12.
                   \pair
                   {
                   {-\frac 2{\sqrt 2}}.\ket{11}
                   }
                   {
                   \frac 2{\sqrt 2}.\ket{10}
                   }
                   )
  \\
  \red\rdists &
                \pi_1
                \pair
                {
                {\frac 12.(-\frac 2{\sqrt 2}}).\ket{11}
                }
                {
                \frac 12.\frac 2{\sqrt 2}.\ket{10}
                }
  \\
  \red{\rprod^2} &
                   \pi_1
                   \pair
                   {
                   {-\frac 1{\sqrt 2}}.\ket{11}
                   }
                   {
                   \frac 1{\sqrt 2}.\ket{10}
                   }
  \\
  =_{AC}\ &
            \pi_1
            \npair
            {
            \frac 1{\sqrt 2}.\ket{10}
            }
            {
            \frac 1{\sqrt 2}.\ket{11}
            }
  \\
  \red\rproj &
               \ket 1\times\npair
               {
               \frac 1{\sqrt 2}.\ket{0}
               }
               {
               \frac 1{\sqrt 2}.\ket{1}
               }
\end{align*}

The typing of $\s{Deutsch}_f$, for any $\vdash f:\B\Rightarrow\B$, is given
below:
\begin{equation}
  \label{eq:H}
  \infer[\Rightarrow_I]
  {\vdash\s H:\B\Rightarrow S(\B)}
  {
    \infer[\preceq]
    {x:\B\vdash\frac 1{\sqrt 2}.(\ket 0+\ite{x}{(-\ket 1)}{\ket 1}):S(\B)}
    {
      \infer[S_I^\alpha]
      {x:\B\vdash\frac 1{\sqrt 2}.(\ket 0+\ite{x}{(-\ket 1)}{\ket 1}):S(S(S(\B)))}
      {
        \infer[S_I^+]
        {x:\B\vdash\ket 0+\ite{x}{(-\ket 1)}{\ket 1}:S(S(\B))}
        {
          \infer[\preceq]
          {\vdash\ket 0:S(\B)}
          {\infer[\tax_{\ket 0}]{\vdash\ket 0:\B}{}}
          &
          \infer[\Rightarrow_{E}]
          {x:\B\vdash\ite x{(-\ket 1)}{\ket 1}:S(\B)}
          {
            \infer[\tif]{\vdash\ite{}{(-\ket 1)}{(\ket 1)}:\B\Rightarrow S(\B)}
            {
              \infer[S_I^\alpha]{\vdash -\ket 1:S(\B)}
              {
                \infer[\tax_{\ket 1}]{\vdash\ket 1:\B}{}
              }
              &
              \infer[\preceq]{\vdash\ket 1:S(\B)}
              {
                \infer[\tax_{\ket 1}]{\vdash\ket 1:\B}{}
              }
            }
            &
            \infer[\tax]{x:\B\vdash x:\B}{}
          }
        }
      }
    }
  }
\end{equation}

\begin{equation}
  \label{eq:Hone}
  \infer[\Rightarrow_I]
  {\vdash \s H_1:\B^2\Rightarrow S(\B)\times\B}
  {
    \infer[C]
    {x:\B^2\vdash (\s H\ (\head\ x))\times(\tail\ x):S(\B)\times\B}
    {
      \infer[\times_I]
      {x:\B^2,y:\B^2\vdash (\s H\ (\head\ x))\times(\tail\ y):S(\B)\times\B}
      {
        \infer[\Rightarrow_E]
        {x:\B^2\vdash \s H\ (\head\ x):S(\B)}
        {
          \infer{\vdash \s H:\B\Rightarrow S(\B)}{\eqref{eq:H}}
          &
          \infer[E_r]
          {x:\B^2\vdash\head\ x:\B}
          {
            \infer[\tax]{x:\B^2\vdash x:\B^2}{}
          }
        }
        &
        \infer[E_l]
        {y:\B^2\vdash\tail\ y:\B}
        {
          \infer[\tax]{y:\B^2\vdash y:\B^2}{}
        }
      }
    }
  }
\end{equation}

\begin{equation}
  \label{eq:not}
  \infer[\Rightarrow_I]
  {\vdash\s{not}:\B\Rightarrow\B}
  {
    \infer[\Rightarrow_E]
    {x:\B\vdash\ite x{\ket 0}{\ket 1}:\B}
    {
      \infer[\tif]{\vdash\ite{}{\ket 0}{\ket 1}:\B\Rightarrow\B}
      {
        \infer[\tax_{\ket 0}]{\vdash\ket 0:\B}{}
        &
        \infer[\tax_{\ket 1}]{\vdash\ket 1:\B}{}
      }
      &
      \infer[\tax]{x:\B\vdash x:\B}{}
    }
  }
\end{equation}

\begin{equation}
  \hspace{-3.5cm}\label{eq:ite}
  {
    \infer[C]
    {y:\B^2\vdash\ite{(\tail\ y)}{(\s{not}\ (f\ (\head\ y)))}{(f\ (\head\ y))}:\B}
    {
      \infer[\Rightarrow_E]{x:\B^2,y:\B^2\vdash\ite{(\tail\ x)}{(\s{not}\ (f\ (\head\ y)))}{(f\ (\head\ y))}:\B}
      {
        \infer[\tif]
        {y:\B^2\vdash\ite{}{(\s{not}\ (f\ (\head\ y)))}{(f\ (\head\ y))}:\B\Rightarrow\B}
        {
          \infer[\Rightarrow_E]
          {y:\B^2\vdash \s{not}\ (f\ (\head\ y)):\B}
          {
            \infer
            {\vdash \s{not}:\B\Rightarrow\B}
            {\eqref{eq:not}}
            &
            \infer[\Rightarrow_E]
            {y:\B^2\vdash f\ (\head\ y):\B}
            {
              \infer{\vdash f:\B\Rightarrow\B}{}
              &
              \infer[\times_{E_r}]
              {y:\B^2\vdash\head\ y:\B}
              {\infer[\tax]{y:\B^2\vdash y:\B^2}{}}
            }
          }
          &
          \infer[\Rightarrow_E]
          {y:\B^2\vdash f\ (\head\ y):\B}
          {
            \infer{\vdash f:\B\Rightarrow\B}{}
            &
            \infer[\times_{E_r}]
            {y:\B^2\vdash\head\ y:\B}
            {\infer[\tax]{y:\B^2\vdash y:\B^2}{}}
          }
        }
        &
        \infer[\times_{E_l}]
        {x:\B^2\vdash\tail\ x:\B}
        {\infer[\tax]{x:\B^2\vdash x:\B^2}{}}
      }
    }
  }
\end{equation}

\begin{equation}
  \label{eq:Uf}
  \infer[\Rightarrow_I]
  {\vdash\s U_f:\B^2\Rightarrow\B^2}
  {
    \infer[C]
    {x:\B^2\vdash (\head\ x)\times\ite{(\tail\ x)}{(\s{not}\ (f\ (\head\ x)))}{f\ (\head\ x)}:\B^2}
    {
      \infer[\times_I]
      {x:\B^2,y:\B^2\vdash (\head\ x)\times\ite{(\tail\ y)}{(\s{not}\ (f\ (\head\ y)))}{f\ (\head\ y)}:\B^2}
      {
        \infer[\times_{E_r}]
        {x:\B^2\vdash\head\ x:\B}
        {\infer[\tax]{x:\B^2\vdash x:\B^2}{}}
        &
        \infer
        {y:\B^2\vdash\ite{(\tail\ y)}{(\s{not}\ (f\ (\head\ y)))}{(f\ (\head\ y))}:\B}
        {\eqref{eq:ite}}
      }
    }
  }
\end{equation}

\begin{equation}
  \label{eq:Hboth}
  \hspace{-2.8cm}\infer[\Rightarrow_E]
  {\vdash \s H_{\textsl{both}}\ket{01}: S(\B)\times S(\B)}
  {
    \infer[\Rightarrow_I]
    {\vdash \s H_{\textsl{both}}:\B^2\Rightarrow S(\B)\times S(\B)}
    {
      \infer[C]
      {x:\B^2\vdash(\s H\ (\head\ x))\times(\s H\ (\tail\ x)):S(\B)\times S(\B)}
      {  
        \infer[\times_I]
        {x:\B^2,y:\B^2\vdash(\s H\ (\head\ x))\times(\s H\ (\tail\ y)):S(\B)\times S(\B)}
        {
          \infer[\Rightarrow_E]
          {x:\B^2\vdash \s H\ (\head\ x):S(\B)}
          {
            \infer{\vdash \s H:\B\Rightarrow S(\B)}{\eqref{eq:H}}
            &
            \infer[\times_{E_r}]
            {x:\B^2\vdash\head\ x:\B}
            {\infer[\tax]{x:\B^2\vdash x:\B^2}{}}
          }
          &
          \infer[\Rightarrow_E]
          {y:\B^2\vdash \s H\ (\tail\ y):S(\B)}
          {
            \infer{\vdash \s H:\B\Rightarrow S(\B)}{\eqref{eq:H}}
            &
            \infer[\times_{E_l}]
            {y:\B^2\vdash\tail\ y:\B}
            {\infer[\tax]{y:\B^2\vdash y:\B^2}{}}
          }
        }
      }
    }
    &
    \hspace{-1cm}\infer[\times_I]
    {\vdash\ket{01}:\B^2}
    {
      \infer[\tax_{\ket 0}]
      {\vdash\ket 0:\B}{}
      &
      \infer[\tax_{\ket 1}]
      {\vdash\ket 1:\B}{}
    }
  }
\end{equation}

\[
  \infer[S_E]
  {\vdash \s{Deutsch}_f:\B\times S(\B)}
  {
    \infer[\Uparrow_r]
    {\vdash \Uparrow_r\s H_1\ (U_f\ \Uparrow_\ell\Uparrow_r \s
      H_{\textsl{both}}\ket{01}): S(\B^2)}
    {
      {
        \infer[\Rightarrow_{ES}]
        {\vdash \s H_1\ (U_f\ \Uparrow_\ell\Uparrow_r \s
          H_{\textsl{both}}\ket{01}): S(S(\B)\times\B)}
        {
          \infer[\preceq]
          {\vdash \s H_1:S(\B^2\Rightarrow S(\B)\times\B)}
          {
            \infer{\vdash \s H_1:\B^2\Rightarrow S(\B)\times\B}{\eqref{eq:Hone}}
          }
          &
          \infer[\Rightarrow_{ES}]
          {\vdash U_f\ (\Uparrow_\ell\Uparrow_r \s H_{\textsl{both}}\ket{01}):
            S(\B^2)}
          {
            \infer[\preceq]
            {\vdash U_f:S(\B^2\Rightarrow\B^2)}
            {
              \infer{\vdash U_f:\B^2\Rightarrow\B^2}{\eqref{eq:Uf}}
            }
            &
            \infer[\Uparrow_\ell]
            {\vdash\Uparrow_\ell\Uparrow_r \s H_{\textsl{both}}\ket{01}:S({\B^2})}
            {
              \infer[\Uparrow_r]
              {\vdash\Uparrow_r \s
                H_{\textsl{both}}\ket{01}:S({\B\times S(\B)})}
              {
                \infer[\preceq]
                {\vdash \s H_{\textsl{both}}\ket{01}:S(S(\B)\times S(\B))}
                {
                  \infer
                  {\vdash \s H_{\textsl{both}}\ket{01}:S(\B)\times S(\B)}
                  {\eqref{eq:Hboth}}
                }
              }
            }
          }
        }
      }
    }
  }
\]

\section{Trace and typing of the Teleportation algorithm}\label{ap:telep}
The full trace of $\s{Teleportation}\ \pair{\alpha.\ket 0}{\beta.\ket 1}$ is
given below.

\begin{align*}
  \omit\rlap{\parbox{\textwidth}{$\s{Teleportation}\ \pair{\alpha.\ket 0}{\beta.\ket 1}$}}
  \\
  =\quad &
           (\lambda q^{S(\B)}.(\s{Bob}~\Uparrow_\ell(\s{Alice}~(q\times\beta_{00}))))\ \pair{\alpha.\ket 0}{\beta.\ket 1}\\
  \red{\rbetan} & 
                  \s{Bob}\ \Uparrow_\ell(\s{Alice}\ (\pair{\alpha.\ket 0}{\beta.\ket 1}\times\beta_{00}))
  \\
  \red{\rbetan} &
                  \s{Bob}\ \Uparrow_\ell(\pi_2(\Uparrow_r \s H_1^3(\s{cnot}_{12}^3\ \Uparrow_\ell \Uparrow_r \pair{\alpha.\ket 0}{\beta.\ket 1}\times\beta_{00})))
  \\
  \red{\rdistsumr} &
                     \s{Bob}\ \Uparrow_\ell(\pi_2(\Uparrow_r \s H_1^3(\s{cnot}_{12}^3\ \Uparrow_\ell \pair{\Uparrow_r (\alpha.\ket 0\times\beta_{00)}}{\Uparrow_r (\beta.\ket 1\times\beta_{00})})))
  \\
  \red{\rdistcasum} &
                      \s{Bob}\ \Uparrow_\ell(\pi_2(\Uparrow_r \s H_1^3(\s{cnot}_{12}^3\ \pair{\Uparrow_\ell \Uparrow_r (\alpha.\ket 0\times\beta_{00})}{\Uparrow_\ell \Uparrow_r (\beta.\ket 1\times\beta_{00})})))
  \\
  \red{\rdistscalr^2} &
                        \s{Bob}\ \Uparrow_\ell(\pi_2(\Uparrow_r \s H_1^3(\s{cnot}_{12}^3\ \pair{\Uparrow_\ell \alpha.\Uparrow_r (\ket 0\times\beta_{00})}{\Uparrow_\ell \beta.\Uparrow_r (\ket 1\times\beta_{00})})))
  \\
  \red{\rcaneutr^2} &
                      \s{Bob}\ \Uparrow_\ell(\pi_2(\Uparrow_r \s H_1^3(\s{cnot}_{12}^3\ \pair{\Uparrow_\ell \alpha.(\ket 0\times\beta_{00})}{\Uparrow_\ell \beta.(\ket 1\times\beta_{00})})))
  \\
  \red{\rdistcascal^2} &
                         \s{Bob}\ \Uparrow_\ell(\pi_2(\Uparrow_r \s H_1^3(\s{cnot}_{12}^3\ \pair{\alpha.\Uparrow_\ell (\ket 0\times\beta_{00})}{\beta.\Uparrow_\ell (\ket 1\times\beta_{00})})))
  \\
  \red{\rdistsuml^2} &
                       \begin{aligned}[t]
                         \s{Bob}\ \Uparrow_\ell(\pi_2(\Uparrow_r \s
                         H_1^3(\s{cnot}_{12}^3\ ( \alpha.& (\Uparrow_\ell (\ket
                         0\times(\frac 1{\sqrt 2}.\ket{00}))
                         \\
                         +& (\Uparrow_\ell (\ket 0\times\frac 1{\sqrt
                           2}.\ket{11})))
                         \\
                         + \beta. & (\Uparrow_\ell (\ket 1\times(\frac 1{\sqrt
                           2}.\ket{00}))
                         \\
                         +& (\Uparrow_\ell (\ket 1\times\frac 1{\sqrt
                           2}.\ket{11}))) ) )))
                       \end{aligned}
  \\
  \red{\rdistscall^4} &
                        \begin{aligned}[t]
                          \s{Bob}\ \Uparrow_\ell(\pi_2(\Uparrow_r \s
                          H_1^3(\s{cnot}_{12}^3\ ( \alpha.& \pair {\frac 1{\sqrt
                              2}.\Uparrow_\ell \ket{000}} {\frac 1{\sqrt
                              2}.\Uparrow_\ell \ket{011}}
                          \\
                          + \beta. & \pair {\frac 1{\sqrt
                              2}.\Uparrow_\ell \ket{100}} {\frac 1{\sqrt
                              2}.\Uparrow_\ell \ket{111}} ))))
                        \end{aligned}
  \\
  \red{\rcaneutl^4} &
                      \begin{aligned}[t]
                        \s{Bob}\ \Uparrow_\ell(\pi_2(\Uparrow_r \s
                        H_1^3(\s{cnot}_{12}^3\ ( \alpha.& \pair {\frac 1{\sqrt
                            2}.\ket{000}} {\frac 1{\sqrt 2}.\ket{011}}
                        \\
                        + \beta. & \pair {\frac 1{\sqrt 2}.\ket{100}} {\frac
                          1{\sqrt 2}.\ket{111}} ))))
                      \end{aligned}
  \\
  \red{\rdists^2} &
                    \begin{aligned}[t]
                      \s{Bob}\ \Uparrow_\ell(\pi_2(\Uparrow_r \s
                      H_1^3(\s{cnot}_{12}^3\ ( & \pair {\alpha.\frac 1{\sqrt
                          2}.\ket{000}} {\alpha.\frac 1{\sqrt 2}.\ket{011}}
                      \\
                      + & \pair {\beta.\frac 1{\sqrt 2}.\ket{100}} {\beta.\frac
                        1{\sqrt 2}.\ket{111}} ))))
                    \end{aligned}
  \\
  \red{\rprod^4} &
                   \begin{aligned}[t]
                     \s{Bob}\ \Uparrow_\ell(\pi_2(\Uparrow_r \s H_1^3(\s{cnot}_{12}^3\
                     ( & \pair {\frac{\alpha}{\sqrt 2}.\ket{000}}
                     {\frac{\alpha}{\sqrt 2}.\ket{011}}
                     \\
                     + & \pair {\frac{\beta}{\sqrt 2}.\ket{100}}
                     {\frac{\beta}{\sqrt 2}.\ket{111}} ))))
                   \end{aligned}
  \\
  \red{\rlinr^3} &
                   \begin{aligned}[t]
                     \s{Bob}\ \Uparrow_\ell(\pi_2(\Uparrow_r \s H_1^3 ( & \pair
                     {\s{cnot}_{12}^3\ \frac{\alpha}{\sqrt 2}.\ket{000}}
                     {\s{cnot}_{12}^3\ \frac{\alpha}{\sqrt 2}.\ket{011}}
                     \\
                     + & \pair {\s{cnot}_{12}^3\ \frac{\beta}{\sqrt
                         2}.\ket{100}} {\s{cnot}_{12}^3\ \frac{\beta}{\sqrt
                         2}.\ket{111}} )))
                   \end{aligned}
  \\
  \red{\rlinscalr^4} &
                       \begin{aligned}[t]
                         \s{Bob}\ \Uparrow_\ell(\pi_2(\Uparrow_r \s H_1^3 ( & \pair
                         {\frac{\alpha}{\sqrt 2}.\s{cnot}_{12}^3\ket{000}}
                         {\frac{\alpha}{\sqrt 2}.\s{cnot}_{12}^3\ket{011}}
                         \\
                         + & \pair {\frac{\beta}{\sqrt
                             2}.\s{cnot}_{12}^3\ket{100}} {\frac{\beta}{\sqrt
                             2}.\s{cnot}_{12}^3\ket{111}} )))
                       \end{aligned}
  \\
  \red{\rbetab^4} &
                    \begin{aligned}[t]
                      \s{Bob}\ \Uparrow_\ell(\pi_2(\Uparrow_r \s H_1^3 ( ( & {
                        \frac{\alpha}{\sqrt 2}. ((\s{cnot}\ (head\
                        \ket{000}\times(head\ tail\ \ket{000})))\times(tail\
                        tail\ \ket{000})) }
                      \\
                      + & { \frac{\alpha}{\sqrt 2}. ((\s{cnot}\ (head\
                        \ket{011}\times(head\ tail\ \ket{011})))\times(tail\
                        tail\ \ket{011})) } )
                      \\
                      + ( & { \frac{\beta}{\sqrt 2}. ((\s{cnot}\ (head\
                        \ket{100}\times(head\ tail\ \ket{100})))\times(tail\
                        tail\ \ket{100})) }
                      \\
                      + & { \frac{\beta}{\sqrt 2}. ((\s{cnot}\ (head\
                        \ket{111}\times(head\ tail\ \ket{111})))\times(tail\
                        tail\ \ket{111})) } ) )))
                    \end{aligned}
  \\
  \red{\rtail^{12}} &
                      \begin{aligned}[t]
                        \s{Bob}\ \Uparrow_\ell(\pi_2(\Uparrow_r \s H_1^3 ( ( & {
                          \frac{\alpha}{\sqrt 2}. ((\s{cnot}\ (head\
                          \ket{000}\times(head\ \ket{00})))\times(\ket{0})) }
                        \\
                        + & { \frac{\alpha}{\sqrt 2}. ((\s{cnot}\ (head\
                          \ket{011}\times(head\ \ket{11})))\times(\ket{1})) }
                        )
                        \\
                        + ( & { \frac{\beta}{\sqrt 2}. ((\s{cnot}\ (head\
                          \ket{100}\times(head\ \ket{00})))\times(\ket{0})) }
                        \\
                        + & { \frac{\beta}{\sqrt 2}. ((\s{cnot}\ (head\
                          \ket{111}\times(head\ \ket{11})))\times(\ket{1})) }
                        ) )))
                      \end{aligned}
  \\
  \red{\rhead^8} &
                   \begin{aligned}[t]
                     \s{Bob}\ \Uparrow_\ell(\pi_2(\Uparrow_r \s H_1^3 ( ( & {
                       \frac{\alpha}{\sqrt 2}. ((\s{cnot}
                       \ket{00})\times(\ket{0})) }
                     \\
                     + & { \frac{\alpha}{\sqrt 2}.
                       ((\s{cnot}\ket{01})\times(\ket{1})) } )
                     \\
                     + ( & { \frac{\beta}{\sqrt 2}.
                       ((\s{cnot}\ket{10})\times(\ket{0})) }
                     \\
                     + & { \frac{\beta}{\sqrt 2}.
                       ((\s{cnot}\ket{11})\times(\ket{1})) } ) )))
                   \end{aligned}
  \\
  \red{\rbetab^4} &
                    \begin{aligned}[t]
                      \s{Bob}\ \Uparrow_\ell(\pi_2(\Uparrow_r \s H_1^3( ( ( &
                      {\frac{\alpha}{\sqrt 2}.( ( (head \ket{00}) \times ( \ite
                        {(head \ket{00})} {(\s{not}(tail \ket{00}))} {(tail
                          \ket{00})} ) )
                        \times\ket 0)}\\
                      +& {\frac{\alpha}{\sqrt 2}.( ( (head \ket{01}) \times (
                        \ite {(head \ket{01})} {(\s{not}(tail \ket{01}))} {(tail
                          \ket{01})} ) ) \times\ket 1)}
                      )\\
                      + (& {\frac{\beta}{\sqrt 2}.( ( (head \ket{10}) \times (
                        \ite {(head \ket{10})} {(\s{not}(tail \ket{10}))} {(tail
                          \ket{10})} ) )
                        \times\ket 0)}\\
                      +& {\frac{\beta}{\sqrt 2}.( ( (head \ket{11}) \times (
                        \ite {(head \ket{11})} {(\s{not}(tail \ket{11}))} {(tail
                          \ket{11})} ) ) \times\ket 1)} ) ) )))
                    \end{aligned}
  \\
  \red{\rhead^8} &
                   \begin{aligned}[t]
                     \s{Bob}\ \Uparrow_\ell(\pi_2(\Uparrow_r \s H_1^3( ( ( &
                     {\frac{\alpha}{\sqrt 2}.( ( \ket 0 \times ( \ite {\ket 0}
                       {(\s{not}(tail \ket{00}))} {(tail \ket{00})} ) )
                       \times\ket 0)}\\
                     +&
                     {\frac{\alpha}{\sqrt 2}.( ( \ket 0 \times ( \ite {\ket 0} {(\s{not}(tail \ket{01}))} {(tail \ket{01})})) \times\ket 1)})\\
                     +
                     (& {\frac{\beta}{\sqrt 2}.( ( \ket 1 \times ( \ite {\ket 1} {(\s{not}(tail \ket{10}))} {(tail \ket{10})})) \times\ket 0)}\\
                     +& {\frac{\beta}{\sqrt 2}.(( \ket 1 \times ( \ite {\ket 1}
                       {(\s{not}(tail \ket{11}))} {(tail \ket{11})}))
                       \times\ket 1)})))))
                   \end{aligned}
  \\
  \red{\riftrue^2} &
                     \begin{aligned}[t]
                       \s{Bob}\ \Uparrow_\ell(\pi_2(\Uparrow_r \s H_1^3( ( (
                       &    {\frac{\alpha}{\sqrt 2}.( ( \ket 0 \times ( \ite {\ket 0} {(\s{not}(tail \ket{00}))} {(tail \ket{00})})) \times\ket 0)}\\
                       +& {\frac{\alpha}{\sqrt 2}.( ( \ket 0 \times ( \ite {\ket 0} {(\s{not}(tail \ket{01}))} {(tail \ket{01})})) \times\ket 1)})\\
                       + (& {\frac{\beta}{\sqrt 2}.((\ket 1 \times (\s{not}(tail \ket{10}))) \times\ket 0)}\\
                       +& {\frac{\beta}{\sqrt 2}.((\ket 1 \times (\s{not}(tail
                         \ket{11}))) \times\ket 1)})))))
                     \end{aligned}
  \\
  \red{\riffalse^2} &
                      \begin{aligned}[t]
                        \s{Bob}\ \Uparrow_\ell(\pi_2(\Uparrow_r \s H_1^3( ((
                        &    {\frac{\alpha}{\sqrt 2}.(( \ket 0 \times (tail \ket{00})) \times\ket 0)}\\
                        +& {\frac{\alpha}{\sqrt 2}.( ( \ket 0 \times {(tail \ket{01})}) \times\ket 1)})\\
                        + (& {\frac{\beta}{\sqrt 2}.((\ket 1 \times (\s{not}(tail \ket{10}))) \times\ket 0)}\\
                        +& {\frac{\beta}{\sqrt 2}.((\ket 1 \times (\s{not}(tail
                          \ket{11}))) \times\ket 1)})))))
                      \end{aligned}
  \\
  \red{\rtail^4} &
                   \begin{aligned}[t]
                     \s{Bob}\ \Uparrow_\ell(\pi_2(\Uparrow_r \s H_1^3( ((
                     &    {\frac{\alpha}{\sqrt 2}.(( \ket 0 \times \ket 0) \times\ket 0)}\\
                     +& {\frac{\alpha}{\sqrt 2}.( ( \ket 0 \times {\ket 1}) \times\ket 1)})\\
                     + (& {\frac{\beta}{\sqrt 2}.((\ket 1 \times (\s{not}\ket 0)) \times\ket 0)}\\
                     +& {\frac{\beta}{\sqrt 2}.((\ket 1 \times (\s{not}\ket 1))
                       \times\ket 1)})))))
                   \end{aligned}
  \\
  \red{\rbetab^2} &
                    \begin{aligned}[t]
                      \s{Bob}\ \Uparrow_\ell(\pi_2(\Uparrow_r \s H_1^3( ((
                      &    \frac{\alpha}{\sqrt 2}.( \ket 0 \times \ket 0 \times \ket 0)\\
                      +& \frac{\alpha}{\sqrt 2}.( \ket 0 \times \ket 1 \times \ket 1))\\
                      + (& \frac{\beta}{\sqrt 2}.( \ket 1 \times (\ite{\ket 0}{\ket 0}{\ket 1}) \times \ket 0)\\
                      +& \frac{\beta}{\sqrt 2}.( \ket 1 \times (\ite{\ket
                        1}{\ket 0}{\ket 1}) \times \ket 1))))))
                    \end{aligned}
  \\
  \red{\riffalse} &
                    \begin{aligned}[t]
                      \s{Bob}\ \Uparrow_\ell(\pi_2(\Uparrow_r \s H_1^3( ((
                      &    \frac{\alpha}{\sqrt 2}.( \ket 0 \times \ket 0 \times \ket 0)\\
                      +& \frac{\alpha}{\sqrt 2}.( \ket 0 \times \ket 1 \times \ket 1))\\
                      + (& \frac{\beta}{\sqrt 2}.( \ket 1 \times \ket 1 \times \ket 1)\\
                      +& \frac{\beta}{\sqrt 2}.( \ket 1 \times (\ite{\ket
                        1}{\ket 0}{\ket 1}) \times \ket 1))))))
                    \end{aligned}
  \\
  \red{\riftrue} &
                   \s{Bob}\ \Uparrow_\ell(\pi_2(\Uparrow_r \s H_1^3(
                   \pair
                   {\pair{\frac{\alpha}{\sqrt 2}.\ket{000}}{\frac{\alpha}{\sqrt 2}.\ket{011}}}
                   {\pair{\frac{\beta}{\sqrt 2}.\ket{110}}{\frac{\beta}{\sqrt 2}.\ket{101}}}
                   )))
  \\
  \red{\rlinr^3} &
                   \s{Bob}\ \Uparrow_\ell(\pi_2(\Uparrow_r
                   \pair
                   {\pair{\s H_1^3(\frac{\alpha}{\sqrt 2}.\ket{000})}{\s H_1^3(\frac{\alpha}{\sqrt 2}.\ket{011})}}
                   {\pair{\s H_1^3(\frac{\beta}{\sqrt 2}.\ket{110})}{\s H_1^3(\frac{\beta}{\sqrt 2}.\ket{101})}}
                   ))
  \\
  \red{\rlinscalr^4} &
                       \s{Bob}\ \Uparrow_\ell(\pi_2(\Uparrow_r
                       \pair
                       {\pair{\frac{\alpha}{\sqrt 2}.\s H_1^3\ket{000}}{\frac{\alpha}{\sqrt 2}.\s H_1^3\ket{011}}}
                       {\pair{\frac{\beta}{\sqrt 2}.\s H_1^3\ket{110}}{\frac{\beta}{\sqrt 2}.\s H_1^3\ket{101}}}
                       ))
  \\
  \red{\rbetab^4} &
                    \begin{aligned}[t]
                      \s{Bob}\ \Uparrow_\ell(\pi_2(\Uparrow_r  ( &\pair
                      {\frac{\alpha}{\sqrt 2}.((\s H(head
                        \ket{000}))\times(tail \ket{000}))}
                      {\frac{\alpha}{\sqrt 2}.((\s H(head
                        \ket{011}))\times(tail \ket{011}))}
                      \\
                      + &\pair {\frac{\beta}{\sqrt 2}.((\s H(head
                        \ket{110}))\times(tail \ket{110}))} {\frac{\beta}{\sqrt
                          2}.((\s H(head \ket{101}))\times(tail \ket{101}))} )
                      ))
                    \end{aligned}
  \\
  \red{\rhead^4} &
                   \begin{aligned}[t]
                     \s{Bob}\ \Uparrow_\ell(\pi_2(\Uparrow_r  ( &\pair
                     {\frac{\alpha}{\sqrt 2}.((\s H\ket 0)\times(tail
                       \ket{000}))} {\frac{\alpha}{\sqrt 2}.((\s H\ket
                       0)\times(tail \ket{011}))}
                     \\
                     + &\pair {\frac{\beta}{\sqrt 2}.((\s H\ket 1)\times(tail
                       \ket{110}))} {\frac{\beta}{\sqrt 2}.((\s H\ket
                       1)\times(tail \ket{101}))} ) ))
                   \end{aligned}
  \\
  \red{\rtail^4} &
                   \begin{aligned}[t]
                     \s{Bob}\ \Uparrow_\ell(\pi_2(\Uparrow_r  ( &\pair
                     {\frac{\alpha}{\sqrt 2}.((\s H\ket 0)\times\ket{00})}
                     {\frac{\alpha}{\sqrt 2}.((\s H\ket 0)\times\ket{11})}
                     \\
                     + &\pair {\frac{\beta}{\sqrt 2}.((\s H\ket
                       1)\times\ket{10})} {\frac{\beta}{\sqrt 2}.((\s H\ket
                       1)\times\ket{01})} ) ) )
                   \end{aligned}
  \\
  \red{\rbetab^4} &
                    \begin{aligned}[t]
                      \s{Bob}\ \Uparrow_\ell(\pi_2(\Uparrow_r  ( & {
                        \frac{\alpha}{\sqrt 2}.(\frac 1{\sqrt 2}.( \ket
                        0+(\ite{\ket 0}{(-\ket 1)}{\ket 1}) )\times\ket{00}) }
                      \\
                      + & { \frac{\alpha}{\sqrt 2}.(\frac 1{\sqrt 2}.( \ket
                        0+(\ite{\ket 0}{(-\ket 1)}{\ket 1}) )\times\ket{11}) }
                      \\
                      + ( & { \frac{\beta}{\sqrt 2}.(\frac 1{\sqrt 2}.( \ket
                        0+(\ite{\ket 1}{(-\ket 1)}{\ket 1}) )\times\ket{10}) }
                      \\
                      + & { \frac{\beta}{\sqrt 2}.(\frac 1{\sqrt 2}.( \ket
                        0+(\ite{\ket 1}{(-\ket 1)}{\ket 1}) )\times\ket{01}) }
                      ) ) ) )
                    \end{aligned}
  \\
  \red{\riffalse^2} &
                      \begin{aligned}[t]
                        \s{Bob}\ \Uparrow_\ell(\pi_2(\Uparrow_r  ( (
                        & {\frac{\alpha}{\sqrt 2}.(\frac 1{\sqrt 2}.( \ket 0+\ket 1)\times\ket{00}) }\\
                        +& {\frac{\alpha}{\sqrt 2}.(\frac 1{\sqrt 2}.( \ket 0+\ket 1)\times\ket{11}) }) \\
                        + ( & {\frac{\beta}{\sqrt 2}.(\frac 1{\sqrt 2}.( \ket 0+(\ite{\ket 1}{(-\ket 1)}{\ket 1}))\times\ket{10}) }\\
                        +& {\frac{\beta}{\sqrt 2}.(\frac 1{\sqrt 2}.( \ket
                          0+(\ite{\ket 1}{(-\ket 1)}{\ket 1}))\times\ket{01})
                        }))))
                      \end{aligned}
  \\
  \red{\riftrue^2} &
                     \begin{aligned}[t]
                       \s{Bob}\ \Uparrow_\ell(\pi_2(\Uparrow_r  ( &\pair
                       {\frac{\alpha}{\sqrt 2}.(\frac 1{\sqrt 2}.\pair{\ket
                           0}{\ket 1}\times\ket{00})} {\frac{\alpha}{\sqrt
                           2}.(\frac 1{\sqrt 2}.\pair{\ket 0}{\ket
                           1}\times\ket{11})}
                       \\
                       + &\pair {\frac{\beta}{\sqrt 2}.(\frac 1{\sqrt
                           2}.\npair{\ket 0}{\ket 1}\times\ket{10})}
                       {\frac{\beta}{\sqrt 2}.(\frac 1{\sqrt 2}.\npair{\ket
                           0}{\ket 1}\times\ket{01})} ) ) )
                     \end{aligned}
  \\
  \red{\rdistcasum^3} &
                        \begin{aligned}[t]
                          \s{Bob}\ \Uparrow_\ell(\pi_2 ( &\pair {\Uparrow_r \frac{\alpha}{\sqrt 2}.(\frac 1{\sqrt 2}.\pair{\ket 0}{\ket 1}\times\ket{00})} {\Uparrow_r \frac{\alpha}{\sqrt 2}.(\frac 1{\sqrt
                              2}.\pair{\ket 0}{\ket 1}\times\ket{11})}
                          \\
                          + &\pair {\Uparrow_r \frac{\beta}{\sqrt 2}.(\frac
                            1{\sqrt 2}.\npair{\ket 0}{\ket 1}\times\ket{10})}
                          {\Uparrow_r \frac{\beta}{\sqrt 2}.(\frac 1{\sqrt
                              2}.\npair{\ket 0}{\ket 1}\times\ket{01})} ) )
                        \end{aligned}
  \\
  \red{\rdistcascal^4} &
                         \begin{aligned}[t]
                           \s{Bob}\ \Uparrow_\ell(\pi_2 ( &\pair
                           {\frac{\alpha}{\sqrt 2}.\Uparrow_r (\frac 1{\sqrt
                               2}.\pair{\ket 0}{\ket 1}\times\ket{00})}
                           {\frac{\alpha}{\sqrt 2}.\Uparrow_r (\frac 1{\sqrt
                               2}.\pair{\ket 0}{\ket 1}\times\ket{11})}
                           \\
                           + &\pair {\frac{\beta}{\sqrt 2}.\Uparrow_r (\frac
                             1{\sqrt 2}.\npair{\ket 0}{\ket 1}\times\ket{10})}
                           {\frac{\beta}{\sqrt 2}.\Uparrow_r (\frac 1{\sqrt
                               2}.\npair{\ket 0}{\ket 1}\times\ket{01})} ) )
                         \end{aligned}
  \\
  \red{\rdistscalr^4} &
                        \begin{aligned}[t]
                          \s{Bob}\ \Uparrow_\ell(\pi_2 ( &\pair {\frac{\alpha}{\sqrt
                              2}.\frac 1{\sqrt 2}.\Uparrow_r \pair{\ket 0}{\ket
                              1}\times\ket{00}} {\frac{\alpha}{\sqrt 2}.\frac
                            1{\sqrt 2}.\Uparrow_r \pair{\ket 0}{\ket
                              1}\times\ket{11}}
                          \\
                          + &\pair {\frac{\beta}{\sqrt 2}.\frac 1{\sqrt 2}.\Uparrow_r \npair{\ket 0}{\ket 1}\times\ket{10}}
                          {\frac{\beta}{\sqrt 2}.\frac 1{\sqrt 2}.\Uparrow_r \npair{\ket 0}{\ket 1}\times\ket{01}} ) )
                        \end{aligned}
  \\
  \red{\rprod^4} &
                   \begin{aligned}[t]
                     \s{Bob}\ \Uparrow_\ell(\pi_2 ( &\pair {\frac{\alpha}2.\Uparrow_r \pair{\ket 0}{\ket 1}\times\ket{00}}
                     {\frac{\alpha}2.\Uparrow_r \pair{\ket 0}{\ket
                         1}\times\ket{11}}
                     \\
                     + &\pair {\frac{\beta}2.\Uparrow_r \npair{\ket 0}{\ket
                         1}\times\ket{10}} {\frac{\beta}2.\Uparrow_r \npair{\ket
                         0}{\ket 1}\times\ket{01}} ) )
                   \end{aligned}
  \\
  \red{\rdistsumr^4} &
                       \begin{aligned}[t]
                         \s{Bob}\ \Uparrow_\ell(\pi_2 ( &\pair
                         {\frac{\alpha}2.\pair{\Uparrow_r \ket{000}}{\Uparrow_r \ket{100}}} {\frac{\alpha}2.\pair{\Uparrow_r \ket{011}}{\Uparrow_r \ket{111}}}
                         \\
                         + &\pair {\frac{\beta}2.\pair{\Uparrow_r \ket{010}}{\Uparrow_r (-\ket{110})}} {\frac{\beta}2.\pair{\Uparrow_r \ket{001}}{\Uparrow_r (-\ket{101})}} ) )
                       \end{aligned}
  \\
  \red{\rdistscalr^2} &
                        \begin{aligned}[t]
                          \s{Bob}\ \Uparrow_\ell(\pi_2 ( &\pair
                          {\frac{\alpha}2.\pair{\Uparrow_r \ket{000}}{\Uparrow_r \ket{100}}} {\frac{\alpha}2.\pair{\Uparrow_r \ket{011}}{\Uparrow_r \ket{111}}}
                          \\
                          + &\pair {\frac{\beta}2.\npair{\Uparrow_r \ket{010}}{\Uparrow_r \ket{110}}}
                          {\frac{\beta}2.\npair{\Uparrow_r \ket{001}}{\Uparrow_r \ket{101}}} ) )
                        \end{aligned}
  \\
  \red{\rcaneutr^8} &
                      \begin{aligned}[t]
                        \s{Bob}\ \Uparrow_\ell(\pi_2 ( &\pair
                        {\frac{\alpha}2.\pair{\ket{000}}{\ket{100}}}
                        {\frac{\alpha}2.\pair{\ket{011}}{\ket{111}}}
                        \\
                        + &\pair {\frac{\beta}2.\npair{\ket{010}}{\ket{110}}}
                        {\frac{\beta}2.\npair{\ket{001}}{\ket{101}}} ) )
                      \end{aligned}
  \\
  \red{\rdists^4} &
                    \begin{aligned}[t]
                      \s{Bob}\ \Uparrow_\ell(\pi_2 ( &\pair
                      {\pair{\frac{\alpha}2.\ket{000}}{\frac{\alpha}2.\ket{100}}}
                      {\pair{\frac{\alpha}2.\ket{011}}{\frac{\alpha}2.\ket{111}}}
                      \\
                      + &\pair
                      {\pair{\frac{\beta}2.\ket{010}}{\frac{\beta}2.(-\ket{110})}}
                      {\pair{\frac{\beta}2.\ket{001}}{\frac{\beta}2.(-\ket{101})}}
                      ) )
                    \end{aligned}
  \\
  \red{\rprod^2} &
                   \begin{aligned}[t]
                     \s{Bob}\ \Uparrow_\ell(\pi_2 ( &\pair
                     {\pair{\frac{\alpha}2.\ket{000}}{\frac{\alpha}2.\ket{100}}}
                     {\pair{\frac{\alpha}2.\ket{011}}{\frac{\alpha}2.\ket{111}}}
                     \\
                     + &\pair
                     {\npair{\frac{\beta}2.\ket{010}}{\frac{\beta}2.\ket{110}}}
                     {\npair{\frac{\beta}2.\ket{001}}{\frac{\beta}2.\ket{101}}}
                     ) )
                   \end{aligned}
  \\
  =_{AC} &
           \begin{aligned}[t]
             \s{Bob}\ \Uparrow_\ell(\pi_2 (
             &\pair {\pair{\frac{\alpha}2.\ket{000}}{\frac{\beta}2.\ket{001}}} {\pair{\frac{\alpha}2.\ket{011}}{\frac{\beta}2.\ket{010}}} \\
             + &\pair
             {\npair{\frac{\alpha}2.\ket{100}}{\frac{\beta}2.\ket{101}}}
             {\npair{\frac{\alpha}2.\ket{111}}{\frac{\beta}2.\ket{110}}}))
           \end{aligned}
\end{align*}

The next rewrite step following rule $\rproj$, may produce one of the following
four results probability $\frac 14$ each:
\begin{description}
\item[(00)] $\s{Bob}\ \Uparrow_\ell\ket{00}\times\pair{\alpha.\ket 0}{\beta.\ket
    1}$
\item[(01)] $\s{Bob}\ \Uparrow_\ell\ket{01}\times\pair{\alpha.\ket 1}{\beta.\ket
    0}$
\item[(10)] $\s{Bob}\ \Uparrow_\ell\ket{01}\times\npair{\alpha.\ket 0}{\beta.\ket
    1}$
\item[(11)] $\s{Bob}\ \Uparrow_\ell\ket{11}\times\npair{\alpha.\ket 1}{\beta.\ket
    0}$
\end{description}

So, in general, $\s{Bob}\ \Uparrow_\ell\ket{xy}\times\pair{\alpha.\ket
  z}{[-]\beta.\ket w}$. Then,
\begin{align*}
  \omit\rlap{$\s{Bob}\ \Uparrow_\ell\ket{xy}\times\pair{\alpha.\ket z}{[-]\beta.\ket w}$}\\
  \red{\rdistsuml} &
                     \s{Bob}\pair{\Uparrow_\ell\ket{xy}\times\alpha.\ket z}{\Uparrow_\ell\ket{xy}\times[-]\beta.\ket w}
  \\
  \red{\rdistscall^2} &
                        \s{Bob}\ \pair{\alpha.\Uparrow_\ell\ket{xyz}}{[-]\beta.\Uparrow_\ell\ket{xyw}}
  \\
  \red{\rcaneutl^2} &
                      \s{Bob}\ \pair{\alpha.\ket{xyz}}{[-]\beta.\ket{xyw}}
  \\
  \red{\rlinr} &
                 \pair{\s{Bob}\ \alpha.\ket{xyz}}{\s{Bob}\ [-]\beta.\ket{xyw}}
  \\
  \red{\rlinscalr} &
                     \pair{\alpha.\s{Bob}\ \ket{xyz}}{[-]\beta.\s{Bob}\ \ket{xyw}}
  \\
  \red{\rbetab^2} &
                    ({\alpha.\s Z^{head\ket{xyz}}(\s{not}^{head\ tail\ket{xyz}}(tail\ tail\ \ket{xyz}))}\\
                   &+[-]\beta.\s Z^{head\ket{xyw}}(\s{not}^{head\ tail\ket{xyw}}(tail\ tail\ \ket{xyw})))
  \\
  \red{\rtail^6} &
                   \pair
                   {\alpha.\s Z^{head\ket{xyz}}(\s{not}^{head\ket{yz}}(\ket z))}
                   {[-]\beta.\s Z^{head\ket{xyw}}(\s{not}^{head\ket{yw}}(\ket w))}
  \\
  \red{\rhead^4} &
                   \pair
                   {\alpha.\s Z^{\ket x}(\s{not}^{\ket y}\ket z)}
                   {[-]\beta.\s Z^{\ket x}(\s{not}^{\ket y}\ket w)}
  \\
  \red{\rbetab^4} &
                    \pair
                    {\alpha.\s Z^{\ket x}(\ite{\ket y}{\s{not}\ket z}{\ket z})}
                    {[-]\beta.\s Z^{\ket x}(\ite{\ket y}{\s{not}\ket w}{\ket w})}
\end{align*}
Cases:
\begin{description}
\item[$(00)$] $\begin{aligned}[t] \omit\rlap{\parbox{\textwidth}{$ \pair
        {\alpha.\s Z^{\ket 0}(\ite{\ket 0}{\s{not}\ket 0}{\ket 0})} {\beta.\s
          Z^{\ket 0}(\ite{\ket 0}{\s{not}\ket 1}{\ket 1})}
        $}}\\
    \red{\riffalse^2} & \pair {\alpha.\s Z^{\ket 0}\ket 0} {\beta.\s Z^{\ket
        0}\ket 1}
    \\
    \red{\rbetab^4} & \pair {\alpha.\ite{\ket 0}{(\s Z\ket 0)}{\ket 0}}
    {\beta.\ite{\ket 0}{(\s Z\ket 1)}{\ket 1}}
    \\
    \red{\riffalse^2} & \pair {\alpha.\ket 0} {\beta.\ket 1}
    \\[5mm]
  \end{aligned}$

\item[$(01)$] $\begin{aligned}[t] \omit\rlap{\parbox{\textwidth}{$ \pair
        {\alpha.\s Z^{\ket 0}(\ite{\ket 1}{\s{not}\ket 1}{\ket 1})} {\beta.\s
          Z^{\ket 0}(\ite{\ket 1}{\s{not}\ket 0}{\ket 0})}
        $}}\\
    \red{\riftrue^2} & \pair {\alpha.\s Z^{\ket 0}(\s{not} \ket 1)} {\beta.\s
      Z^{\ket 0}(\s{not} \ket 0)}
    \\
    \red{\rbetab^2} & \pair {\alpha.\s Z^{\ket 0}(\ite{\ket 1}{\ket 0}{\ket 1})}
    {\beta.\s Z^{\ket 0}(\ite{\ket 0}{\ket 0}{\ket 1})}
    \\
    \red{\riftrue} & \pair {\alpha.\s Z^{\ket 0}\ket 0} {\beta.\s Z^{\ket
        0}(\ite{\ket 0}{\ket 0}{\ket 1})}
    \\
    \red{\riffalse} & \pair {\alpha.\s Z^{\ket 0}\ket 0} {\beta.\s Z^{\ket
        0}\ket 1}
    \\
    \red{\rbetab^4} & \pair {\alpha.\ite{\ket 0}{(\s Z\ket 0)}{\ket 0}}
    {\beta.\ite{\ket 0}{(\s Z\ket 1)}{\ket 1}}
    \\
    \red{\riffalse^2} & \pair {\alpha.\ket 0} {\beta.\ket 1}
  \end{aligned}$

\item[$(10)$] $\begin{aligned}[t] \omit\rlap{\parbox{\textwidth}{$ \npair
        {\alpha.\s Z^{\ket 1}(\ite{\ket 0}{\s{not}\ket 0}{\ket 0})} {\beta.\s
          Z^{\ket 1}(\ite{\ket 0}{\s{not}\ket 1}{\ket 1})}
        $}}\\
    \red{\riffalse^2} & \npair {\alpha.\s Z^{\ket 1}\ket 0} {\beta.\s Z^{\ket
        1}\ket 1}
    \\
    \red{\rbetab^4} & \npair {\alpha.\ite{\ket 1}{(\s Z\ket 0)}{\ket 0}}
    {\beta.\ite{\ket 1}{(\s Z\ket 1)}{\ket 1}}
    \\
    \red{\riftrue^2} & \npair {\alpha.\s Z\ket 0} {\beta.\s Z\ket 1}
    \\
    \red{\rbetab^2} & \npair {\alpha.\ite{\ket 0}{(-\ket 1)}{\ket 0}}
    {\beta.\ite{\ket 1}{(-\ket 1)}{\ket 0}}
    \\
    \red{\riffalse} & \npair {\alpha.\ket 0} {\beta.\ite{\ket 1}{(-\ket 1)}{\ket
        0}}
    \\
    \red{\riftrue} & \npair {\alpha.\ket 0} {\beta.(-\ket 1)}
    \\
    \red{\rprod} & \pair {\alpha.\ket 0} {\beta.\ket 1}
  \end{aligned}$

\item[$(11)$] $\begin{aligned}[t] \omit\rlap{\parbox{\textwidth}{$ \npair
        {\alpha.\s Z^{\ket 1}(\ite{\ket 1}{\s{not}\ket 1}{\ket 1})} {\beta.\s
          Z^{\ket 1}(\ite{\ket 1}{\s{not}\ket 0}{\ket 0})}
        $}}\\
    \red{\riftrue^2} & \npair {\alpha.\s Z^{\ket 1}(\s{not} \ket 1)} {\beta.\s
      Z^{\ket 1}(\s{not} \ket 0)}
    \\
    \red{\rbetab^2} & \npair {\alpha.\s Z^{\ket 1}(\ite{\ket 1}{\ket 0}{\ket
        1})} {\beta.\s Z^{\ket 1}(\ite{\ket 0}{\ket 0}{\ket 1})}
    \\
    \red{\riftrue} & \npair {\alpha.\s Z^{\ket 1}\ket 0} {\beta.\s Z^{\ket
        1}(\ite{\ket 0}{\ket 0}{\ket 1})}
    \\
    \red{\riffalse} & \npair {\alpha.\s Z^{\ket 1}\ket 0} {\beta.\s Z^{\ket
        1}\ket 1}
    \\
    \red{\rbetab^4} & \npair {\alpha.\ite{\ket 1}{(\s Z\ket 0)}{\ket 0}}
    {\beta.\ite{\ket 1}{(\s Z\ket 1)}{\ket 1}}
    \\
    \red{\riftrue^2} & \pair {\alpha.\s Z\ket 0} {\beta.\s Z\ket 1}
    \\
    \red{\rbetab^2} & \npair {\alpha.\ite{\ket 0}{(-\ket 1)}{\ket 0}}
    {\beta.\ite{\ket 1}{(-\ket 1)}{\ket 0}}
    \\
    \red{\riffalse} & \npair {\alpha.\ket 0} {\beta.\ite{\ket 1}{(-\ket 1)}{\ket
        0}}
    \\
    \red{\riftrue} & \npair {\alpha.\ket 0} {\beta.(-\ket 1)}
    \\
    \red{\rprod} & \pair {\alpha.\ket 0} {\beta.\ket 1}
  \end{aligned}$
\end{description}
Hence, in every case, $\s{Teleportation}\ (\alpha.\ket 0+\beta.\ket
1)\lra^*(\alpha.\ket 0+\beta.\ket 1)$ as expected. \medskip

The typing of $\s{Teleportation}$ is given below:

\begin{equation}
  \label{eq:Z}
  \infer[\Rightarrow_E]
  {x:\B\vdash \s Zx:S(\B)}
  {
    \infer[\Rightarrow_I]
    {\vdash \s Z:\B\Rightarrow S(\B)}
    {
      \infer[\Rightarrow_E]
      {y:\B\vdash\ite y{(-\ket 1)}{\ket 0}: S(\B)}
      {
        \infer[\tif]{\vdash\ite{}{(-\ket 1)}{\ket 0}:\B\Rightarrow S(\B)}
        {
          \infer[S_I^\alpha]{\vdash -\ket 1:S(\B)}
          {
            \infer[\tax_{\ket 1}]{\vdash\ket 1:\B}{}
          }
          &
          \infer[\preceq]{\vdash\ket 0:S(\B)}
          {
            \infer[\tax_{\ket 0}]{\vdash\ket 0:\B}{}
          }
        }
        &
        \infer[\tax]{y:\B\vdash y:\B}{}
      }
    }
    &
    \infer[\tax]{x:\B\vdash x:\B}{}
  }
\end{equation}

\begin{equation}
  \label{eq:Zpot}
  \infer[\Rightarrow_E]
  {x:\B^3\vdash \s Z^{(\head\ x)}:\B\Rightarrow S(\B)}
  {
    \infer[\Rightarrow_I]
    {\vdash\lambda y^\B.\lambda w^\B.\ite y{\s Zw}w:\B\Rightarrow\B\Rightarrow S(\B)}
    {
      \infer[\Rightarrow_I]
      {y:\B\vdash\lambda w^\B.\ite y{\s Zw}w:\B\Rightarrow S(\B)}
      {
        \infer[\Rightarrow_E]{y:\B,w:\B\vdash\ite y{\s Zw}w:S(\B)}
        {
          \infer[\tif]{w:\B\vdash\ite{}{\s Zw}w:\B\Rightarrow S(\B)}
          {
            \infer
            {w:\B\vdash \s Zw:S(\B)}
            {\eqref{eq:Z}}
            &
            \infer[\preceq]
            {w:\B\vdash w:S(\B)}
            {
              \infer[\tax]{w:\B\vdash w:\B}{}
            }
          }
          &
          \infer[\tax]{y:\B\vdash y:\B}{}
        }
      }
    }
    &
    \infer[\times_{E_r}]
    {x:\B^3\vdash\head\ x:\B}
    {
      \infer[\tax]
      {x:\B^3\vdash x:\B^3}
      {}
    }
  }
\end{equation}

\begin{equation}
  \label{eq:Notpot}
  \hspace{-1cm}\infer[\Rightarrow_E]
  {x:\B^3\vdash \s{not}^{(\head\ \tail\ x)}:\B\Rightarrow\B}
  {
    \infer[\Rightarrow_I]
    {\vdash\lambda y^\B.\lambda z^\B.\ite y{\s{not}\ z}z:\B\Rightarrow\B\Rightarrow\B}
    {
      \infer[\Rightarrow_I]
      {y:\B\vdash\lambda z^\B.\ite y{\s{not}\ z}z:\B\Rightarrow\B}
      {
        \infer[\Rightarrow_E]
        {y:\B,z:\B\vdash\ite y{\s{not}\ z}z:\B}
        {
          \infer[\tif]
          {z:\B\vdash\ite{}{\s{not}\ z}z:\B\Rightarrow\B}
          {
            \infer[\Rightarrow_E]
            {z:\B\vdash \s{not}\ z:\B}
            {
              \infer
              {\vdash \s{not}:\B\Rightarrow\B}
              {\eqref{eq:not}}
              &
              \infer[\tax]
              {z:\B\vdash z:\B}
              {}
            }
            &
            \infer[\tax]
            {z:\B\vdash z:\B}
            {}
          }
          &
          \infer[\tax]{y:\B\vdash y:\B}{}
        }
      }
    }
    &
    \infer[\times_{E_r}]
    {x:\B^3\vdash\head\ \tail\ x:\B}
    {
      \infer[\times_{E_l}]
      {x:\B^3\vdash\tail\ x:\B^2}
      {\infer[\tax]{x:\B^3\vdash x:\B^3}{}}
    }
  }
\end{equation}

\begin{equation}
  \label{eq:Bob}
  \infer[\Rightarrow_I]
  {\vdash \s{Bob}:\B^3\Rightarrow S(\B)}
  {
    \infer[C]
    {x:\B^3\vdash \s Z^{(\head\ x)}(\s{not}^{(\head\ \tail\ x)}(\tail\ \tail\ x)):S(\B)}
    {
      \infer[\Rightarrow_E]
      {x:\B^3,y:\B^3\vdash \s Z^{(\head\ y)}(\s{not}^{(\head\ \tail\ x)}(\tail\ \tail\ x)):S(\B)}
      {
        \infer
        {y:\B^3\vdash \s Z^{(\head\ y)}:\B\Rightarrow S(\B)}
        {\eqref{eq:Zpot}}
        &
        \hspace{-5mm}\infer[C]
        {x:\B^3\vdash \s{not}^{(\tail\ x)}(\tail\ \tail\ x):\B}
        {
          \infer[\Rightarrow_E]
          {x:\B^3,y:\B^3\vdash \s{not}^{(\tail\ y)}(\tail\ \tail\ x):\B}
          {
            \infer
            {y:\B^3\vdash \s{not}^{(\tail\ y)}:\B\Rightarrow\B}
            {\eqref{eq:Notpot}}
            &
            \infer[\times_{E_l}]
            {x:\B^3\vdash\tail\ \tail\ x:\B}
            {
              \infer[\times_{E_l}]
              {x:\B^3\vdash\tail\ x:\B\times\B}
              {
                \infer[\tax]
                {x:\B^3\vdash x:\B^3}
                {}
              }
            }
          }
        }
      }
    }
  }
\end{equation}

\begin{equation}
  \label{eq:HoneThree}
  \infer[\Rightarrow_I]
  {\vdash \s H_1^3:\B^3\Rightarrow S(\B)\times\B^2}
  {
    \infer[C]
    {x:\B^3\vdash (\s H(\head\ x))\times(\tail\ x):S(\B)\times\B^2}
    {
      \infer[\times_I]
      {x:\B^3,y:\B^3\vdash (\s H(\head\ x))\times(\tail\ y):S(\B)\times\B^2}
      {
        \infer[\Rightarrow_E]
        {x:\B^3\vdash \s H(\head\ x):S(\B)}
        {
          \infer
          {\vdash \s H:\B\Rightarrow S(\B)}
          {\eqref{eq:H}}
          &
          \infer[\times_{E_r}]
          {x:\B^3\vdash\head\ x:\B}
          {
            \infer[\tax]{x:\B^3\vdash x:\B^3}{}
          }
        }
        &
        \infer[\times_{E_l}]
        {y:\B^3\vdash\tail\ y:\B^2}
        {
          \infer[\tax]{y:\B^3\vdash y:\B^3}{}
        }
      }
    }
  }
\end{equation}

\begin{equation}
  \label{eq:cnot}
  \hspace{-3cm}
  \infer[\Rightarrow_I]
  {\vdash \s{cnot}:\B^2\Rightarrow\B^2}
  {
    \infer[C]
    {x:\B^2\vdash (\head\ x)\times(\ite{(\head\ x)}{\s{not}(\tail\ x)}{(\tail\ x)}):\B^2}
    {
      \infer[\times_I]
      {x:\B^2,y:\B^2\vdash (\head\ y)\times(\ite{(\head\ x)}{\s{not}(\tail\ x)}{(\tail\ x)}):\B^2}
      {
        \infer[\times_{E_r}]
        {y:\B^2\vdash\head\ y:\B}
        {
          \infer[\tax]
          {y:\B^2\vdash y:\B^2}
          {}
        }
        &
        \infer[C]
        {x:\B^2\vdash\ite{(\head\ x)}{\s{not}(\tail\ x)}{(\tail\ x)}:\B}
        {
          \infer[\Rightarrow_E]
          {x:\B^2,y:\B^2\vdash\ite{(\head\ y)}{\s{not}(\tail\ x)}{(\tail\ x)}:\B}
          {
            \infer[\tif]{x:\B^2\vdash\ite{}{\s{not}(\tail\ x)}{(\tail\ x)}:\B\Rightarrow\B}
            {
              \infer[\Rightarrow_E]
              {x:\B^2\vdash \s{not}(\tail\ x):\B}
              {
                \infer
                {\vdash \s{not}:\B\Rightarrow\B}
                {\eqref{eq:not}}
                &
                \infer[\times_{E_l}]
                {x:\B^2\vdash\tail\ x:\B}
                {
                  \infer[\tax]
                  {x:\B^2\vdash x:\B^2}
                  {}
                }
              }
              &
              \infer[\times_{E_l}]
              {x:\B^2\vdash\tail\ x:\B}
              {
                \infer[\tax]
                {x:\B^2\vdash x:\B^2}
                {}
              }
            }
            &
            \infer[\times_{E_r}]
            {y:\B^2\vdash\head\ y:\B}
            {
              \infer[\tax]
              {y:\B^2\vdash y:\B^2}
              {}
            }
          }
        }
      }
    }
  }
\end{equation}

\begin{equation}
  \label{eq:cnotOneTwo}
  {
    \hspace{-1.5cm}
    \infer[\Rightarrow_I]
    {\vdash \s{cnot}_{12}^3:\B^3\Rightarrow\B^3}
    {
      \infer[C]
      {x:\B^3\vdash (\s{cnot}((\head\ x)\times(\head\ \tail\ x)))\times(\tail\ \tail\ x):\B^3}
      {
        \infer[\times_I]
        {x:\B^3,y:\B^3\vdash (\s{cnot}((\head\ x)\times(\head\ \tail\ x)))\times(\tail\ \tail\ x):\B^3}
        {
          \infer[\Rightarrow_E]
          {x:\B^3\vdash \s{cnot}((\head\ x)\times(\head\ \tail\ x)):\B^2}
          {
            \infer
            {\vdash \s{cnot}:\B^2\Rightarrow\B^2}
            {\eqref{eq:cnot}}
            &
            \infer[C]
            {x:\B^3\vdash(\head\ x)\times(\head\ \tail\ x):\B^2}
            {
              \infer[\times_I]
              {x:\B^3,y:\B^3\vdash(\head\ x)\times(\head\ \tail\ y):\B^2}
              {
                \infer[\times_{E_r}]
                {x:\B^3\vdash\head\ x:\B}
                {\infer[\tax]{x:\B^3\vdash x:\B^3}{}}
                &
                \infer[\times_{E_r}]
                {y:\B^3\vdash\head\ \tail\ y:\B}
                {
                  \infer[\times_{E_l}]
                  {y:\B^3\vdash\tail\ y:\B^2}
                  {
                    \infer[\tax]
                    {y:\B^3\vdash y:\B^3}
                    {}
                  }
                }
              }
            }
          }
          &
          \infer[\times_{E_l}]
          {y:\B^3\vdash \tail\ \tail\ y:\B}
          {
            \infer[\times_{E_l}]
            {y:\B^3\vdash \tail\ y:\B^2}
            {
              \infer[\tax]
              {y:\B^3\vdash y:\B^3}
              {}
            }
          }
        }
      }
    }
  }
\end{equation}

\begin{equation}
  \label{cnotOneTwoX}
  \infer[\Rightarrow_{ES}]
  {x:S(\B)\times S(\B^2)\vdash \s{cnot}_{12}^3\Uparrow_\ell \Uparrow_r x:S(\B^3)}
  {
    \infer[\preceq]
    {\vdash \s{cnot}_{12}^3: S(\B^3\Rightarrow\B^3)}
    {
      \infer
      {\vdash \s{cnot}_{12}^3:\B^3\Rightarrow\B^3}
      {\eqref{eq:cnotOneTwo}}
    }
    &
    \infer[\Uparrow_\ell]
    {x:S(\B)\times S(\B^2)\vdash\Uparrow_\ell \Uparrow_r  x:S(\B^3)}
    {
      \infer[\Uparrow_r]
      {x:S(\B)\times S(\B^2)\vdash\Uparrow_r  x:S(\B\times S(\B^2))}
      {
        \infer[\preceq]
        {x:S(\B)\times S(\B^2)\vdash x:S(S(\B)\times S(\B^2))}
        {
          \infer[\tax]
          {x:S(\B)\times S(\B^2)\vdash x:S(\B)\times S(\B^2)}
          {}
        }
      }
    }
  }
\end{equation}

\begin{equation}
  \label{eq:Alice}
  \infer[\Rightarrow_I]
  {\vdash\s{Alice}:S(\B)\times S(\B^2)\Rightarrow\B^2\times S(\B)}
  {
    \infer[S_E]
    {x:S(\B)\times S(\B^2)\vdash\pi_2(\Uparrow_r \s H_1^3(\s{cnot}_{12}^3\ \Uparrow_\ell \Uparrow_r x)):\B^2\times S(\B)}
    {
      \infer[\Uparrow_r]
      {x:S(\B)\times S(\B^2)\vdash \Uparrow_r \s H_1^3(\s{cnot}_{12}^3\ \Uparrow_\ell \Uparrow_r x):S(\B^3)}
      {
        \infer[\Rightarrow_{ES}]
        {x:S(\B)\times S(\B^2)\vdash \s H_1^3(\s{cnot}_{12}^3\ \Uparrow_\ell \Uparrow_r x):S(S(\B)\times\B^2)}
        {
          \infer[\preceq]
          {\vdash \s H_1^3:S(\B^3\Rightarrow S(\B)\times\B^2)}
          {
            \infer
            {\vdash \s H_1^3:\B^3\Rightarrow S(\B)\times\B^2}
            {\eqref{eq:HoneThree}}
          }
          &
          \infer[]
          {x:S(\B)\times S(\B^2)\vdash \s{cnot}_{12}^3\ \Uparrow_\ell \Uparrow_r x:S(\B^3)}
          {\eqref{cnotOneTwoX}}
        }
      }
    }
  }
\end{equation}

\begin{equation}
  \label{eq:Bell}
  \infer[\preceq]
  {\vdash\beta_{00}:S(\B^2)}
  {
    \infer[S_I^+]
    {\vdash\beta_{00}:S(S(\B^2))}
    {
      \infer[S_I^\alpha]
      {\vdash\frac 1{\sqrt 2}.\ket{00}:S(\B^2)}
      {
        \infer[\times_I]
        {\vdash\ket{00}:\B^2}
        {
          \infer[\tax_{\ket 0}]{\vdash\ket 0:\B}{}
          &
          \infer[\tax_{\ket 0}]{\vdash\ket 0:\B}{}
        }
      }
      &
      \infer[S_I^\alpha]
      {\vdash\frac 1{\sqrt 2}.\ket{11}:S(\B^2)}
      {
        \infer[\times_I]
        {\vdash\ket{11}:\B^2}
        {
          \infer[\tax_{\ket 1}]{\vdash\ket 1:\B}{}
          &
          \infer[\tax_{\ket 1}]{\vdash\ket 1:\B}{}
        }
      }
    }
  }
\end{equation}

\[
  \hspace{-2cm}\infer[\Rightarrow_I]
  {\vdash Teleportation: S(\B)\Rightarrow S(\B)}
  {
    \infer[\preceq]
    {q:S(\B)\vdash \s {Bob} \ (\Uparrow_\ell \s {Alice} \ (q\times\beta_ {00})):S(\B)}
    {
      \infer[\Rightarrow_ {ES}]
      {q:S(\B)\vdash \s {Bob} \ (\Uparrow_\ell \s {Alice} \ (q\times\beta_ {00})):S(S(\B))}
      {
        \infer[\preceq]
        {\vdash \s {Bob} :S(\B^3\Rightarrow S(\B))}
        {
          \infer {\vdash \s {Bob} :\B^3\Rightarrow S(\B)}
          {\eqref {eq:Bob}}
        }
        &
        \infer[\Uparrow_\ell]
        {q:S(\B)\vdash \Uparrow_\ell  \s {Alice} \ (q\times\beta_ {00}): S(\B^3)}
        {
          \infer[\preceq]
          {q:S(\B)\vdash \s {Alice} \ (q\times\beta_ {00}): S(\B^2\times S(\B))}
          {
            \infer[\Rightarrow_E]
            {q:S(\B)\vdash \s {Alice} \ (q\times\beta_ {00}): \B^2\times S(\B)}
            {
              \infer {\vdash \s {Alice} :S(\B)\times S(\B^2)\Rightarrow\B^2\times S(\B)}
              {\eqref {eq:Alice}}
              &
              \infer[\times_I]
              {q:S(\B)\vdash q\times\beta_ {00} :S(\B)\times S(\B^2)}
              {
                \infer[\tax]
                {q:S(\B)\vdash q:S(\B)}
                {}
                &
                \infer {\vdash\beta_ {00} :S(\B^2)}
                {\eqref {eq:Bell}}
              }
            }
          }
        }
      }
    }
  }
\]

\end{document}